\documentclass{article} \usepackage[margin=1in]{geometry}
\usepackage[T1]{fontenc}

\usepackage{hyperref}
\usepackage{url}

\usepackage{amsmath,amssymb,amsthm}
\usepackage{cleveref}
\usepackage{xspace}
\usepackage[font=small]{caption}
\usepackage{subcaption}
\usepackage{boxedminipage}
\usepackage{hhline}
\usepackage{multicol}
\usepackage{multirow}
\usepackage{booktabs}
\usepackage{centernot}
\usepackage{algorithm}
\usepackage{algpseudocode}
\usepackage{appendix}
\usepackage{color,graphicx}
\usepackage{paralist}

\theoremstyle{plain}
\newtheorem{theorem}{Theorem}[section]
\newtheorem{lemma}[theorem]{Lemma}

\newtheorem{fact}[theorem]{Fact}

\theoremstyle{definition}
\newtheorem{definition}[theorem]{Definition}

\newtheorem{remark}[theorem]{Remark}

\DeclareMathOperator{\poly}{poly}
\DeclareMathOperator{\cost}{cost}
\DeclareMathOperator{\dist}{dist}
\DeclareMathOperator{\Ball}{Ball}
\DeclareMathOperator{\Balls}{Balls}
\DeclareMathOperator{\opt}{opt}
\DeclareMathOperator{\ring}{ring}

\newcommand{\ProblemName}[1]{\textsc{#1}}
\newcommand{\kMedian}{\ProblemName{$k$-Median}\xspace}
\newcommand{\kMeans}{\ProblemName{$k$-Means}\xspace}
\newcommand{\kzC}{\ProblemName{$(k, z)$-Clustering}\xspace}
\newcommand{\kmMedian}{\ProblemName{$(k,m)$-Robust Median}\xspace}
\newcommand{\onemMedian}{\ProblemName{$(1,m)$-Robust Median}\xspace}
\newcommand{\kmMeans}{\ProblemName{$(k,m)$-Robust Means}\xspace}
\newcommand{\kzmC}{\ProblemName{$(k, z, m)$-Robust Clustering}\xspace}
\newcommand{\kztC}{\ProblemName{$(k, z, t)$-Robust Clustering}\xspace}
\newcommand{\kMeanspp}{\ProblemName{$k$-Means++}\xspace}

\newcommand{\pclose}{\ensuremath{p_{\mathrm{close}}\xspace}}
\newcommand{\pfar}{\ensuremath{p_{\mathrm{far}}\xspace}}
\newcommand{\Tclose}{\ensuremath{T_{\mathrm{close}}\xspace}}
\newcommand{\Tfar}{\ensuremath{T_{\mathrm{far}}\xspace}}
\newcommand{\Tclosez}{\ensuremath{T^z_{\mathrm{close}}\xspace}}
\newcommand{\Tfarz}{\ensuremath{T^z_{\mathrm{far}}\xspace}}
\newcommand{\Cclose}{\ensuremath{C_{\mathrm{close}}\xspace}}
\newcommand{\Cfar}{\ensuremath{C_{\mathrm{far}}\xspace}}
\newcommand{\Gall}{\ensuremath{G_{\mathrm{all}}\xspace}}
\newcommand{\Rall}{\ensuremath{R_{\mathrm{all}}\xspace}}
\algrenewcommand\algorithmicrequire{\textbf{Input:}}
\algrenewcommand\algorithmicensure{\textbf{Output:}}

\usepackage{amsmath,amsfonts,bm}

\def\eqref#1{equation~\ref{#1}}

\def\1{\bm{1}}

\def\eps{{\epsilon}}

\DeclareMathAlphabet{\mathsfit}{\encodingdefault}{\sfdefault}{m}{sl}
\SetMathAlphabet{\mathsfit}{bold}{\encodingdefault}{\sfdefault}{bx}{n}

\def\gG{{\mathcal{G}}}

\def\gL{{\mathcal{L}}}

\def\gR{{\mathcal{R}}}

\newcommand{\R}{\mathbb{R}}

\DeclareMathOperator*{\argmin}{arg\,min}

\title{Near-optimal Coresets for Robust Clustering}

\author{Lingxiao Huang\thanks{
    Email: \texttt{huanglingxiao1990@126.com}
  }\\
    Huawei TCS Lab
    \and
    Shaofeng H.-C. Jiang\thanks{Research partially supported by a national key R\&D program of China No.\ 2021YFA1000900,
    a startup fund from Peking University, and the Advanced Institute of Information Technology, Peking University.
    Email: \texttt{shaofeng.jiang@pku.edu.cn}
  }\\
  Peking University
  \and
  Jianing Lou\thanks{ 
    Email: \texttt{loujn@pku.edu.cn}
  }\\
  Peking University
  \and
  Xuan Wu\thanks{ 
Email: \texttt{wu3412790@gmail.com}}\\
  Huawei TCS Lab
}

\begin{document}

\maketitle

\begin{abstract}
    
    We consider robust clustering problems in $\mathbb{R}^d$,
    specifically $k$-clustering problems (e.g., \kMedian and \kMeans) with $m$ \emph{outliers},
    where the cost for a given center set $C \subset \mathbb{R}^d$ aggregates the distances from $C$ to all but the furthest $m$ data points,
    instead of all points as in classical clustering.
We focus on the $\eps$-coreset for robust clustering, a small proxy of the dataset that preserves the clustering cost within $\eps$-relative error for all center sets.
Our main result is an $\eps$-coreset of size $O(m + \poly(k \eps^{-1}))$ that can be constructed in near-linear time.
This significantly improves previous results,
which either suffers an exponential dependence on $(m + k)$~\cite{feldman2012data}, or
has a weaker bi-criteria guarantee~\cite{huang2018epsilon}.
Furthermore, we show this dependence in $m$ is nearly-optimal,
and the fact that it is isolated from other factors
may be crucial for dealing with large number of outliers.
We construct our coresets by adapting to the outlier setting a recent framework~\cite{braverman2022power} 
which was designed for capacity-constrained clustering,
overcoming a new challenge that the participating terms in the cost,
particularly the excluded $m$ outlier points, are dependent on the center set $C$.
We validate our coresets on various datasets, and we observe a superior size-accuracy tradeoff
compared with popular baselines including uniform sampling and sensitivity sampling.
We also achieve a significant speedup of existing approximation algorithms for robust clustering using our coresets.
\end{abstract}

\section{Introduction}
\label{sec:intro}

We give near-optimal $\eps$-coresets for \kMedian and \kMeans (and more generally, \kzC)
with outliers in Euclidean spaces.
Clustering is a central task in data analysis,
and popular center-based clustering methods, such as \kMedian and \kMeans, have been widely applied.
In the vanilla version of these clustering problems,
given a \emph{center set} of $k$ points $C$,
the objective is usually defined by the sum of (squared) distances from each data point to $C$.

This formulation, while quite intuitive and simple to use,
has severe robustness issues when dealing with noisy/adversarial data;
for instance, an adversary may add few noisy \emph{outlier} points that are far from the center
to ``fool'' the clustering algorithm to wrongly put centers towards those points in order to minimize the cost.
Indeed, such robustness issue introduced by outliers has become
a major challenge in data science and machine learning,
and it attracted extensive algorithmic research on the topic~\cite{charikar2001algorithms,chen2008constant,Cands2011RobustPC,chawla2013kmeans,mount2014on,gupta2017local,statman2020kmeans,DBLP:conf/icml/DingW20}.
Moreover, similar issues have also been studied from the angle of statistics~\cite{huber2009robust}.

\paragraph{Robust Clutering}
We consider \emph{robust} versions of these clustering problems,
particularly a natural and popular variant, called clustering with outliers~\cite{charikar2001algorithms}.
Specifically, given a dataset $X\subset \R^d$, the \kzmC problem is to find a center set $C\subset \R^d$ of $k$ points (repetitions allowed), that minimizes the objective function
\begin{equation}
\label{eq:robust_cost}
	\cost^{(m)}_z(X,C) := \min_{L\subseteq X: |L|=m } \sum_{x\in X\setminus L} (\dist(x,C))^z.
\end{equation}
Here, $L$ denotes the set of \emph{outliers}, $\dist$ denotes the Euclidean distance, and $\dist(x,C) := \min_{c \in C} \dist(x, c)$.
Intuitively, the outliers capture the furthest points in a cluster which are ``not well-clustered'' and are most likely to be the noise.
Notice that the parameter $z$ captures various (robust) clustering problems, including \kmMedian (where $z=1$), \kmMeans (where $z=2$).
On the other hand, if the number of outliers $m = 0$ then the robust clustering problem falls back to the non-robust version.
The \kzmC problem has been widely studied in the literature~\cite{chen2008constant,gupta2017local,RLS18,friggstad2019approximation,statman2020kmeans}.
Moreover, the idea of removing outliers has been also considered in other machine learning tasks, e.g., robust PCA~\cite{bhaskara2018low} and robust regression~\cite{rousseeuw1987robust,mount2014on}.

\paragraph{Computational Challenges}
However, the presence of outliers introduces significant computational challenges,
and it inspires a series of research to design efficient algorithms for robust clustering.
On one hand, approximation algorithms with strict accuracy guarantee has been obtained
~\cite{charikar2001algorithms,chen2008constant,gupta2017local,RLS18,feng2019improved,friggstad2019approximation,zhang2021local}
but their running time is a high-degree polynomial which is impractical.
On the other hand, more scalable algorithms were also proposed~\cite{bhaskara2019greedy,deshpande2020robust},
however, the approximation ratio is worse, and a more severe limitation
is that their guarantee usually violates the required number of outliers.
Moreover, to the best of our knowledge, we are not aware of works
that design algorithms in sublinear models, such as streaming and distributed computing.

\paragraph{Coresets}
In order to tackle the computational challenges,
we consider \emph{coresets} for robust clustering.
Roughly, an $\eps$-coreset is a tiny proxy of the massive input dataset,
on which the clustering objective is preserved within $\eps$-error for every potential center set.
Existing algorithms may benefit a significant speedup if running on top of a coreset,
and more importantly, coresets can be used to derive sublinear algorithms,
including streaming algorithms~\cite{DBLP:conf/stoc/Har-PeledM04},
distributed algorithms~\cite{BalcanEL13} and dynamic algorithms~\cite{DBLP:conf/esa/HenzingerK20},
which are highly useful to deal with massive datasets.

Stemming from~\cite{DBLP:conf/stoc/Har-PeledM04}, the study of coresets for the non-robust version of clustering, i.e., \kzC,
has been very fruitful~\cite{feldman2011unified,FSS20,sohler2018strong,huang2020coresets,BJKW21,cohen2021new,braverman2022power}, and the state-of-the-art coreset achieves
a size $\poly(k \eps^{-1})$, independent of $d$ and $n$.
However, coresets for robust clustering were much less understood.
Existing results either suffers an exponential $(k + m)^{k + m}$ factor in the coreset size~\cite{feldman2012data},
or needs to violate the required number of outliers~\cite{huang2018epsilon}.
This gap leads to the following question: can we efficiently construct an $\eps$-coreset of size $\poly(m, k, \eps^{-1})$ for \kzmC (without violating the number of outliers)?

\subsection{Our contributions}
\label{sec:contribution}
Our main contribution, stated in \Cref{thm:intro_main},
is a near-optimal coreset for robust clustering,
affirmatively answering the above question.
In fact, we not only achieve $\poly(m)$,
but also linear in $m$ and is isolated from other factors.
This can be very useful when the number of outliers $m$ is large.
\begin{theorem}[Informal; see \Cref{thm:main}]
	\label{thm:intro_main}
	There exists a near-linear time algorithm that given data set $X\subset \R^d$, $z\geq 1$, $\eps \in (0,0.3)$ and integers $k,m\geq 1$,
	computes an $\eps$-coreset of $X$ for \kzmC of size $O(m)+2^{O(z\log z)}\tilde{O}(k^3\eps^{-3z-2})$, with constant probability.
\end{theorem}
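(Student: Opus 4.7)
My plan is to decompose $X$ into a small set of ``candidate outliers'' that are kept in the coreset exactly, together with the remaining ``likely inliers'' on which we invoke (an adaptation of) the sensitivity-sampling framework of~\cite{braverman2022power}. The intuition is that since the cost~\eqref{eq:robust_cost} excludes only $m$ outliers per query, preserving $\Theta(m)$ candidates exactly should suffice to cover the optimal outlier set of every query, while the inliers then look like a standard \kzC instance, handled by a coreset of size $\poly(k\eps^{-1})$.

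\emph{Step 1: Bicriteria initial solution.} In near-linear time, compute a constant-factor bicriteria solution $(A^\star,L^\star)$ with $|A^\star|=k$ and $|L^\star|=m$ for \kzmC, via e.g.\ \cite{bhaskara2019greedy}. Sort $X$ by decreasing distance to $A^\star$, let $P\subseteq X$ consist of the top $\Theta(m)$ such points (so $L^\star\subseteq P$), and set $X':=X\setminus P$. Include every point of $P$ in the coreset with its original unit weight; this contributes $O(m)$ to the size.

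\emph{Step 2: Coreset for the inliers.} On $X'$, run (a suitable adaptation of) the framework of~\cite{braverman2022power} using $A^\star$ as the reference solution, to obtain a weighted sample $D'$ of size $2^{O(z\log z)}\tilde O(k^3\eps^{-3z-2})$ that $\eps$-preserves the ordinary $(k,z)$-cost $\cost_z(X',C)$ for all $C\subset\R^d$ with $|C|=k$. The final coreset is $P\cup D'$, matching the claimed size.

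\emph{Step 3 (main obstacle): center-dependent outlier sets.} For a query $C$, both the optimal outlier set $L_C$ on $X$ and the optimal outlier set on the coreset depend on $C$. The key lemma to prove is that for every $C$, the $m$ outliers can be chosen entirely within $P$, on both $X$ and $P\cup D'$, losing only $O(\eps)\cdot\cost^{(m)}_z(X,C)$. I would split into two cases. When $C$ is close in cost to $A^\star$, i.e.\ $\cost^{(m)}_z(X,C)=O(\cost^{(m)}_z(X,A^\star))$, the generalised triangle inequality $\dist(x,C)^z\le 2^{z-1}(\dist(x,A^\star)^z+\dist(A^\star,C)^z)$ together with the slack $|P|\gg m$ forces the $m$ farthest points from $C$ (up to a small additive charge that can be absorbed into the $(k,z)$-cost) to lie in $P$. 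When $C$ is far from $A^\star$, the robust cost $\cost^{(m)}_z(X,C)$ is so large that any additive error from mis-identifying outliers is multiplicatively dominated, and the Step 2 guarantee on $X'$ together with the exact preservation of $P$ yield the coreset property.

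\emph{Step 4: Assembly.} Combining Steps 1--3 gives $\cost^{(m)}_z(P\cup D',C)=(1\pm O(\eps))\,\cost^{(m)}_z(X,C)$ for every query $C$, with total size $|P|+|D'|=O(m)+2^{O(z\log z)}\tilde O(k^3\eps^{-3z-2})$, and the near-linear running time follows since each of the three subroutines (bicriteria, selecting $P$, Braverman et al.\ on $X'$) runs in near-linear time. The hard part is Step 3, specifically calibrating $|P|=\Theta(m)$ as a function of $z$ so that the outlier-swap correction collapses \emph{uniformly} over all $C$; this is precisely the new ingredient over~\cite{braverman2022power}, which does not have to cope with a query-dependent exclusion set.
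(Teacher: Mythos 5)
Your Step~3 key lemma --- that for every query $C$ the optimal outlier set can be taken inside $P$ (on both $X$ and the coreset) with only $O(\eps)$ loss --- is false, and the dichotomy you use to argue for it does not hold. Here is a concrete obstruction. Take $k=2$; let cluster $A$ be $N$ tightly packed points at the origin, cluster $B$ be $m$ points at $(D,0)$, and cluster $Z$ be $m$ points at $(10D,0)$, with $N\gg m$. A near-optimal $A^\star$ has centers at $0$ and $(D,0)$, the outliers of $A^\star$ are cluster $Z$, and so $P$ consists essentially of $Z$ together with a handful of fringe points. Now query $C$ with centers at $0$ and $(10D,0)$. Then $\cost_z^{(m)}(X,C)\approx 0$ (serve $A$ and $Z$ for free, remove $B$), but the $m$ true outliers are cluster $B\subset X'$, which is disjoint from $P$. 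Forcing the outliers to lie in $P$ removes $Z$ (which costs nothing to keep) and leaves $B$ paying $\approx m D^z$, so the relative error is unbounded. This falls through the cracks of your dichotomy: $C$ is geometrically very far from $A^\star$, yet its robust cost is tiny, so neither ``cost comparable to $A^\star$'' nor ``cost so large errors are dominated'' applies. In short, outliers of an arbitrary query $C$ genuinely spread over $X'$, and you cannot uniformly push them into a fixed $\Theta(m)$-sized $P$ no matter how large a constant you use.

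This forces a change at Step~2 as well: your $D'$ only preserves the \emph{non-robust} cost $\cost_z(X',\cdot)$, which does not imply anything about $\cost_z^{(t)}(X',\cdot)$ for $t>0$. A vanilla coreset may replace many unit-weight points by a few heavy representatives, and truncating weight $t$ from those representatives bears no relation to removing the $t$ farthest original points. (For example, $N/2$ points at distance $0$ and $N/2$ at distance $2r$ can be summarised by a single weight-$N$ point at distance $r$; this preserves $\cost_z$ exactly for that $C$, but $\cost_z^{(N/2)}$ goes from $0$ to $Nr/2$.) What the paper does instead, and what your proposal is missing, is to make the inlier-side coreset preserve $\cost_z^{(t)}(\cdot,C)$ \emph{simultaneously for every} $t\in\{0,\dots,m\}$: for rings this is Lemma~\ref{lem:intro:uniform}, proved by rewriting the robust cost as an integral over ball ranges (Fact~\ref{fact:integral}) and invoking $\eps$-approximations for $k$-balls range spaces; for groups it is Lemma~\ref{lem:intro:twopoints}, whose analysis requires the new geometric observation (Lemma~\ref{lem:special}) that for any fixed $C$ at most two uncolored groups per cluster can \emph{partially} intersect the outlier set, because outliers occupy a suffix of the consecutive rings. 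That simultaneous-$t$ guarantee, not a push-outliers-into-$P$ lemma, is the real new ingredient over~\cite{braverman2022power}, and your Steps~2--3 would need to be replaced by it.
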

Our coreset improves over previous results in several aspects.
Notably, compared with~\cite{feldman2012data},
our result avoids the exponential $(k + m)^{k + m}$ factor in the coreset size,
and our coreset has a strict guarantee for $m$ outliers instead of
a bi-criteria guarantee as in~\cite{huang2018epsilon} that needs to allow more or fewer outliers in the objective for the coreset.
We also note that our coreset is composable (\Cref{remark:thm}).

Furthermore, we show that the linear dependence in $m$ is not avoidable in coreset size (\Cref{thm:intro_lb}).
Hence, combining this with a recent size lower bound of $\Omega(k \eps^{-2})$~\cite{cohen2022towards}
for vanilla clustering (i.e., $m = 0$),
we conclude that the dependence of every parameter (i.e., $m, k, \eps$) in our coreset is nearly tight.

\begin{theorem}
	\label{thm:intro_lb}
	For every integer $m \geq 1$, there exists a dataset $X \subset \mathbb{R}$ of $n \geq m$ points,
	such that for every $0< \epsilon < 0.5$,
	any $\eps$-coreset for \onemMedian must have size $\Omega(m)$.
\end{theorem}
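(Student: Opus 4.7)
The plan is to exhibit an explicit hard instance on the real line. Let $X = \{0, 1, 2, \ldots, m\} \subset \R$, which has $n = m+1 \geq m$ points. Since $k=1$ and exactly $m$ of the $m+1$ points must be discarded as outliers, the robust cost for any single center $c\in\R$ keeps only the closest point: $\cost^{(m)}_1(X, c) = \min_{x \in X} |x - c|$. In particular, $\cost^{(m)}_1(X, i) = 0$ for every $i \in \{0,1,\ldots,m\}$.

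The next step is to spell out the weighted-coreset semantics for \onemMedian. A weighted set $(S, w)$ is an $\eps$-coreset if for every $c \in \R$ one has $\cost^{(m)}_1(S, w, c) \in (1\pm\eps)\cdot \cost^{(m)}_1(X, c)$, where on the coreset side ``removing $m$ outliers'' is implemented fractionally by choosing $0 \le w' \le w$ with $w(S) - w'(S) = m$ and minimizing $\sum_{x\in S} w'(x)\,|x-c|$. The minimizer greedily pours the surviving $w(S)-m$ units of weight onto the support points of $S$ in order of distance to $c$ (nearest first), and so in particular
\[
\cost^{(m)}_1(S, w, c) \;\geq\; (w(S) - m)\cdot \min_{x \in \mathrm{supp}(S)} |x-c|.
\]

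Now evaluate the coreset guarantee at each $c = i$ with $i\in\{0,1,\ldots,m\}$. Since $\cost^{(m)}_1(X, i) = 0$ and $\eps < 1$, the coreset cost at $c=i$ must equal $0$. Provided $w(S) > m$ (which must hold, as otherwise the coreset cost is identically $0$ and cannot $(1\pm\eps)$-approximate a center with positive true cost, e.g., $c = 10m$), the displayed inequality forces $\min_{x \in \mathrm{supp}(S)} |x - i| = 0$, i.e., $S$ carries a support point exactly at $i$. Ranging over the $m+1$ distinct integers $i\in\{0,1,\ldots,m\}$ produces $m+1$ distinct support points, giving $|S| \geq m+1 = \Omega(m)$.

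The one place needing real care is the unpacking of the weighted outlier semantics, in particular the freedom for $S$ to place support outside $X$: nudging a support point off an integer $i$ only strictly increases the coreset cost at center $c = i$, so it cannot help the adversary. Apart from this bookkeeping, the proof reduces to a counting argument on ``zero-cost witnesses''.
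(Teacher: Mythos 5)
Your proof is correct and in fact gives a slightly sharper bound than the paper's. You use the same hard instance $X = \{0, 1, \ldots, m\}$ on the line, but the centers you query are different. The paper evaluates the coreset at the midpoints $c = (x_{i-1}+x_i)/2$, where the true robust cost is exactly $0.5$, and argues that if both $x_{i-1}$ and $x_i$ are missing from $S$ then $\cost_1^{(m)}(S,c)\geq (1-\eps)\cdot 1.5 > 0.75 > (1+\eps)\cdot 0.5$, a contradiction, yielding $|S|\geq (m-1)/2$. You instead query the data points $c=i$ themselves, where $\cost_1^{(m)}(X,i)=0$; the multiplicative guarantee then forces $\cost_1^{(m)}(S,i)=0$, and combined with $w_S(S)>m$ this pins a support point of $S$ exactly at $i$, giving $|S|\geq m+1$. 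Both arguments need a preliminary lower bound on $w_S(S)$: the paper sends $c\to+\infty$ to obtain $w_S(S)\geq m+1-\eps$, while you simply observe that $w_S(S)\leq m$ renders the coreset cost identically zero (or undefined) and therefore useless against any center with positive true cost.

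A couple of remarks. Your worry about support points off the integer lattice is moot under the paper's Definition~\ref{def:coreset}, which requires $(S,w_S)$ to be a weighted \emph{subset} of $X$; your argument also handles the general case, since $i\notin S$ implies $\min_{x\in S}|x-i|>0$ which already gives the contradiction. On the tradeoff between approaches: your zero-cost queries make the counting argument crisper and work for every $\eps<1$ rather than just $\eps<0.5$, but they lean on the true cost being exactly zero. The paper's midpoint queries are slightly more robust in spirit --- the same computation survives if one later considers coreset notions with a small additive slack, where zero-cost witnesses cease to be informative. For the theorem as actually stated, your route is both valid and cleaner.
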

For the lower bound, we observe that when $m = n - 1$, the clustering cost for \onemMedian reduces to the distance to the nearest-neighbor
from the center $c$. This is easily shown to require $\Omega(n) = \Omega(m)$
points in the coreset, in order to achieve any finite approximation.
The formal proof can be found in \Cref{sec:lb}.

\paragraph{Experiments}
We evaluate the empirical performance of our coreset on various datasets (in \Cref{sec:exp}).
We validate the size-accuracy tradeoff of our coreset compared with popular coreset construction methods,
particularly uniform sampling (which is a natural heuristic)
and sensitivity sampling~\cite{feldman2012data},
and we observe that our coreset consistently outperforms these baselines in accuracy
by a significant margin for every experimented coreset size ranging from $500$ to $5000$.
We also run existing approximation algorithms on top of our coreset,
and we achieve about 100x speedup for both a) a Lloyd heuristic adopted to the outlier setting~\cite{chawla2013kmeans} that is seeded by an outlier-version of \kMeanspp~\cite{bhaskara2019greedy},
and b) a natural local search algorithm~\cite{friggstad2019approximation}.
These numbers show that our coreset is not only near-optimal in theory,
but also demonstrates the potential to be used in practice.

\subsection{Technical Overview}
\label{sec:overview}
Similar to many previous coreset constructions,
we first compute a near-optimal solution,
an $(\alpha,\beta,\gamma)$-approximation (see \Cref{def:abgapprox})
$C^* := \{c_i^*\mid i\in [\beta k]\}$, obtained using known approximation algorithms (see the discussion in \Cref{sec:tri_criteria}).
Then with respect to $C^*$,
we identify the outliers $L^*\subset X$ of $C^*$
and partition the remaining inlier points $X \setminus L^*$ into $|C^*|$ clusters $\{X_i\}_i$.

We start with including $L^*$ into our coreset,
and we also include a weighted subset of the remaining inlier points $X\setminus L^*$
by using a method built upon a recent framework~\cite{braverman2022power},
which was originally designed for clustering with capacity constraints.
The step of including $L^*$ in the coreset is natural,
since otherwise one may miss the remote outlier points which can incur a huge error;
furthermore, the necessity of this step is also justified by our $\Omega(m)$ lower bound (\Cref{thm:intro_lb}).
Similar to \cite{braverman2022power},
for each cluster $X_i$ among the remaining inliers points $X \setminus L^*$,
we identify a subset of $X_i$ that consists of $\poly(k\epsilon^{-1})$ \emph{rings},
and merge the remaining part into $\poly(k \epsilon^{-1})$ \emph{groups} of rings such that each group $G$ has a tiny cost (\Cref{thm:meta}).
We use a general strategy that is similar to \cite{braverman2022power} to
handle separately the rings (\Cref{lem:intro:uniform}) and groups (\Cref{lem:intro:twopoints}),
but the actual details differ significantly due to the presence of outliers.

\paragraph{Handling Rings}
For a ring data subset $R\subseteq X_i$, i.e., a subset such that every point is at a similar distance (up to a factor of $2$) to the center $c_i^*$,
we apply a uniform sampling on it to construct a coreset (with \emph{additive} error, \Cref{def:additivecoreset}), as in \cite{braverman2022power}.
However, our guarantee is stronger (\Cref{lem:intro:uniform}) in that
the robust clustering cost in $R$ needs to be preserved for all the possible number of outliers $0 \leq t \leq m$ \emph{simultaneously}.
Indeed, we need this stronger guarantee since the number of outliers that lie in $R$
is not known a priori and can be any number between $0$ and $m$.
To prove this guarantee, we rewrite the robust clustering cost as an integration of ball ranges
(Fact~\ref{fact:integral}) and use a fact that uniform sampling approximately estimates all ball ranges (\Cref{lem:approximation}).
Similar idea of writing the cost as an integration has also been used in previous works, e.g., \cite{huang2018epsilon,braverman2022power}.

\paragraph{Handling Groups}
The main technical difficulty is the handling of groups (\Cref{lem:intro:twopoints}).
We still construct a two-point coreset (\Cref{def:twopoints}) for every group $G\subset X_i$, as in~\cite{braverman2022power}.
To analyze the error of this two-point coreset for an arbitrary center set $C \subset \mathbb{R}^d$,
we partition the groups into \emph{colored} and \emph{uncolored} groups \emph{with respect to $C$} (\Cref{lem:color}) in a way similar to \cite{braverman2022power}.
The error analysis for the colored group is still similar, where we charge the error to the additive term (recalling that it suffices to find a coreset for a ring with additive error),
except that we need to modify the technical details to handle the outliers.
For the uncolored groups,
we observe that if an uncolored group $G$ contains no outlier or all points in $G$ are outliers,
then the two-point coreset in every uncolored group $G$ only incurs multiplicative error (\Cref{lem:uncol}).
This observation generalizes a similar observation in~\cite{braverman2022power}.
However, this observation may not hold when $G$ only \emph{partially} intersect the outliers.
Luckily, we manage to prove that there are at most \emph{two} uncolored groups
that partially intersect the outliers, for every center set $C$ (\Cref{lem:special}).
This suffices for bounding the total error incurred by the uncolored groups.
Here, the key geometric observation is that the outliers must lie in
\emph{consecutive} uncolored groups due to the way we decompose $X_i$,
and apart from the two groups that partially intersect the outliers,
every other group within the consecutive sequence consists of outliers only.

\subsection{Other Related Work}
\label{sec:related}

\paragraph{Robust Clustering in $\R^d$}
Robust clustering, first proposed by~\cite{charikar2001algorithms}, has been studied for two decades.
For \kmMedian, \cite{charikar2001algorithms} designed a bi-criteria approximate algorithm with violations on $k$.
\cite{chen2008constant} first showed a pure constant approximate algorithm, whose approximate ratio was improved to $7.081+\eps$~\cite{RLS18}.
When $k=O(1)$, \cite{feng2019improved} also proposed a PTAS for \kmMedian.
For \kmMeans, \cite{gupta2017local} designed a bi-criteria approximate algorithm with violations on $m$.
\cite{RLS18} first proposed a constant approximate algorithm, and the approximate ratio was improved to $6+\eps$~\cite{feng2019improved}.
For general \kzmC, \cite{friggstad2019approximation} achieved an $O(z^z)$ approximate solution with $(1+\eps)k$ centers.
Due to the wide applications, scalable algorithms have been designed for \kmMeans~\cite{bhaskara2019greedy,deshpande2020robust} besides theoretical study, which may have a worse provable guarantee but are more efficient in practice.

\paragraph{Coresets for Clustering}
There is a large body of work that studies coreset construction for vanilla \kzC in $\R^d$~\cite{DBLP:conf/stoc/Har-PeledM04,feldman2011unified,braverman2016new,huang2018epsilon,cohen2021new,cohen2022towards}.
The state-of-art result for general \kzC is by~\cite{cohen2022towards}, where the coreset size is $\tilde{O}(z^{O(z)} k\eps^{-2}\cdot \min\left\{k,\eps^{-z}\right\})$.
This bound nearly matches a lower bound of $\Omega(k\eps^{-2}+k\min\left\{d,2^{z/20}\right\})$~\cite{cohen2022towards,huang2020coresets}.
In addition, coresets for constrained clustering in Euclidean spaces
has also been considered, such as capacitated clustering and the tightly related
fair clustering~\cite{schmidt2019fair,huang2019coresets,braverman2022power},
and ordered weighted clustering~\cite{braverman2019coresets}.
Going beyond Euclidean spaces, coresets of size $\poly(k \eps^{-1})$
were known for \kzC in doubling metrics~\cite{huang2018epsilon},
shortest-path metrics of graphs with bounded treewidth~\cite{baker2020coresets}
and graphs that exclude a fixed minor~\cite{BJKW21}.

 \section{Preliminaries}
\label{sec:prelim}

\paragraph{Balls and Rings} For a point $a\in \mathbb{R}^d$, and positive real numbers $r'>r>0$, define $\Ball(a,r)=\{x\in\mathbb{R}^d,\dist(x,a)\leq r\}$ and $\ring(a,r,r')=\Ball(a,r')\setminus \Ball(a,r)$.
For a set of points $A\subset\mathbb{R}^d$, $\mathrm{Balls}(A,r)=\cup_{a\in A} \mathrm{Ball}(a,r)$.

\paragraph{Weighted Outliers}
Since our coreset uses weighted points, we need to define the notion of weighted sets and weighted outliers.
We call a set $S$ with an associated weight function $w_S:S\rightarrow \R_{\geq 0}$ a \emph{weighted set}.
Given two weighted sets $(X,w_X)$ and $(Y,w_Y)$ such that $Y\subseteq X$
and $w_Y(x)\leq w_X(x)$ for any $x\in Y$,
let $X-Y$ denote a weighted set $(Z,w_Z)$ such that $w_Z=w_X-w_Y$,\footnote{Here, if $x\notin Y$, we let $w_Y(x)=0$.} and $Z$ is the support of $w_Z$.
Moreover, for a weighted set $X$,
we denote $\gL_X^{(m)}$ as the collection of all possible sets of \emph{weighted outliers} $(Y,w_Y)$ satisfying that
$Y\subseteq X$, $\sum_{x\in Y} w_Y(x)=m$ and that $\forall x\in X$, $w_Y(x)\leq w_X(x)$.
In this definition, since $X$ is a weighted point set, we need to pick outliers of \emph{total weight} $m$ in the objective $\cost_z^{(m)}(X,C)$,
instead of $m$ distinct points which may have a much larger weights than $m$.

\paragraph{Weighted Cost Functions}
For $m = 0$, we write $\cost_z$ for $\cost^{(m)}_z$.
We extend the definition of the cost function to that on a weighted set $X \subset \mathbb{R}^d$.
For $m = 0$, we define $\cost_z(X, C) := \sum_{x \in X} w_X(x) \cdot (\dist(x, C))^z$.
For general $m \geq 1$, the cost is defined using the notion of weighted outliers and aggregating using the $\cost_z$ function which is the $m = 0$ case.
\[
	\cost_z^{(m)}(X, C) := \min_{ (L, w_l) \in \gL_X^{(m)} }\left\{
		\cost_z(X - L, C)
	\right\}.
\]
One can check that this definition is a generalization of the unweighted case (\ref{eq:robust_cost}).
For a weighted set $X \subset \mathbb{R}^d$,
let the optimal solution be $\opt_z^{(m)}(X):=\min_{C \subset \mathbb{R}^d, |C|=k} \cost_z^{(m)}(X,C)$.

\begin{definition}[Coreset]
	\label{def:coreset}
	Given a point set $X\subset \R^d$ and $\eps\in (0,1)$, an $\eps$-coreset for \kzmC is a weighted subset $(S,w_{S})$ of $X$ such that
\begin{equation}
		\label{eq:robust_coreset_cost}
		\forall C \subset \mathbb{R}^d, |C| =k, \qquad \cost_z^{(m)}(S,C) \in (1\pm \eps)\cdot \cost_z^{(m)}(X,C)
	\end{equation}
\end{definition}

Even though \Cref{def:coreset} naturally extends the definition
of coresets for vanilla clustering~\cite{DBLP:conf/stoc/Har-PeledM04,feldman2011unified,FSS20},
it is surprising that this exact definition did not seem to appear in the literature.
A closely related definition~\cite{huang2018epsilon}
considers a relaxed ``bi-criteria'' (with respect to the number of outliers) guarantee of the cost, i.e.,
	$(1-\eps)\cdot \cost_z^{(1+\beta)m}(S,C)\leq \cost_z^{(m)}(X,C)\leq (1+\eps)\cdot \cost_z^{(1-\beta)m}(S,C)$,
for $\beta\in [0,1)$,
and their coreset size depends on $\beta^{-1}$.
Another definition was considered in~\cite{feldman2012data}, 
which considers a more general problem called weighted clustering (so their coreset implies our \Cref{def:coreset}).
Unfortunately, this generality leads to an exponential-size coreset (in $k, m$).

\begin{definition}[$(\alpha,\beta,\gamma)$-Approximation] \label{def:abgapprox}
 Given a dataset $X\subset \R^d$ and real numbers $\alpha,\beta,\gamma\geq 1$, an $(\alpha,\beta,\gamma)$-approximate solution
for \kzmC on $X$ is a center set $C^*\subset \mathbb{R}^d$ with $|C^*|\leq \beta k$ such that
$\cost_z^{(\gamma m)}(X,C^*)\leq \alpha \cdot \opt_z^{(m)}(X)$.
\end{definition}

\begin{lemma}[Generalized triangle inequalities]
\label{lem:gentri}
Let $a,b\geq 0$ and $\delta\in (0,1)$, then for $z\geq 1$,
\begin{enumerate}
\item (Lemma A.1 of \cite{MMR19}) $(a+b)^z\leq (1+\delta)^{z-1}\cdot a^z+(1+\frac{1}{\delta})^{z-1}\cdot b^z$
\item (Claim 5 of \cite{sohler2018strong}) $(a+b)^z\leq (1+\delta)\cdot a^z+(\frac{3z}{\delta})^{z-1}\cdot b^z$
\end{enumerate}
\end{lemma}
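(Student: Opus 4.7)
The plan is to treat Part 1 as a direct application of the convexity of $t \mapsto t^z$ (valid for $z \geq 1$), and to derive Part 2 from Part 1 via a rescaling of the slack parameter $\delta$.

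For Part 1, I would set $\lambda := \frac{1}{1+\delta}$, so that $1-\lambda = \frac{\delta}{1+\delta}$, and write
\[
a+b \;=\; \lambda \cdot \frac{a}{\lambda} \,+\, (1-\lambda) \cdot \frac{b}{1-\lambda}.
\]
Applying Jensen's inequality to the convex function $t \mapsto t^z$ on $[0,\infty)$ gives
\[
(a+b)^z \;\leq\; \lambda \cdot (a/\lambda)^z + (1-\lambda) \cdot (b/(1-\lambda))^z \;=\; \lambda^{1-z} a^z + (1-\lambda)^{1-z} b^z,
\]
and substituting the chosen $\lambda$ produces exactly $(1+\delta)^{z-1} a^z + (1+1/\delta)^{z-1} b^z$, as claimed.

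For Part 2, I would invoke Part 1 with a smaller parameter $\delta' := \delta/(2(z-1))$ (the case $z = 1$ is immediate, since the inequality reduces to $a+b \leq (1+\delta)a + b$). Two elementary estimates then suffice: (i) $(1+\delta')^{z-1} \leq 1+\delta$, which follows from $(z-1)\ln(1+\delta') \leq (z-1)\delta' = \delta/2 \leq \ln(1+\delta)$ using $\ln(1+\delta) \geq \delta/2$ for $\delta \in (0,1)$; and (ii) $1 + 1/\delta' = 1 + 2(z-1)/\delta \leq 3z/\delta$, which holds because $1 \leq z/\delta$ when $\delta \leq 1$. Raising (ii) to the $(z-1)$-th power yields $(1+1/\delta')^{z-1} \leq (3z/\delta)^{z-1}$, and combining with (i) plugged into Part 1 gives the desired inequality.

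The main subtlety I foresee is calibrating the constant in the rescaling $\delta' = \delta/(2(z-1))$ so that the leading coefficient on $a^z$ comes out as exactly $1+\delta$ (rather than $1 + O(\delta)$) while still keeping the coefficient on $b^z$ within the quoted $(3z/\delta)^{z-1}$. This is really just careful bookkeeping with the elementary inequalities $\ln(1+x) \leq x$ and $\ln(1+x) \geq x/2$ on $(0,1)$, together with the range assumption $\delta \in (0,1)$; beyond this, both parts reduce to standard convexity arguments and require no clustering-specific ideas.
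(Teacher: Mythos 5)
Your proof is correct. The paper itself does not prove this lemma — it simply cites Lemma~A.1 of the reference [MMR19] for Part~1 and Claim~5 of the reference [sohler2018strong] for Part~2 — so there is no in-paper argument to compare against. Your Part~1 argument via Jensen's inequality with $\lambda = 1/(1+\delta)$ is the standard derivation (and is essentially the proof in the cited source), and your derivation of Part~2 from Part~1 by the substitution $\delta' = \delta/(2(z-1))$ is a clean and valid route: the bound $(1+\delta')^{z-1} \le 1+\delta$ follows correctly from $(z-1)\delta' = \delta/2 \le \ln(1+\delta)$ for $\delta\in(0,1)$, and the bound $1 + 2(z-1)/\delta \le 3z/\delta$ reduces to $\delta \le z+2$, which is immediate from $\delta < 1 \le z$. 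The $z=1$ base case is handled correctly. No gaps.
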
 
 \section{Coresets for \kzmC}
\label{sec:main}

We present our main theorem in \Cref{thm:main}.
As mentioned, the proof of \Cref{thm:main} is based on the framework in~\cite{braverman2022power},
and we review the necessary ingredients in \Cref{sec:review}.
The statement of our algorithm and the proof of \Cref{thm:main} can be found in \Cref{sec:proof_main}.

\begin{theorem} \label{thm:main} Given input dataset $P\subset \mathbb{R}^d$ with $|P|=n$, integers $k,m\geq 1$, and real number $z\geq 1$ and assume there exists an algorithm that computes an $(\alpha,\beta,\gamma)$-approximation of $X$ for \kzmC in time $\mathcal{A}(n,k,d,z)$, then \Cref{alg:main} uses time $\mathcal{A}(n,k,d,z)+O(nkd)$ to construct a weighted subset $(S,w_S)$ with size $|S|=\gamma m+2^{O(z\log z)}\cdot \beta\cdot \tilde{O}(k^3\epsilon^{-3z-2})$,
such that with probability at least 0.9, for every integer $0\leq t\leq m$, $S$ is an $\alpha \epsilon$-coreset $S$ of $X$ for \kztC.
\end{theorem}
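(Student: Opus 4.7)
The plan is to follow the ring/group decomposition strategy of \cite{braverman2022power}, with the crucial modifications needed to accommodate outliers. First, I would run the given $(\alpha,\beta,\gamma)$-approximation subroutine on $X$ to obtain a center set $C^*=\{c_i^* \mid i\in[\beta k]\}$ together with its associated set $L^*\subseteq X$ of $\gamma m$ outliers, which contributes the $\gamma m$ term to the coreset size. The inliers $X\setminus L^*$ are partitioned into clusters $\{X_i\}_{i\in[\beta k]}$ by assigning each remaining point to its closest center in $C^*$. I put $L^*$ wholesale into the coreset $S$ (necessary in view of \Cref{thm:intro_lb}), and then construct a weighted ``inlier'' coreset for each $X_i$ separately, summing to the claimed bound.

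For each cluster $X_i$, I would invoke the meta-decomposition (\Cref{thm:meta} in the paper) that splits $X_i$ into $\poly(k\epsilon^{-1})$ rings centered at $c_i^*$ together with $\poly(k\epsilon^{-1})$ groups of rings, where every group $G$ has negligible total cost against $C^*$. I process rings and groups via two different subroutines, analogous to \cite{braverman2022power} but strengthened. For a single ring $R\subseteq X_i$, I apply uniform sampling with appropriate scaling to build a subset $S_R$, and prove the \emph{additive} guarantee simultaneously for every outlier count $0\le t\le m$: that is, $|\cost_z^{(t)}(S_R,C)-\cost_z^{(t)}(R,C)|\le \epsilon\cdot(\text{additive term})$ for every center set $C$ and every $t$. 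The simultaneous-over-$t$ requirement is needed because a priori we do not know how many outliers land in $R$. To obtain it, I rewrite $\cost_z^{(t)}(R,C)$ as an integral of $|R\cap\Balls(C,r)|$-type range counts against $r^{z-1}\,dr$ (a variant of the layer-cake representation already used in \cite{huang2018epsilon,braverman2022power}), and then apply a uniform VC-type bound (\Cref{lem:approximation} in the paper) that approximates every such ball range to get the estimate for all $t$ at once.

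For groups, I would build a two-point coreset per group $G$: replace $G$ by two points located at $c_i^*$ and (essentially) the mean of $G$, with weights chosen to match $|G|$ and $\cost_z(G,c_i^*)$. For an arbitrary test center set $C$, mimicking \cite{braverman2022power}, I classify each group as \emph{colored} or \emph{uncolored} with respect to $C$ based on whether $C$ separates nearby points of $G$; colored-group errors are charged to the ring-level additive term. The core difficulty, which is the main new work over \cite{braverman2022power}, is the interaction between outliers and uncolored groups. The key observation I would establish is that if an uncolored group $G$ either (i) contains no points of the worst-case outlier set for $C$, or (ii) is entirely outliers, then replacing $G$ by its two-point coreset changes the cost only by an $\epsilon$-multiplicative amount; this generalizes the corresponding lemma of \cite{braverman2022power}. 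The hard part will be the pathological case where an uncolored group \emph{partially} intersects the optimal outlier set $L(C)$ for the tested center set $C$. For this I would exploit the fact that, by construction, the groups in each cluster $X_i$ are naturally ordered by distance to $c_i^*$, and the outliers of $C$ correspond (approximately) to the points furthest from $C$; combined with the uncolored/colored distinction one shows that outliers within a single cluster occupy a \emph{consecutive} block of uncolored groups, so at most \emph{two} boundary groups partially intersect $L(C)$ (this is \Cref{lem:special} in the paper). Since there are only $O(\beta k)$ clusters and hence $O(\beta k)$ such boundary groups, their total error can be absorbed into the ring additive budget.

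Finally, I would put the pieces together: a union bound over the $O(\beta k)$ clusters and the polynomially many rings per cluster yields success probability $\ge 0.9$; summing additive and multiplicative errors over rings and groups gives total distortion $O(\epsilon)\cdot\cost_z^{(t)}(X,C)$ after re-scaling $\epsilon$. The approximation factor $\alpha$ enters because the ring/group additive budget is calibrated against $\cost_z^{(\gamma m)}(X,C^*)\le \alpha\cdot\opt_z^{(m)}(X)$, which inflates the final error by $\alpha$, giving the stated $\alpha\epsilon$-coreset guarantee uniformly over all $0\le t\le m$. The total size is $\gamma m$ (from $L^*$) plus the per-cluster sample sizes summing to $2^{O(z\log z)}\beta\,\tilde O(k^3\epsilon^{-3z-2})$, and the running time is dominated by the call to the approximation algorithm plus an $O(nkd)$ pass for assigning points to $C^*$ and performing the ring/group decomposition.
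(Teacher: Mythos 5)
Your proposal follows essentially the same route as the paper's proof: include the $\gamma m$ outliers $L^*$ of $C^*$ in $S$, decompose each cluster into rings and groups via the meta-decomposition of~\cite{braverman2022power}, uniformly sample rings with a layer-cake/VC argument that yields the $(\epsilon,A)$-guarantee simultaneously for all $0\le t\le m$, build two-point coresets on groups with the colored/uncolored split, and—crucially—observe that at most two uncolored groups per cluster can partially intersect the optimal outlier set for a test center $C$ (the paper's \Cref{lem:special}), which lets the remaining uncolored-group error be multiplicative and the special-group error be absorbed into the additive budget before rescaling $\epsilon$. One small imprecision worth noting: the two-point coreset of~\cite{braverman2022power} (and \Cref{def:twopoints} here) places its mass at the closest and furthest points $\pclose^G,\pfar^G$ of $G$ to $c_i^*$—not at $c_i^*$ and ``the mean of $G$'' as you wrote—so that $D_G$ remains a \emph{subset} of $X$ as required by \Cref{def:coreset}; you do, however, correctly identify the two invariants ($w_{D_G}(D_G)=|G|$ and $\cost_z(D_G,c_i^*)=\cost_z(G,c_i^*)$) that drive the uncolored-group analysis, so the overall argument goes through unchanged.
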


\begin{remark} 
	\label{remark:thm}
	By rescaling $\epsilon$ to $\epsilon/\alpha$ in the input of \Cref{alg:main}, we obtain an $\epsilon$-coreset of size $\gamma m+2^{O(z\log z)}\cdot \alpha^{3z+2}\beta\cdot \tilde{O}(k^3\epsilon^{-3z-2})$.
We discuss how to obtain $(\alpha, \beta, \gamma)$-approximations in \Cref{sec:tri_criteria}.
We also note that \Cref{thm:main} actually yields an $\epsilon$-coreset for $\kztC$ \emph{simultaneously} for every integer $0\leq t\leq m$, which implies that our coreset is \emph{composable}. 
Specifically, if for every integer $0\leq t\leq m$, $S_X$ is an $\epsilon$-coreset of $X$ for $\kztC$ and $S_Y$ is an $\epsilon$-coreset of $X$ for $\kztC$, then for every integer $0\leq t\leq m$, $S_X\cup S_Y$ is an $\epsilon$-coreset of $X\cup Y$ for \kztC.
\end{remark}

\subsection{The Framework of \cite{braverman2022power}} \label{sec:review}

\Cref{thm:meta} is a general geometric decomposition theorem for coresets
which we use crucially.
It partitions an arbitrary cluster into $\poly(k/\epsilon)$ \emph{rings}
and merge the remaining rings into $\poly(k/\epsilon)$ \emph{groups} with low contribution to $\cost_z(X_i,c_i^*)$. (See \Cref{fig:ring_group} for an illustration.)
\begin{theorem}[Decomposition into rings and groups~{\cite[Theorem 3.2]{braverman2022power}}]
\label{thm:meta}
Let $X\subset \mathbb{R}^d$ be a set and $c\in \mathbb{R}^d$ be a center point.
There exists an $O(nkd)$-time algorithm that computes a partition of $X$ into two disjoint collections of sets $\mathcal{R}$ and $\mathcal{G}$,
such that $X=(\cup_{R\in \mathcal{R}} R)\cup (\cup_{G\in \mathcal{G}} G)$, where $\mathcal{R}$ is a collection of disjoint rings satisfying
\begin{enumerate}
	\item $\forall R\in \gR$, $R$ is a ring of the form $R=R_i(X,c)$ for some integer $i\in \mathbb{Z}\cup \{-\infty\}$, where $R_i(X,c):=X\cap \ring(c,2^{i-1},2^i)$ for $i\in \mathbb{Z}$ and $R_{-\infty}(X,c):=X\cap \{c\}$
	\item $|\gR|\leq 2^{O(z\log z)}\cdot \tilde{O}(k\epsilon^{-z})$
\end{enumerate}
and $\mathcal{G}$ is a collection of disjoint groups satisfying
\begin{enumerate}
\item $\forall G\in \mathcal{G}$, $G$ is the union of consecutive rings of $(X,c)$.
Formally, $\forall G\in \mathcal{G}$, there exists two integers $-\infty\leq l_G\leq r_G$ such that $G=\cup_{i=l_G}^{r_G} R_i(X,c)$ and the intervals $\{[l_G,r_G],G\in\mathcal{G}\}$ are disjoint for different $G\in \gG$
\item $|\mathcal{G}|\leq 2^{O(z\log z)}\cdot \tilde{O}(k\epsilon^{-z})$,
and $\forall G\in \mathcal{G}$, $\cost_z(G,c)\leq (\frac{\epsilon}{6z})^z\cdot \frac{\cost_z(P,c)}{k\cdot \log (24z/\epsilon)}$.
\end{enumerate}
\end{theorem}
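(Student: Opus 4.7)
The plan is to set a cost threshold $T := (\epsilon/6z)^z \cdot \cost_z(P,c)/(k \log(24z/\epsilon))$, which equals the desired per-group upper bound, and decompose the rings $R_i(X,c)$ according to whether $\cost_z(R_i,c)$ exceeds $T$. First, in a single $O(nd)$ pass over $X$ I would compute $\dist(x,c)$ for every $x \in X$, bucket each point into the appropriate $R_i(X,c)$ (with $R_{-\infty}$ for $x = c$), and accumulate $\cost_z(R_i,c)$ for each non-empty ring. I then call $R_i$ \emph{heavy} if $\cost_z(R_i,c) > T$ and place it into $\mathcal{R}$, and \emph{light} otherwise. Because the rings in $\mathcal{R}$ are disjoint, $|\mathcal{R}| \cdot T < \sum_{R \in \mathcal{R}} \cost_z(R,c) \leq \cost_z(X,c) \leq \cost_z(P,c)$ (using $X \subseteq P$ in the intended application to a cluster), so $|\mathcal{R}| \leq \cost_z(P,c)/T = (6z/\epsilon)^z \cdot k \log(24z/\epsilon) = 2^{O(z\log z)}\tilde{O}(k\epsilon^{-z})$, matching the required bound.

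Next, I would form $\mathcal{G}$ by greedily merging consecutive light rings. The indices not belonging to any heavy ring split into maximal \emph{blocks} of consecutive integers; within each block I scan in increasing order of $i$ while maintaining a current group $G$ and its running cost. When the next light ring $R = R_i$ arrives (empty indices are absorbed with no cost), if $\cost_z(G,c) + \cost_z(R,c) \leq T$ I add $R$ to $G$, and otherwise I close $G$ and open a new group consisting of $R$ alone. Every completed group satisfies $\cost_z(G,c) \leq T$ by construction and is a union $\cup_{i=l_G}^{r_G} R_i(X,c)$ of consecutive rings with pairwise-disjoint index intervals across different groups, matching all the stated properties of $\mathcal{G}$. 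The degenerate $R_{-\infty}$ contributes zero cost and can simply be absorbed into the lowest-indexed group without affecting anything.

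The step I expect to require the most care is bounding $|\mathcal{G}|$. The key observation is that whenever the greedy scan closes a group $G_j$ and opens $G_{j+1}$ starting with a light ring $R$, the closing condition forces $\cost_z(G_j,c) + \cost_z(R,c) > T$, hence $\cost_z(G_j,c) + \cost_z(G_{j+1},c) > T$. Therefore any two consecutive groups within a single block have combined cost exceeding $T$, so a block of total cost $C$ produces at most $2C/T + 2$ groups. Summing over blocks and using that the number of blocks is at most $|\mathcal{R}|+1$, I obtain $|\mathcal{G}| \leq 2\cost_z(X,c)/T + 2(|\mathcal{R}|+1) = 2^{O(z\log z)}\tilde{O}(k\epsilon^{-z})$, matching the required bound. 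Beyond this counting argument the remaining work is routine: correctly handling empty indices between non-empty rings, the $R_{-\infty}$ case, and verifying the running time (the $O(nd)$ distance/bucketing pass dominates, and the extra $k$ in the stated $O(nkd)$ presumably accounts for the calling algorithm invoking this decomposition once per center in the coarse solution $C^*$).
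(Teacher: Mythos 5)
This statement is imported verbatim (as Theorem~3.2) from~\cite{braverman2022power}; the paper does not reprove it, so there is no in-paper proof to compare against. Your reconstruction is correct and is in essence the construction one would expect from the cited source: threshold the per-ring cost at $T := (\eps/6z)^z\cost_z(P,c)/(k\log(24z/\eps))$, keep the heavy rings as singletons in $\mathcal{R}$, and greedily pack consecutive light rings into groups of cost at most $T$. Both counting arguments go through cleanly. The bound $|\mathcal{R}| < \cost_z(P,c)/T$ follows from disjointness of the heavy rings. For $|\mathcal{G}|$, your observation that consecutive groups within a block have combined cost exceeding $T$ gives at most $2C/T+1$ groups per block of cost $C$, and the number of blocks is at most $|\mathcal{R}|+1$; summing over blocks yields the stated bound (you wrote $2C/T+2$, which is slightly loose but still correct). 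Your handling of $R_{-\infty}$, empty indices, and the $O(nd)$ vs.\ $O(nkd)$ running time are all reasonable. One very small caveat worth noting explicitly: the theorem statement references $\cost_z(P,c)$ although only $X$ is introduced, so $P \supseteq X$ must be taken as an ambient set carried in from the calling context (as you correctly assumed in writing $\cost_z(X,c)\le\cost_z(P,c)$); in the application one applies this to $X=P_i$ and the stronger normalization by $\cost_z(P_i,c_i^*)$ is what is used downstream in Lemma~\ref{lem:additive}, so nothing is lost.
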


Rings and groups are inherently different geometric objects,
hence they require different coreset construction methods. As in \cite{braverman2022power}, uniform sampling is applied on rings,
but a \emph{two-point coreset}, whose construction is defined in \Cref{def:twopoints}, is applied for each group.
Our main algorithm (\Cref{alg:main}) also follows this general strategy.

\begin{definition}[Construction of two-point coreset~\cite{braverman2022power}] \label{def:twopoints}
	For a group $G\subset \mathbb{R}^d$ and a center point $c\in \mathbb{R}^d$, let $\pfar^G$ and $\pclose^G$ denote the furthest and closest point to $c$ in $G$. For every $p\in G$, compute the unique $\lambda_p\in [0,1]$ such that $\dist^z(p,c)=\lambda_p\cdot \dist^z(\pclose^G,c)+(1-\lambda_p)\cdot \dist^z(\pfar^G,c)$. Let $D_G=\{\pfar^G,\pclose^G\}$, $w_{D_G}(\pclose^G)=\sum_{p\in G} \lambda_p$, and $w_{D_G}(\pfar^G)=\sum_{p\in G} (1-\lambda_p)$. $D_G$ is called the \emph{two-point} coreset of $G$ with respect to $c$.
\end{definition}

By definition, we can verify that $w_{D_G}(D_G) = |G|$ and $\cost_z(D_G, c) = \cost_z(G,c)$, which are useful for upper bounding the error induced by such two-point coresets.

\begin{figure}[t]
    \centering
    \includegraphics[height=0.2\textheight]{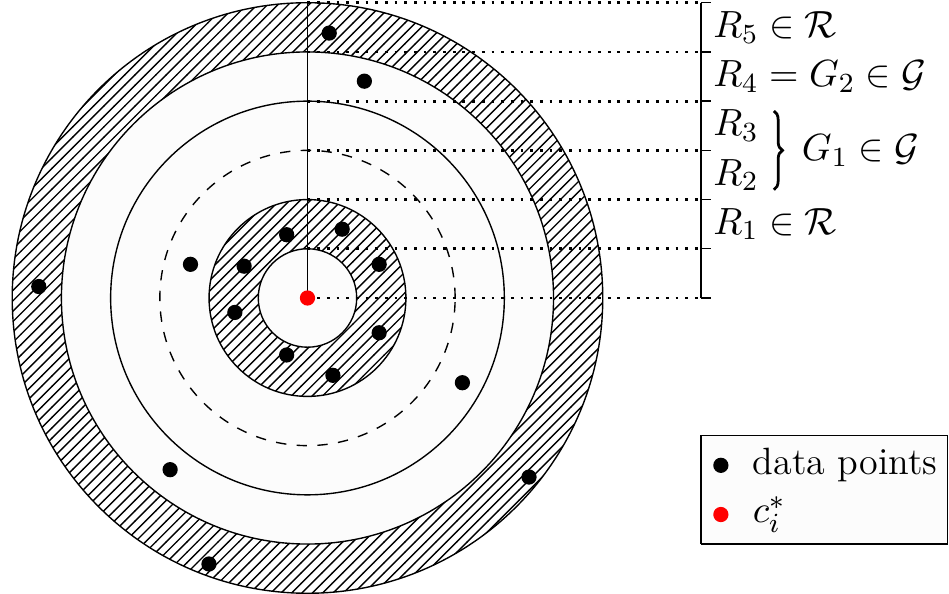}
    \caption{Illustration of \Cref{thm:meta} (plotted distance is the logarithm of the real distance).}
    \label{fig:ring_group}
\end{figure}

\subsection{Proof of Theorem~\ref{thm:main}}
\label{sec:proof_main}

\paragraph{Coreset Construction Algorithm}
We present our main algorithm in \Cref{alg:main}.
In Line~\ref{alg:L} and Line~\ref{alg:outlier}, the set $L^*$ of outliers of $C^*$ is the set of $\gamma m$ furthest points to $C^*$ and $L^*$ is directly added into the coreset $S$.
In Line~\ref{alg:cluster} and Line~\ref{alg:decomp}, the inliers $P\setminus L^*$ are decomposed into $\beta k$ clusters with respect to $C^*$ and the linear time decomposition algorithm of Theorem~\ref{thm:meta} is applied in each cluster.
In Line~\ref{alg:uniform} and Line~\ref{alg:twopoints}, similar to \cite{braverman2022power}, a uniform sampling and a two-point coreset (see \Cref{def:twopoints})
are applied in constructing coresets for rings and groups, respectively.

\begin{algorithm}
    \caption{Coreset Construction for \kzmC}
   \label{alg:main}
    \begin{algorithmic}[1]
        \Require dataset $P\subset \mathbb{R}^d$, $z \geq 1$, integer $k,m\geq 1$, an $(\alpha,\beta,\gamma)$-approximation $C^*=\{c_i^*\}_{i=1}^{\beta k}$
\State  let $L^*\gets \argmin_{|L|=\gamma m} \cost_z(P\setminus L,C^*)$ denote the set of $\gamma m$ outliers \label{alg:L}
        \State  add $L^*$ into $S$ and set $\forall x\in L^*, w_S \gets 1$ \label{alg:outlier}
        \State  partition $P\setminus L^*$ into $\beta k$ clusters $P_1,...,P_{\beta k}$ such that $P_i $ is the subset of $P \setminus L^*$ closest to $c_i^*$ \label{alg:cluster}
        \State for each $i\in [\beta k]$,  apply the decomposition of \Cref{thm:meta} to $(P_i,c_i^*)$ and obtain a collection $\mathcal{R}_i$ of disjoint rings and a collection $\mathcal{G}_i$ of disjoint groups \label{alg:decomp}
        \State for $i\in [\beta k]$ and every ring $R\in\mathcal{R}_i$, take a uniform sample $Q_{R}$ of size $2^{O(z\log z)}\cdot\tilde{O}(\frac{k}{\epsilon^{2z+2}})$ from $R$,
		set $\forall x\in Q_{R},w_{Q_{R}}(x) \gets \frac{|R|}{|Q_{R}|}$,
		and add $(Q_{R},w_{Q_{R}})$ into $S$\label{alg:uniform}
        \State  for $i\in [\beta k]$ and every group $G\in \mathcal{G}_i$ center $c_i^*$,
		construct a two-point coreset $(D_{G},w_{D_{G}})$ of $G$ as in \Cref{def:twopoints} and add $(D_{G},w_{D_{G}})$ into $S$
		\label{alg:twopoints}
        \State return $(S,w_S)$
    \end{algorithmic}
\end{algorithm}

\paragraph{Error Analysis}

Recall that $P$ is decomposed into 3 parts, the outliers $L^*$, the collection of rings, and the collection of groups.
We prove the coreset property for each of the 3 parts and claim the union yields an $\epsilon$-coreset of $P$ for \kzmC.
As $L^*$ is identical in the data set $P$ and the coreset $S$, we only have to put effort in the rings and groups.
We first introduce the following relaxed coreset definition which allows additive error.
\begin{definition}\label{def:additivecoreset}
Let $P\subset \mathbb{R}^d$, $0<\epsilon<1$ and $A\geq 0$, a weighted set $(S,w_S)$ is an $(\epsilon,A)$-coreset of $X$ for \kztC if for every $C\subset \mathbb{R}^d,|C|=k$,
$$
|\cost_z^{(t)}(P,C)-\cost_z^{(t)}(S,C)|\leq \epsilon\cdot \cost_z^{(t)}(P,C)+\epsilon \cdot A.
$$
\end{definition}
This allowance of additive error turns out to be crucial in our analysis,
and eventually we are able to charge the total additive error to the (near-)optimal cost, which enables us to obtain the coreset (without additive error).

The following lemma is a simple but useful way to bound the error between coresets and data sets and the proof idea is similar to Lemma 3.5 of \cite{braverman2022power} which relies on coupling the mass and applying generalized triangle inequality \Cref{lem:gentri}.
This technical lemma is used in many places in our entire proof.
\begin{lemma}
    \label{lem:naive}
	Let $B\subset P_i$ be either a ring or a group. Assume $(U,w_U)$ and $(V,w_V)$ are two weighted subsets of $B$ such that $w_U(U)=w_V(V)=N$, then for every $C\subset \mathbb{R}^d,|C|=k$ we have
	\begin{eqnarray}
	|\cost_z(U,C)-\cost_z(V,C)|\leq \epsilon \cdot \cost_z(U,C)+(\frac{6z}{\epsilon})^{z-1}\cdot \big(\cost_z(U,c_i^*)+\cost_z(V,c_i^*)\big).
	\end{eqnarray}
\end{lemma}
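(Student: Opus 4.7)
The plan is to prove the lemma via a coupling (transport-plan) argument between the weighted sets $(U,w_U)$ and $(V,w_V)$. Since $w_U(U)=w_V(V)=N$, there exists a nonnegative function $\pi:U\times V\to \mathbb{R}_{\geq 0}$ with marginals $\sum_{v\in V}\pi(u,v)=w_U(u)$ and $\sum_{u\in U}\pi(u,v)=w_V(v)$; such $\pi$ exists because the two total masses are equal to $N$. Under this coupling,
\[
\cost_z(U,C) \;=\; \sum_{u\in U,v\in V} \pi(u,v)\,\dist(u,C)^z, \qquad \cost_z(V,C) \;=\; \sum_{u\in U,v\in V} \pi(u,v)\,\dist(v,C)^z,
\]
so the task reduces to a per-pair comparison of $\dist(u,C)^z$ versus $\dist(v,C)^z$.

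For each pair $(u,v)$ with $u,v\in B\subseteq P_i$, I would apply the triangle inequality $\dist(u,C)\leq \dist(v,C)+\dist(u,v)$ and then invoke part~(2) of Lemma~\ref{lem:gentri} with $\delta=\epsilon$ to get
\[
\dist(u,C)^z \;\leq\; (1+\epsilon)\,\dist(v,C)^z \,+\, (3z/\epsilon)^{z-1}\,\dist(u,v)^z.
\]
Routing through $c_i^*$ gives $\dist(u,v)\leq \dist(u,c_i^*)+\dist(v,c_i^*)$, and convexity of $t\mapsto t^z$ for $z\geq 1$ yields $\dist(u,v)^z \leq 2^{z-1}\bigl(\dist(u,c_i^*)^z+\dist(v,c_i^*)^z\bigr)$. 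Combining, the residual coefficient becomes $(3z/\epsilon)^{z-1}\cdot 2^{z-1}=(6z/\epsilon)^{z-1}$, exactly the constant in the statement. Summing the per-pair inequality weighted by $\pi$ and invoking the marginal identities produces
\[
\cost_z(U,C)\;\leq\;(1+\epsilon)\,\cost_z(V,C)\,+\,(6z/\epsilon)^{z-1}\bigl(\cost_z(U,c_i^*)+\cost_z(V,c_i^*)\bigr).
\]

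To upgrade this to the absolute-value bound, I would repeat the argument with the roles of $u$ and $v$ swapped to obtain the reverse inequality $\cost_z(V,C)\leq (1+\epsilon)\,\cost_z(U,C)+(6z/\epsilon)^{z-1}\bigl(\cost_z(U,c_i^*)+\cost_z(V,c_i^*)\bigr)$. A trivial case split on whether $\cost_z(U,C)\geq \cost_z(V,C)$ then shows that the gap is always at most $\epsilon\cdot \cost_z(U,C)$ plus the additive term, matching the stated conclusion. The argument is not technically deep; the main bookkeeping obstacle is assembling $(6z/\epsilon)^{z-1}$ as the product of the $2^{z-1}$ convexity factor and the $(3z/\epsilon)^{z-1}$ from Lemma~\ref{lem:gentri}(2), and choosing which direction to apply so that the multiplicative slack is charged to $\cost_z(U,C)$ rather than $\cost_z(V,C)$.
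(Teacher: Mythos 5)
Your proposal is correct and follows essentially the same route as the paper: construct a mass-preserving coupling between $U$ and $V$, apply the generalized triangle inequality (\Cref{lem:gentri}~(2) with $\delta=\epsilon$) per matched pair, route $\dist(u,v)$ through $c_i^*$, and absorb the $2^{z-1}$ convexity factor into $(6z/\epsilon)^{z-1}$. The only cosmetic difference is that the paper bounds $|\dist(u,C)^z-\dist(v,C)^z|$ directly in one pass (keeping the $\epsilon$-term attached to $\dist(u,C)^z$), whereas you prove the two one-sided aggregate inequalities and case-split on the sign of $\cost_z(U,C)-\cost_z(V,C)$; both yield the stated bound.
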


\begin{proof}
	Since $w_U(U)=w_V(V)$, there must exist a matching $M:U\times V\rightarrow \mathbb{R}_{\geq 0}$ between the mass of $U$ and $V$. So $\forall u\in U, \sum_{v\in V} M(u,v)=w_U(u)$ and $\forall v\in V, \sum_{u\in U} M(u,v)=w_V(v)$. By generalized triangle inequality \Cref{lem:gentri} we have,
	\begin{eqnarray*}
		&\quad&|\cost_z(U,C)-\cost_z(V,C)|\\
		&\leq& \sum_{u\in U}\sum_{v\in V}M(u,v)|\dist(u,C)^z-\dist(v,C)^z|\\
		&\leq &\sum_{u\in U}\sum_{v\in V}M(u,v)\big(\epsilon\cdot \dist(u,C)^z+(\frac{3z}{\epsilon})^{z-1}\cdot (\dist(u,C)-\dist(v,C))^z\big)\\
		&\leq &\epsilon\sum_{u\in U} w_U(u)\cdot \dist(u,C)^z+(\frac{3z}{\epsilon})^{z-1}\cdot\sum_{u\in U}\sum_{v\in V} M(u,v)\cdot (\dist(u,c_i^*)+\dist(v,c_i^*))^z\\
		&\leq &\epsilon \cdot \cost_z(U,C)+(\frac{3z}{\epsilon})^{z-1}\cdot (\sum_{u\in U} w_U(u)\cdot 2^{z-1}\cdot\dist(u,c_i^*)^z+\sum_{v\in V}w_V(v)\cdot 2^{z-1}\cdot \dist(v,c_i^*))\\
		&\leq &\epsilon\cdot \cost_z(U,C)+(\frac{6z}{\epsilon})^{z-1}\big(\cost_z(U,c_i^*)+\cost_z(V,c_i^*)\big)
	\end{eqnarray*}
\end{proof}

The following two are the key lemmas for the proof of \Cref{thm:main},
where we analyze the guarantee of
the uniform-sampling coresets for rings (\Cref{lem:intro:uniform}) and the two-point coresets (\Cref{lem:intro:twopoints}).

\begin{lemma}[Coresets for rings] \label{lem:intro:uniform}
Let $Q=\bigcup_{i\in[\beta k]}\bigcup_{R\in\mathcal{R}_i} Q_{R}$ denote the coreset of the rings $\Rall=\bigcup_{i\in [\beta k]}\bigcup_{R\in \mathcal{R}_i} R$, constructed by uniform sampling as in Line~\ref{alg:uniform} of Algorithm~\ref{alg:main}, then $\forall t,0\leq t \leq m$, $Q$ is an $\big(\epsilon,\cost_z(\Rall,C^*)\big)$-coreset of $\Rall$ for \kztC.
\end{lemma}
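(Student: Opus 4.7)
The plan is to express $\cost_z^{(t)}$ via the layer-cake integration formula (\Cref{fact:integral}) and control the resulting ball-range-counting errors using the uniform-convergence guarantee for uniform sampling (\Cref{lem:approximation}).

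First, \Cref{fact:integral} gives, for every weighted set $X$ and every center set $C$,
$$\cost_z^{(t)}(X, C) = \int_0^\infty z r^{z-1} \max\bigl\{0,\, w_X(X) - t - w_X(X\cap\Balls(C,r))\bigr\}\, dr,$$
and the crucial feature is that $t$ enters only as a constant shift inside $\max(0,\cdot)$; the counting function $r\mapsto w_X(X\cap\Balls(C,r))$ does not depend on $t$. Combined with the $1$-Lipschitz property $|\max(0,a-x)-\max(0,a-y)|\leq |x-y|$ applied to both $X=\Rall$ and $X=Q$ (whose total masses are both $|\Rall|$), this gives a $t$-independent bound on $|\cost_z^{(t)}(\Rall,C)-\cost_z^{(t)}(Q,C)|$ in terms of the ball-range counting error, which is what will yield the simultaneous-over-$t$ guarantee.

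Next, I would apply \Cref{lem:approximation} to each ring $R\in\mathcal{R}_i$ independently: since $Q_R$ is a uniform sample of size $2^{O(z\log z)}\tilde O(k/\epsilon^{2z+2})$ from $R$, reweighted so that $w_{Q_R}(Q_R)=|R|$, with high probability the ball-range count is preserved uniformly over all $C$ of size $k$ and all $r\geq 0$, namely $\bigl|\,|R\cap\Balls(C,r)| - w_{Q_R}(Q_R\cap\Balls(C,r))\,\bigr|\leq \epsilon'|R|$ for a suitably small $\epsilon'=\Theta(\epsilon/(z/\epsilon)^{O(z)})$; a union bound preserves this simultaneously across all rings. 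Since $R\subseteq\ring(c_i^*, 2^{i-1}, 2^i)$, the counting functions agree outside $r\in[0,\dist(c_i^*,C)+2^i]$, so the per-ring integral contribution is at most $\epsilon'|R|(\dist(c_i^*,C)+2^i)^z$.

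Third, I would convert the per-ring bound into the $(\epsilon, A)$-coreset form. Applying \Cref{lem:gentri} with $\delta=\Theta(\epsilon/z)$ splits $(\dist(c_i^*,C)+2^i)^z$ into an $O(2^{iz})$ piece, which is absorbed into $\epsilon\cost_z(R,c_i^*)$ via $\dist(x,c_i^*)\geq 2^{i-1}$ for $x\in R$, and a $(z/\epsilon)^{O(z)}\dist(c_i^*,C)^z$ piece, which I charge to $\cost_z(R,C)$ by applying \Cref{lem:gentri} again to the triangle inequality $\dist(c_i^*,C)\leq\dist(x,c_i^*)+\dist(x,C)$ averaged over $x\in R$. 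To get the correct multiplicative term $\cost_z^{(t)}(\Rall,C)$ rather than $\cost_z(\Rall,C)$, I would work ring-by-ring with local outlier counts, using the min-sum decomposition $\cost_z^{(t)}(\Rall,C)=\min_{\sum t_R=t}\sum_R \cost_z^{(t_R)}(R,C)$ (and its analogue for $Q$) to aggregate the per-ring bounds.

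The main obstacle I anticipate is the per-ring bound in the regime $\dist(c_i^*,C)\gg 2^i$ with $t_R$ close to $|R|$: then only a few inliers remain in $R$, and $\cost_z^{(t_R)}(R,C)$ can be too small to absorb the raw error $\epsilon'|R|\dist(c_i^*,C)^z$. To fix this, I would use the refined Lipschitz bound that both integrands inside the layer-cake integral are pointwise at most $|R|-t_R$, so the per-ring error is capped by $(|R|-t_R)(d_{\max}^z-d_{\min}^z)$ where $d_{\min}, d_{\max}$ are the minimum and maximum distances from $R$ to $C$; in this regime $d_{\max}-d_{\min}\leq 2\cdot 2^i\ll d_{\min}\approx\dist(c_i^*,C)$, so a mean-value-theorem bound on $d_{\max}^z-d_{\min}^z$ combined with $\cost_z^{(t_R)}(R,C)\geq (|R|-t_R) d_{\min}^z$ yields the required $\epsilon$-multiplicative bound after appropriate choice of $\epsilon'$.
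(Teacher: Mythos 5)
Your proposal is essentially correct and follows the paper's route: express $\cost_z^{(t)}$ via \Cref{fact:integral}, control the ball-range counting error through \Cref{lem:approximation} (at resolution $(\epsilon/12z)^z$ per ring), observe the two counting functions agree outside $[\Tclose,\Tfar]$ with $\Tfar-\Tclose\leq 4r$, split the resulting integral error into a multiplicative piece and an additive $\epsilon r^z|R|$ piece via \Cref{lem:gentri}, and aggregate ring-by-ring over local outlier budgets. You also correctly flag the one genuine obstacle — the regime where the inlier mass $|R|-t_R$ drops below $(\epsilon/12z)^z|R|$, so $\cost_z^{(t_R)}(R,C)$ cannot absorb the raw counting error of order $\epsilon'|R|\Tfarz$.

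Your treatment of that case is the one real departure from the paper. The paper leaves the integral framework there: it fixes explicit outlier sets $L_R\subset R$ and $L_Q\subset Q_R$ of weight $t$, applies the coupling lemma (\Cref{lem:naive}) to $R-L_R$ and $Q_R-L_Q$, and uses that the small residual inlier mass (at most $(\epsilon/12z)^z|R|$) lies entirely within distance $2r$ of $c_i^*$, so the coupling's additive term is $\leq \epsilon r^z|R|$. Your alternative stays inside the integral: cap the pointwise integrand error by $\min\{(\epsilon/12z)^z|R|,\ |R|-t_R\}$, then combine $\cost_z^{(t_R)}(R,C)\geq(|R|-t_R)\Tclosez$ with \Cref{lem:gentri} applied to $\Tfarz-\Tclosez$. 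Both are valid; your version has the modest advantage of collapsing the paper's two-case analysis into a single bound. One thing worth saying explicitly rather than leaving implicit: \Cref{lem:approximation} gives a $d$-dependent sample size $\tilde O(kd/\epsilon'^2)$, and the $d$-free size $2^{O(z\log z)}\tilde O(k/\epsilon^{2z+2})$ you quote already presupposes the terminal-embedding / iterative-size-reduction step that the paper invokes to trade the $O(d)$ for $O(\log(k/\epsilon)/\epsilon^2)$.
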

\begin{proof}
	The proof can be found in \Cref{sec:uniform}.
\end{proof}

\begin{lemma}[Two-point coresets for groups]\label{lem:intro:twopoints}
Let $D=\bigcup_{i\in[\beta k]}\bigcup_{G\in\mathcal{G}_i} D_{G}$ denote the two-point coresets of the groups $\Gall=\bigcup_{i\in [\beta k]}\bigcup_{G\in\mathcal{G}_i} G$, as in Line~\ref{alg:twopoints} of Algorithm~\ref{alg:main}, then for every $t,0\leq t\leq m$, $D$ is an $(\epsilon,\cost_z\big(P\setminus L^*,C^*)\big)$-coreset of $\Gall$ for \kztC.
\end{lemma}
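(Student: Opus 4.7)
The plan is to adapt the colored/uncolored group analysis of \cite{braverman2022power} to the outlier setting. Fix a center set $C \subset \R^d$ with $|C|=k$ and an integer $0 \le t \le m$, and let $(L, w_L) \in \gL_{\Gall}^{(t)}$ attain $\cost_z^{(t)}(\Gall, C) = \cost_z(\Gall - L, C)$. To verify the bound in \Cref{def:additivecoreset}, I will construct a matching weighted outlier set $(L', w_{L'}) \in \gL_D^{(t)}$ on the coreset side in which, for every group $G$, $w_{L'}(D_G) = w_L(G)$ (removing mass preferentially from $\pfar^G$), and bound $|\cost_z(\Gall - L, C) - \cost_z(D - L', C)|$ group-by-group. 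Running the symmetric construction starting from an optimal outlier set on $D$ gives the other direction, so it suffices to analyze the error group-by-group in this one direction.

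\textbf{Coloring and colored groups.} Following \cite{braverman2022power}, I would color each $G \in \gG_i$ with respect to $C$ according to whether some point of $C$ lies within a prescribed $\eps$-fraction of $2^{r_G}$ from $G$, and establish, along the same lines as \cite{braverman2022power}, that (a) there are only $\poly(k/\eps)$ colored groups per cluster, and (b) within an uncolored group all points have $\dist(\cdot, C)$ within a $(1\pm O(\eps))$ factor of each other. For every colored group $G$, the pair $(G - L|_G,\; D_G - L'|_{D_G})$ has equal total weight, so \Cref{lem:naive} bounds the cost-difference by $\eps\cdot\cost_z(G - L|_G, C) + (6z/\eps)^{z-1}\bigl(\cost_z(G, c_i^*) + \cost_z(D_G, c_i^*)\bigr)$. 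Since $\cost_z(D_G, c_i^*)=\cost_z(G, c_i^*)$ by \Cref{def:twopoints} and each $\cost_z(G, c_i^*) \le (\eps/6z)^z\cdot\cost_z(P_i, c_i^*)/\bigl(k\log(24z/\eps)\bigr)$ by \Cref{thm:meta}(2), summing over the $\poly(k/\eps)$ colored groups per cluster and the $\beta k$ clusters absorbs the additive term into $\eps\cdot\cost_z(P\setminus L^*, C^*)$, as required.

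\textbf{Uncolored groups, and the main obstacle.} For an uncolored $G$ that is either fully outside $L$ ($w_L(G)=0$) or fully inside $L$ ($w_L(G)=w_{P_i}(G)$), the uniform-distance property from uncoloredness, combined with the two-point coreset's preservation of both total weight and $\cost_z(\cdot, c_i^*)$, yields $|\cost_z(G - L|_G, C) - \cost_z(D_G - L'|_{D_G}, C)| \le O(\eps)\cdot \cost_z(G - L|_G, C)$ by a direct calculation. The new difficulty, absent in \cite{braverman2022power}, is uncolored groups $G$ that are \emph{partially} cut by outliers, i.e.\ $0 < w_L(G) < w_{P_i}(G)$. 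The key structural claim I need is that globally only $O(1)$ such partial uncolored groups exist: uncoloredness forces all points of $G$ to have nearly-identical $\dist(\cdot, C)$, so the ``outlier threshold'' (the distance-to-$C$ value at which $L$ stops accepting points) can cut at most one uncolored group per cluster; moreover, since each $\gG_i$ is assembled from \emph{consecutive} rings around $c_i^*$ by \Cref{thm:meta}(1), the uncolored groups of $\gG_i$ that receive any outlier mass must form a consecutive tail, and every group strictly inside that tail is fully in $L$. A global ordering argument across the $\beta k$ clusters then bounds the number of partial groups by an absolute constant (edge effects yielding at most two). For each such partial $G$, the error is controlled exactly as in the colored case via \Cref{lem:naive} and \Cref{thm:meta}(2) and absorbed into the additive term. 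I expect this structural claim about partial uncolored groups to be the technical crux, as it requires coupling the outlier-threshold induced by $C$ with the ring-scale decomposition anchored at $c_i^*$; once it is in hand, the remaining work is a routine combination of \Cref{lem:naive} and \Cref{thm:meta} that mirrors the non-robust case.
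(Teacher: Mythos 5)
Your approach is essentially the paper's: decompose into colored and uncolored groups, handle colored groups by charging to the additive budget via \Cref{lem:naive} and \Cref{thm:meta}(2), handle uncolored groups multiplicatively, and isolate the ``partially cut'' uncolored groups as the new obstacle that must be few in number. You correctly identified the key structural claim as the crux, and the consecutive-rings intuition is the right one. Two points are worth flagging, one a suboptimality and one a real error in the sketch.

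First, the suboptimality: for the direction going from $\Gall$ to $D$, the paper does \emph{not} remove mass preferentially from $\pfar^G$. Instead it pushes each outlier $x\in L\cap G$ onto the two-point coreset by its convex coefficient $\lambda_x$, placing weight $\lambda_x$ on $\pclose^G$ and $1-\lambda_x$ on $\pfar^G$. This choice preserves not only total weight but also $\cost_z(\cdot,c_i^*)$ between $L\cap G$ and the removed mass in $D_G$, which is exactly condition (1) of \Cref{lem:uncol}. Consequently \emph{every} uncolored group --- partially cut or not --- is handled multiplicatively in that direction, and the partial-group analysis is only needed in the reverse direction (from $D$ to $\Gall$), where the outlier choice on $D$ is adversarial and you only control the matching weight. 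Your $\pfar$-first removal only preserves weight, so you would be forced to run the special-group analysis in both directions. That still works, but it doubles the bookkeeping unnecessarily.

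Second, and more substantively: your claim of a \emph{global} $O(1)$ bound on partially-cut uncolored groups (``an absolute constant ... at most two'') is false as stated. The correct statement, and what the paper proves in \Cref{lem:special}, is at most $2$ per cluster $\gG_i$, so $O(\beta k)$ overall. Different clusters have different anchors $c_i^*$, and the outlier threshold $\tau$ on $\dist(\cdot,C)$ can cut through a group in \emph{each} cluster; there is no global ordering by $\dist(\cdot,c_i^*)$ that would forbid this. Your intuition that ``at most one'' per cluster would also be slightly too strong: the identification $\dist(\cdot,C)\approx\dist(\cdot,c_i^*)$ holds only up to a $(1\pm O(\eps))$ factor, so two adjacent uncolored groups can both straddle $\tau$, but a third that is separated from the first by a full ring factor of $2$ cannot --- this is exactly the paper's three-group contradiction. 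Fortunately, a per-cluster bound is all that is needed: the additive budget from \Cref{thm:meta}(2) is $\eps\cdot\cost_z(P_i,c_i^*)/(2k\log(z/\eps))$ per group and sums across clusters, so $(k\log(z/\eps)+2)$ charged groups per cluster still accumulates to $\eps\cdot\cost_z(P\setminus L^*,C^*)$. Your write-up should drop the global claim and state the per-cluster bound instead.
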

\begin{proof}
	The proof can be found in \Cref{sec:twopoints}.
\end{proof}

\begin{proof}[Proof of Theorem~\ref{thm:main}]
Fix a center $C\subset \mathbb{R}^d,|C|=k$ and fix a $t\in [0,m]$, we first prove that
\[
\cost_z^{(t)}(S,C)\leq (1+\epsilon)\cost_z^{(t)}(P,C)+\epsilon\cdot \cost_z(P\setminus L^*,C^*).
\]
To this end, assume $L\subset X$ is the set of outliers for $C$. Let $t_R=|L\cap \Rall|,t_G=|L\cap \Gall|$. By \Cref{lem:intro:uniform} and \Cref{lem:intro:twopoints}, there exists weighted subset $T_Q\subset Q,T_D\subset D$ such that, $w_{T_Q}(T_Q)=t_R$, $w_{T_D}(T_D)=t_G$,
\begin{eqnarray}
\cost_z(Q-T_Q,C)\leq (1+\epsilon)\cost_z(\Rall-(L\cap \Rall),C)+\epsilon\cdot \cost_z(\Rall,C^*)
\end{eqnarray} and
\begin{eqnarray}
\cost_z(D-T_D,C)\leq (1+\epsilon)\cost_z(\Gall-(L\cap \Gall),C)+\epsilon\cdot \cost_z(P\setminus L^*,C^*)
\end{eqnarray}
Define a weighted subset $(T,w_T)$ of $S$, such that $T=(L\cap L^*)\cup T_Q\cup T_D$. Then $w_T(T)=t$ and
\begin{eqnarray*}
\cost_z^{(t)}(S,C)&\leq& \cost_z(S-T,C)\\
&=&\cost_z(L^*-(L\cap L^*),C)+\cost_z(Q-T_Q,C)+\cost_z(D-T_D,C)\\
&\leq& \cost_z(L^*-(L\cap L^*),C)+(1+\epsilon)\cost_z(\Rall-(L\cap \Rall),C)\\
&+&\epsilon\cdot \cost_z(\Rall,C^*)
+ (1+\epsilon)\cost_z(\Gall-(L\cap \Gall),C)\\
&+&\epsilon\cdot \cost_z(P\setminus L^*,C^*)\\
&\leq & (1+\epsilon)\cost_z(P-L,C)+O(\epsilon)\cdot \cost_z(P\setminus L^*,C^*)\\
&\leq & (1+O(\alpha\cdot \epsilon))\cost_z^{(t)}(P,C).
\end{eqnarray*}
Similarly, we can also obtain that
$\cost_z^{(t)}(P,C)\leq (1+O(\alpha\cdot \epsilon))\cost_z^{(t)}(S,C)$ for any $0\leq t\leq m$.
It remains to scale $\epsilon$ by a universal constant.
\end{proof}

\subsection{Proof of Lemma~\ref{lem:intro:uniform}: Error Analysis of Uniform Sampling} \label{sec:uniform}

As with recent works in Euclidean coresets \cite{cohen-addad2021improved,cohen2021new,cohen2022towards, braverman2022power}, we make use of an iterative size reduction \cite{BJKW21} and a terminal embedding technique \cite{NN19}, which allows us to trade the factor $O(d)$ in coreset size bound with a factor of $O\big(\frac{\log (k/\epsilon)}{\epsilon^2}\big)$.
Hence, it suffices to prove that a uniform sample of size $\tilde{O}(\frac{kd}{\epsilon^{2z}})$ yields the desired coreset.

The following simple formula can be obtained via integration by parts.
\begin{fact} \label{fact:integral}
Let $(Y,w_Y)$ denote a weighted dataset and $C\subseteq \mathbb{R}^d,|C|=k$ then for every $0\leq t\leq w_Y(Y)$ $$\cost^{(t)}_z(P,C)=\int_{0}^{\infty} z\cdot u^{z-1}\cdot \bigg(w_Y(Y)-m-w_Y\big(\mathrm{Balls}(C,u)\cap Y\big)\bigg)^+ du. $$
\end{fact}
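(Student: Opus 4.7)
The plan is to establish the identity via the layer-cake principle, which is essentially the ``integration by parts'' alluded to in the paper. The starting point is the elementary identity $a^z = \int_0^a z u^{z-1}\,du = \int_0^\infty z u^{z-1}\mathbf{1}[u < a]\,du$ for any $a \geq 0, z \geq 1$. Applied to $a = \dist(y, C)$ for each point $y$ contributing to the cost, this will convert $\cost_z^{(t)}(Y, C)$ into a double (sum of) integrals that can be swapped via Fubini.

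The one genuinely non-trivial step is identifying the optimal weighted outliers explicitly. I would argue that the minimizer $(L, w_L) \in \gL_Y^{(t)}$ in the definition of $\cost_z^{(t)}(Y, C)$ is obtained by the greedy rule: sort points of $Y$ by decreasing distance to $C$ and assign outlier weight $t$ greedily from the top. This follows from a standard exchange argument (swapping any outlier mass at distance $d$ with inlier mass at distance $d' > d$ weakly decreases the cost). With this optimal $(L, w_L)$ in hand, for any threshold $u \geq 0$ the \emph{inlier mass strictly farther than $u$ from $C$} equals
\[
N(u) := \bigl( w_Y(Y) - t - w_Y(\mathrm{Balls}(C,u)\cap Y) \bigr)^+.
\]
Indeed, the total $Y$-mass outside $\mathrm{Balls}(C,u)$ is $w_Y(Y) - w_Y(\mathrm{Balls}(C,u)\cap Y)$; at most $t$ units of it are tagged as outliers (all of it when this quantity is $\leq t$, exactly $t$ otherwise), so the inlier remainder is precisely the positive part above.

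The final step is a one-line computation. Writing $w_Y^{\mathrm{in}}(y) := w_Y(y) - w_L(y)$ for the inlier weight at $y$, one has
\[
\cost_z^{(t)}(Y, C) = \sum_{y \in Y} w_Y^{\mathrm{in}}(y)\cdot \dist(y, C)^z = \sum_{y \in Y} w_Y^{\mathrm{in}}(y) \int_0^\infty z u^{z-1}\mathbf{1}[\dist(y,C) > u]\,du,
\]
and Fubini (all integrands nonnegative) swaps sum and integral to give
\[
\cost_z^{(t)}(Y, C) = \int_0^\infty z u^{z-1}\Bigl(\sum_{y \in Y} w_Y^{\mathrm{in}}(y)\mathbf{1}[\dist(y,C) > u]\Bigr)du = \int_0^\infty z u^{z-1} N(u)\,du,
\]
which is precisely the claimed formula (modulo the obvious typo $P \leftrightarrow Y$, $m \leftrightarrow t$ in the statement).

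The only real obstacle is the outlier-identification step: one must verify that greedy furthest-mass removal is optimal in the \emph{weighted} setting, and that the positive-part $(\,\cdot\,)^+$ correctly captures the regime of small $u$ where the entire tail of distance-$>u$ mass is absorbed into the outlier budget. Once that is nailed down, the rest is a textbook layer-cake/Fubini calculation and requires no further machinery from the paper.
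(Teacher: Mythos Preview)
Your proposal is correct and follows essentially the same route the paper indicates: the paper does not spell out a proof at all, merely asserting that the formula ``can be obtained via integration by parts,'' and your layer-cake/Fubini argument is exactly the standard way to make that one-line remark rigorous (the identity $a^z=\int_0^\infty z u^{z-1}\mathbf{1}[u<a]\,du$ is precisely integration by parts applied to the survival function). Your explicit handling of the greedy outlier identification and the $(\cdot)^+$ clipping fills in the only nontrivial gap the paper leaves implicit, and your observation about the $P\leftrightarrow Y$, $m\leftrightarrow t$ typos is also correct.
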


The following notion of $\epsilon$-approximation for $k$-balls range space is well-studied in PAC learning and computational geometry communities (see e.g. \cite{GAA}).

\begin{definition}[$\epsilon$-Approximation for $k$-balls range space]
 Let $\Balls_k=\{\Balls(C,u)\mid C\subset \mathbb{R}^d,|C|=k,u>0\}$ denote the set of unions of $k$ balls with the same radius. For a dataset $P\subset \mathbb{R}^d$, the $k$-Balls range space on $P$ is denoted by $(P,\mathcal{P}_k)$ where $\mathcal{P}_k:=\{P\cap \Balls_k(C,u)\mid \Balls_k(C,u)\in \Balls_k\}$. A subset $Y\subset P$ is called an $\epsilon$-approximation of the $k$-Balls range space $(P,\mathcal{P}_k)$ if for every $\Balls(C,u)\in \Balls_k$,
 $$
 \bigg{|}\frac{|P\cap \Balls(C,u)|}{|P|}-\frac{|Y\cap \Balls(C,u)|}{|Y|}\bigg{|}\leq \epsilon.
 $$
\end{definition}
The following lemma reduces the construction of an $\epsilon$-approximation to uniform sampling.
\begin{lemma}[\cite{LI2001516}]
\label{lem:approximation}
Assume $Q$ is a uniform sample of size $\tilde{O}(\frac{kd}{\epsilon^2})$ from $P$, then with probability at least $1-\frac{1}{\poly(k/\epsilon)}$, $Q$ is an $\epsilon$-approximation of the $k$-Balls range space on $P$.
\end{lemma}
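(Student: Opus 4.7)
The plan is to reduce the statement to two classical ingredients: a VC-dimension upper bound for the range space $(P, \mathcal{P}_k)$ associated with $k$ equal-radius balls in $\mathbb{R}^d$, and a standard VC-type uniform-sampling theorem for $\eps$-approximations, which is precisely what \cite{LI2001516} supplies. The lemma is then obtained by plugging the first bound into the second.

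First, I would bound the VC dimension of $\mathcal{P}_k$. A single closed Euclidean ball in $\mathbb{R}^d$ can be realized as a halfspace in $\mathbb{R}^{d+1}$ via the standard lift $x \mapsto (x, \|x\|^2)$, so the class of single balls has VC dimension at most $d+1$. Since $\Balls_k$ is contained in the class of arbitrary (not necessarily equal-radius) unions of $k$ balls, the classical bound on the VC dimension of $k$-fold unions (Blumer--Ehrenfeucht--Haussler--Warmuth) yields $\mathrm{VCdim}(\mathcal{P}_k) = O(kd \log k)$.

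Second, I would invoke the $\eps$-approximation sampling theorem of \cite{LI2001516}: for any range space of VC dimension $v$, a uniform sample of size $O\!\left(\frac{v + \log(1/\delta)}{\eps^2}\right)$ is, with probability at least $1-\delta$, an $\eps$-approximation of that range space. Substituting $v = O(kd \log k)$ and $\delta = 1/\poly(k/\eps)$ gives total sample size $\tilde{O}(kd/\eps^2)$, matching the statement.

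The main obstacle is really the VC-dimension bound rather than the sampling step itself, since the latter is a black-box citation of \cite{LI2001516}. The one subtlety worth checking is that $\mathcal{P}_k$ consists of \emph{common-radius} unions of $k$ balls rather than arbitrary ones, but as noted above this constraint only shrinks the class, so the $O(kd\log k)$ upper bound transfers directly. If one wanted to shave the $\log k$ factor, one could argue via the dual shatter function using that $n$ points in $\mathbb{R}^d$ induce only $O(n^{d})$ cells in the arrangement of common-radius ball queries, but this refinement is unnecessary for the $\tilde{O}(\cdot)$ statement of the lemma.
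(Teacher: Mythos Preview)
Your proposal is correct and is exactly the standard derivation. Note, however, that the paper does not actually give a proof of this lemma: it is stated as an imported result with the citation \cite{LI2001516} and used as a black box, so there is no ``paper's own proof'' to compare against beyond the citation itself. Your reconstruction via the $O(kd\log k)$ VC-dimension bound for unions of $k$ balls combined with the $\eps$-approximation sampling theorem of \cite{LI2001516} is precisely the intended argument behind that citation.
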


The following Lemma~\ref{lem:uniform} shows an $(\frac{\epsilon}{12z})^{z}$-approximation yields a $2^{O(z\log z)}\cdot \epsilon$-coreset for robust \kztC for every $t$.

\begin{lemma} \label{lem:uniform}
Assume $R=P_i\cap \ring(c_i^*,r,2r)$ is a ring in the cluster $P_i$. Let $Q_R$ be an $(\frac{\epsilon}{12z})^z$-approximation of the $k$-balls range space on $R$. Suppose every element of $Q_R$ is re-weighted by $\frac{|R|}{|Q_R|}$ then for every $C\subset \mathbb{R}^d, |C|=k$ and every $t,0\leq t\leq \min\left\{|R|,m\right\}$,
\begin{eqnarray}\label{eqn:uniform}
|\cost^{(t)}_z(R,C)-\cost^{(t)}_z(Q_R,C)|\leq \epsilon\cdot \cost_z^{(t)}(R,C)+\epsilon r^z |R|.
\end{eqnarray}
\end{lemma}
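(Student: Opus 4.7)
The proof rests on expressing both $\cost_z^{(t)}(R, C)$ and $\cost_z^{(t)}(Q_R, C)$ via the integral representation of Fact~\ref{fact:integral}, which converts the additive guarantee of the $k$-balls range space $\epsilon'$-approximation (with $\epsilon' := (\epsilon/(12z))^z$) into a cost approximation guarantee. Noting that the reweighting gives $w_{Q_R}(Q_R) = |R|$, one writes $\cost_z^{(t)}(R, C) = \int_0^\infty z u^{z-1} g_R(u)\, du$ and analogously for $Q_R$, where $g_R(u) := (|R| - t - |R \cap \Balls(C, u)|)^+$ and $g_{Q_R}(u) := (|R| - t - w_{Q_R}(Q_R \cap \Balls(C, u)))^+$. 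The $\epsilon'$-approximation yields $|g_R(u) - g_{Q_R}(u)| \leq \epsilon' |R|$ everywhere, while trivially $g_R(u), g_{Q_R}(u) \in [0, |R| - t]$; combining, $|g_R(u) - g_{Q_R}(u)| \leq \min(\epsilon' |R|,\, |R| - t)$. Since $Q_R \subseteq R$, both integrands vanish outside $[u_{\min}, u_{\max}]$ where $u_{\min} := \min_{p \in R} \dist(p, C)$ and $u_{\max} := \max_{p \in R} \dist(p, C)$, so integrating yields
\[
\left|\cost_z^{(t)}(R, C) - \cost_z^{(t)}(Q_R, C)\right| \;\leq\; \min(\epsilon' |R|,\, |R| - t) \cdot (u_{\max}^z - u_{\min}^z).
\]

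The remainder uses the ring structure $R \subseteq \ring(c_i^*, r, 2r)$, which forces $u_{\max} - u_{\min} \leq 4r$ (since $\dist(p, c_i^*) \leq 2r$ for $p \in R$), and a case analysis on the scale of $u_{\max}$ relative to $r$ with a threshold $K = \Theta(z/\epsilon)$. In Case A ($u_{\max} \leq K r$), the mean-value inequality gives $u_{\max}^z - u_{\min}^z \leq z (u_{\max} - u_{\min}) u_{\max}^{z-1} \leq 4 z K^{z-1} r^z$, so the error is at most $\epsilon' |R| \cdot 4 z K^{z-1} r^z \leq \epsilon r^z |R|$ by the choice of $K$ (using $\epsilon/\epsilon' = (12z)^z/\epsilon^{z-1}$). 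In Case B ($u_{\max} > K r$), we have $u_{\min} \geq u_{\max} - 4r \geq u_{\max}/2$, and $u_{\max}/u_{\min} \leq 1 + 4r/u_{\min} = 1 + O(\epsilon/z)$, yielding $(u_{\max}/u_{\min})^z - 1 \leq \epsilon$ via $(1+x)^z - 1 \leq 2zx$ for $zx \leq 1$. Hence $u_{\max}^z - u_{\min}^z \leq \epsilon u_{\min}^z$, and the error is at most $\min(\epsilon' |R|,\, |R| - t) \cdot \epsilon u_{\min}^z \leq \epsilon (|R| - t) u_{\min}^z \leq \epsilon \cost_z^{(t)}(R, C)$, using the trivial lower bound $\cost_z^{(t)}(R, C) \geq (|R| - t) u_{\min}^z$.

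The main obstacle I anticipate is calibrating $K$ and the small constants so that Case A yields a cleanly $\epsilon r^z |R|$ additive term (rather than $r^z |R|$ without the $\epsilon$ factor) and Case B yields a cleanly $\epsilon \cost_z^{(t)}(R, C)$ multiplicative term; the tight interplay between $\epsilon' = (\epsilon/(12z))^z$ and the various $z$-th powers needs careful bookkeeping, especially for large $z$. A subtle point is that Case B seamlessly covers both the \emph{many-inliers} regime (where the $\epsilon' |R|$ arm of the min dominates, and multiplicative error follows from $\cost_z^{(t)}(R, C) \geq \epsilon' |R| u_{\min}^z$) and the \emph{few-inliers} regime $|R| - t < \epsilon' |R|$ (where the $|R| - t$ arm is tighter and reduces to the same cost bound directly); it is precisely taking the minimum of the two pointwise bounds that lets both sub-regimes reduce to the same multiplicative estimate.
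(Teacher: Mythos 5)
Your proof is correct, and it takes a slightly different — and arguably cleaner — route than the paper's. The paper splits explicitly on the number of outliers: when $t\geq(1-\epsilon')|R|$ (few inliers) it abandons the integral entirely and applies the coupling bound of \Cref{lem:naive} to the inlier sets (then uses the ring radius to bound $\cost_z(\cdot,c_i^*)$); when $t<(1-\epsilon')|R|$ it uses the integral with the pointwise bound $\epsilon'|R|$ and then invokes the generalized triangle inequality (\Cref{lem:gentri}(2)) on $\Tfarz-\Tclosez$ to extract both the multiplicative term $\epsilon\Tclosez$ and the additive term, finally needing the Case-2 hypothesis $|R|-t\geq\epsilon'|R|$ to convert $\epsilon'|R|\Tclosez$ into $\cost_z^{(t)}(R,C)$. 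You instead stay with the integral throughout, sharpen the pointwise bound to $\min(\epsilon'|R|,\,|R|-t)$ (a legitimate observation, since $0\leq g_R,g_{Q_R}\leq |R|-t$), and split on the scale of $u_{\max}$ relative to $r$ rather than on $t$; the $\min$ then absorbs the $t$-dependence so that Case B uniformly yields the multiplicative bound via $\cost_z^{(t)}(R,C)\geq(|R|-t)u_{\min}^z$, and Case A uniformly yields the additive bound via the mean-value inequality. This avoids both the coupling lemma and the generalized triangle inequality, making the argument more self-contained. Your arithmetic checks out: with $K=12z/\epsilon$, Case A needs $4zK^{z-1}\epsilon'\leq\epsilon$, which reduces to $4\cdot 12^{z-1}\leq 12^z$; and Case B gives $u_{\max}/u_{\min}\leq 1/(1-\epsilon/(3z))\leq 1+\epsilon/(2.7z)$ for $\epsilon<0.3$, so $(u_{\max}/u_{\min})^z-1\leq 2\epsilon/2.7<\epsilon$. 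The only bookkeeping to be careful about — and you correctly flag it — is that the Case-A bound must carry the full $\epsilon$ factor rather than a constant, which is exactly where the $z$-th power in $\epsilon'=(\epsilon/12z)^z$ cancels $K^{z-1}$.
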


\begin{proof}
Fix a $C\subset \mathbb{R}^d,|C|=k$. As $Q_R$ is an $(\frac{\epsilon}{12z})^{z}$-approximation of the $k$-Balls range space on $R$, we know that for every $u>0$,
\begin{eqnarray} \label{eqn:uniform1}
|w_R\big(\mathrm{Balls}(C,u)\cap R)\big)-w_{Q_R}\big(\mathrm{Balls}(C,u)\cap Q_R\big)|\leq (\frac{\epsilon}{12z})^{z}\cdot  |R|.
\end{eqnarray}

 Let $\Tclose=\min_{x\in R} \dist(x,C)$ and $\Tfar=\max_{x\in R}\dist(x,C)$. Since $R\subset \ring(c_i^*,r,2r)$, the diameter of $R$ is at most $4r$ and this implies $\Tfar-\Tclose\leq 4r$. Since $Q_R$ is a subset of $R$, we know that for every $u\not\in [\Tclose,\Tfar]$,
 \begin{eqnarray} \label{eqn:uniform2}
 w_R\big(\mathrm{Balls}(C,u)\cap R)\big)=w_{Q_R}\big(\mathrm{Balls}(C,u)\cap Q_R\big)
 \end{eqnarray}

To prove (\ref{eqn:uniform}), we do the following case analysis.

If the number of outliers $t\geq (1-(\frac{\epsilon}{12z})^z)\cdot |R|$, let $L_R\subset R$ and $L_Q\subset Q_R$ denote the outliers of $R$ and $Q$ with respect to $C$. Using \Cref{lem:naive}, we know that
\begin{eqnarray*}
&\quad&|\cost^{(t)}_z(R,C)-\cost^{(t)}_z(Q_R,C)|\\
&\leq& \epsilon \cdot \cost_z^{(t)}(R,C)+(\frac{6z}{\epsilon})^{z-1}\cdot \big(\cost_z(R-L_R,c_i^*)+\cost_z(Q_R-L_Q,c_i^*)\big)\\
&\leq &\epsilon \cdot \cost_z^{(t)}(R,C)+(\frac{6z}{\epsilon})^{z-1} \cdot(\frac{\epsilon}{12z})^z\cdot |R|\cdot (2r)^z\\
&\leq& \epsilon \cdot \cost_z^{(t)}(R,C)+ \epsilon r^z|R|.
\end{eqnarray*}

If $t<(1-(\frac{\epsilon}{12z})^z)\cdot |R|$, using (\ref{eqn:uniform1}), (\ref{eqn:uniform2}), \Cref{fact:integral} and the generalized triangle inequality \Cref{lem:gentri}, we have,
 \begin{eqnarray*}
 &\quad&|\cost^{(t)}_z(P,C)-\cost^{(t)}_z(Q_R,C)|\\
 &\leq& \int_{\Tclose}^{\Tfar} zu^{z-1}\cdot \big|w_R\big(\mathrm{Balls}(C,u)\cap R)\big)-w_{Q_R}\big(\mathrm{Balls}(C,u)\cap Q_R\big)\big| du\\
 &\leq& (\frac{\epsilon}{12z})^z \cdot |R|\cdot (\Tfarz-\Tclosez)\\
 &\leq & (\frac{\epsilon}{12z})^z \cdot |R| \cdot \big(\epsilon\cdot \Tclosez+(\frac{3z}{\epsilon})^{z-1}\cdot (4r)^z\big)\\
 &\leq & \epsilon\cdot (\frac{\epsilon}{12z})^{z}\cdot  |R|\cdot\Tclosez+ \epsilon r^z|R|\\
 &\leq& \epsilon \cdot \cost_z^{(t)}(R,C)+\epsilon r^z|R|
 \end{eqnarray*}
 where for the last inequality, we have used the fact that
 \[
        \cost_z^{(t)}(R,C)\geq (|R|-t)\cdot \Tclosez\geq (\frac{\epsilon}{12z})^z \cdot |R|\cdot \Tclosez.
\]
\end{proof}

We are ready to prove \Cref{lem:intro:uniform}.
\begin{proof}[Proof of \Cref{lem:intro:uniform}] Fix a center $C\subset \mathbb{R}^d,|C|=k$.
By \Cref{lem:approximation}, the sample size in Line~\ref{alg:uniform} of \Cref{alg:main} implies that $Q_R$ is an $(\frac{\epsilon}{12z})^{z}$-approximation of the $k$-balls range space on $R$ for every $R\in\bigcup_{i\in[ \beta k]} \mathcal{R}_i$ . By \cref{lem:uniform} and the union bound, with probability at least $0.9$, for every $i\in [\beta]$, for every ring $R\in \mathcal{R}_i$, and for every $e\in [0,|R|]$,
\begin{eqnarray} \label{eqn:uniform:QR}
|\cost^{(e)}_z(R,C)-\cost^{(e)}_z(Q_R,C)|\leq \epsilon\cdot \cost_z^{(e)}(R,C)+\epsilon\cdot \cost_z(R,c_i^*).
\end{eqnarray}

Let $L$ denote the set of $t$ outliers of $\Rall$ with respect to $C$. By (\ref{eqn:uniform:QR}), for every $R\in \bigcup_{i\in [\beta k]} \mathcal{R}_i$, there exists a weighted subset $T_R\subset Q_R$ such that $w_{T_R}(T_R)=w_L(L\cap R)$ and
$$
\cost_z(Q_R-T_R,C)\leq (1+\epsilon)\cdot \cost_z(R-(L\cap R),C)+\epsilon \cdot \cost_z(R,c_i^*).
$$

Summing over all $R\in\bigcup_{i\in [\beta k]} \mathcal{R}_i$, we know that,
\begin{eqnarray*}
&\quad&\cost_z^{(t)}(Q,C)\\
&\leq& \sum_{i\in [\beta k]}\sum_{R\in \mathcal{R}_i} \cost_z(Q_R-T_R,C)\\
&\leq& \sum_{i\in [\beta k]}\sum_{R\in \mathcal{R}_i} \big((1+\epsilon)\cdot \cost_z(R-(L\cap R),C)+\epsilon\cdot \cost_z(R,c_i^*)\big)\\
&=&(1+\epsilon)\cdot \cost_z(\Rall-L,C)+\epsilon\cdot \cost_z(\Rall,C^*)\\
&=&(1+\epsilon)\cdot \cost_z^{(t)}(\Rall,C)+\epsilon\cdot \cost_z(\Rall,C^*).
\end{eqnarray*}

On the same way, we can show that
$$
\cost_z^{(t)}(\Rall,C)\leq (1+\epsilon)\cdot \cost_z^{(t)}(Q,C)+\epsilon\cdot \cost_z(\Rall,C^*).
$$

Thus we finish the proof.
\end{proof}  
\subsection{Proof of Lemma~\ref{lem:intro:twopoints}: Error Analysis of Two-Point Coresets} \label{sec:twopoints}

\begin{figure}[t]
    \centering
    \includegraphics[height=0.2\textheight]{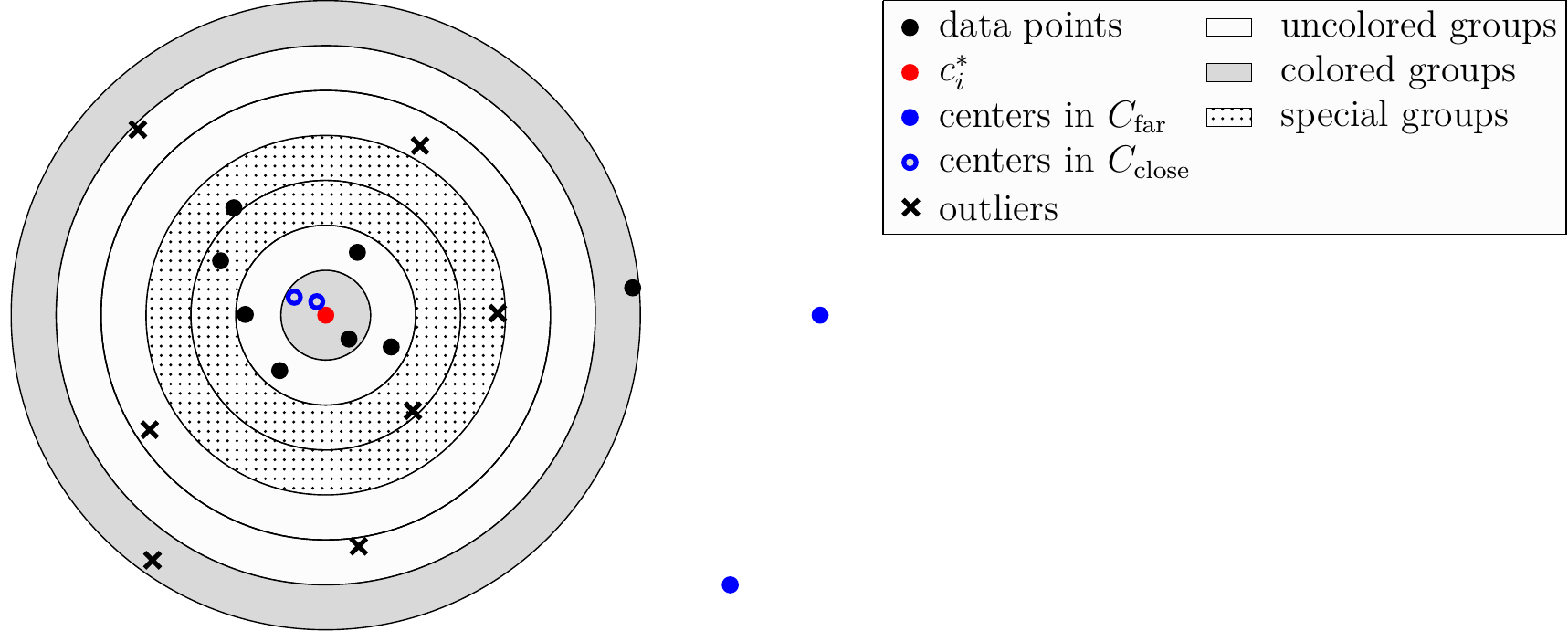}
    \caption{An illustration of the decomposition into colored, uncolored and special groups with respect to $C=C_\mathrm{far}\cup C_\mathrm{close}$, where the radii of balls are taken the logarithm.}
    \label{fig:colored_uncolered}
\end{figure}

Throughout this section, we fix a center set $C\subset \mathbb{R}^d,|C|=k$ and prove the coreset property of $D$ with respect to $C$.
To analyze the error of two-point coreset for $\Gall$, we further decomposes all groups into colored groups and uncolored groups based on the position of $C$ in the following \Cref{lem:color}, which was also considered in~\cite{braverman2022power}.
Furthermore, inside our proof, we also consider a more refined type of groups called \emph{special groups}. An overview illustration of these groups and other relevant notions can be found in \Cref{fig:colored_uncolered}.

\begin{lemma}[Colored groups and uncolored groups~\cite{braverman2022power}]
\label{lem:color}
For a center set $C\subset\mathbb{R}^d,|C|=k$, a collection of groups $\mathcal{G}_i$ can be further divided into \emph{colored} groups and \emph{uncolored} groups with respect to $C$ such that
\begin{enumerate}
\item there are at most $O(k\log\frac{z}{\epsilon})$ colored groups and
\item for every uncolored group $G\in \mathcal{G}_i$, for every $u\in C$, either $\forall p\in G,\dist(u,c_i^*)<\frac{\epsilon}{9z}\cdot \dist(p,c_i^*)$ or $\forall p\in G, \dist(u,c_i^*)>\frac{24z}{\epsilon}\dist(p,c_i^*)$.
\end{enumerate}
\end{lemma}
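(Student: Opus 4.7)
The plan is to define a natural coloring rule that charges each group $G \in \mathcal{G}_i$ to centers in $C$ lying in a ``middle'' distance band relative to $G$, then show that each $u \in C$ can color at most $O(\log(z/\epsilon))$ groups. Summing over $|C|=k$ yields property~(1), while property~(2) will hold by construction.

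Recall from \Cref{thm:meta} that every group $G \in \mathcal{G}_i$ is a union of consecutive rings $R_{l_G}(P_i, c_i^*), \dots, R_{r_G}(P_i, c_i^*)$ with $-\infty \leq l_G \leq r_G$, and that these index intervals are pairwise disjoint across $G \in \mathcal{G}_i$. Hence every $p \in G$ satisfies $\dist(p, c_i^*) \in [2^{l_G-1}, 2^{r_G}]$, using the convention $2^{-\infty} := 0$. Writing $d_u := \dist(u, c_i^*)$ and unwinding the quantifiers in the lemma statement, the uncolored condition for the pair $(G, u)$ fails precisely when
\[
    d_u \in I_G := \left[\tfrac{\epsilon}{9z} \cdot 2^{l_G - 1},\ \tfrac{24z}{\epsilon} \cdot 2^{r_G}\right].
\]
I would declare $G$ \emph{colored} if $d_u \in I_G$ for some $u \in C$ and \emph{uncolored} otherwise, making property~(2) immediate.

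To prove property~(1), fix $u \in C$. Taking $\log_2$, the condition $d_u \in I_G$ is equivalent to $l_G \leq A_u$ and $r_G \geq B_u$, where $A_u := \log_2 d_u + \log_2(9z/\epsilon) + 1$ and $B_u := \log_2 d_u - \log_2(24z/\epsilon)$; combining, this is exactly $[l_G, r_G] \cap [B_u, A_u] \neq \emptyset$. Since the intervals $\{[l_G, r_G]\}_{G \in \mathcal{G}_i}$ are disjoint, at most two of them can contain an endpoint of $[B_u, A_u]$, while every other colored group is entirely contained in $[B_u, A_u]$ and thereby occupies a distinct integer of that interval. Hence the number of groups colored by $u$ is at most $A_u - B_u + 3 = \log_2(216 z^2/\epsilon^2) + 4 = O(\log(z/\epsilon))$, and summing over $u \in C$ yields the claimed $O(k \log(z/\epsilon))$ bound.

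The argument is essentially a pigeonhole on the disjoint dyadic intervals produced by \Cref{thm:meta}, so no geometric obstacle arises beyond what that decomposition already provides. The degenerate case $l_G = -\infty$ (a group absorbing $c_i^*$, of which there is at most one per cluster) is handled uniformly by the convention $2^{-\infty} = 0$, which only enlarges $I_G$ on the left without affecting the counting. The main point of care is tracking the constants $9z/\epsilon$ and $24z/\epsilon$ faithfully so that the combined log-width $A_u - B_u$ really is $O(\log(z/\epsilon))$; these specific constants will feed directly into the downstream error analyses for colored and uncolored groups.
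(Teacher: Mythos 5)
Your proof is correct, and it is essentially the canonical argument for this kind of decomposition: translate the two distance thresholds into a logarithmic window $[B_u, A_u]$ of width $O(\log(z/\epsilon))$ around each center $u$, observe that a group is colored exactly when its (disjoint) index interval $[l_G, r_G]$ overlaps that window, and then apply pigeonhole — at most two groups straddle an endpoint, every other overlapping group must occupy a distinct integer inside the window, and summing over the $k$ centers gives the bound. The paper does not supply its own proof of this lemma (it cites \cite{braverman2022power}), but your reconstruction is the expected one and handles the $l_G = -\infty$ degenerate case appropriately.
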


Let $G\in \mathcal{G}_i$ be an uncolored group with respect to $C$, \Cref{lem:color} implies that the center set $C$ can be decomposed into a ``close'' portion and a ``far'' portion to $G$, as in the following \Cref{def:Cfar}.

\begin{definition}[\cite{braverman2022power}] \label{def:Cfar}
For a center set $C$, assume $G\in \mathcal{G}_i$ is an uncolored group with respect to $C$. Define $$
\Cfar^{G}=\{u\in C\mid \forall p\in G,\dist(u,c_i^*)>\frac{24z}{\epsilon}\dist(p,c_i^*)\},$$ and
$$
\Cclose^{G}=\{u\in C\mid \forall p\in G,\dist(u,c_i^*)<\frac{\epsilon}{9z}\cdot\dist(p,c_i^*)\}.
$$
Remark that $C=\Cfar^{G}\cup \Cclose^{G}$ by the property of uncolored group.
\end{definition}

The following \Cref{lem:additive} shows the difference of cost to any center $C\subset\mathbb{R}^d,|C|=k$ between a group $G$ and its two-point coreset $D_{G}$ can always be bounded by a small additive error, via generalized triangle inequality \Cref{lem:gentri}.

By combining \Cref{lem:naive} and the fact that $\cost_z(G_i,c_i^*)\leq (\frac{\epsilon}{6z})^z\cdot \frac{\cost_z(P_i,c_i^*)}{k\log (24z/\epsilon)}$, we can obtain the following inequality.

\begin{lemma}[Robust variant of~{\cite[Lemma 3.5]{braverman2022power}}]\label{lem:additive}
For a group $G\in \mathcal{G}_i$, assume $(U,w_U)$ and $(V,w_V)$ are two weighted subsets of $G$ such that $w_U(U)=w_V(V)$. Then for every $C\subset\mathbb{R}^d,|C|=k$,
\begin{eqnarray}
|\cost_z(U,C)-\cost_z(V,C)|\leq \epsilon \cdot \cost_z(U,C)+\epsilon\cdot \frac{ \cost_z(P_i,c^*)}{2k\log(z/\epsilon)}.
\end{eqnarray}
\end{lemma}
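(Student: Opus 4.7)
The plan is to apply \Cref{lem:naive} directly with $B = G$ to $U$ and $V$, and then absorb the additive factor $(6z/\epsilon)^{z-1} \cdot (\cost_z(U, c_i^*) + \cost_z(V, c_i^*))$ into the small upper bound on $\cost_z(G, c_i^*)$ guaranteed by \Cref{thm:meta}. The hypothesis that $(U, w_U)$ and $(V, w_V)$ are weighted subsets of $G$ with $w_U(U) = w_V(V)$ is exactly what is needed to invoke \Cref{lem:naive}, and since pointwise $w_U(x) \le w_G(x)$ and $w_V(x) \le w_G(x)$, one gets $\cost_z(U, c_i^*), \cost_z(V, c_i^*) \le \cost_z(G, c_i^*)$ for free.

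First, I would write down the conclusion of \Cref{lem:naive}:
\[
|\cost_z(U,C)-\cost_z(V,C)| \le \epsilon\cdot\cost_z(U,C) + \left(\tfrac{6z}{\epsilon}\right)^{z-1}\bigl(\cost_z(U,c_i^*)+\cost_z(V,c_i^*)\bigr).
\]
Then I would upper bound each of $\cost_z(U,c_i^*)$ and $\cost_z(V,c_i^*)$ by $\cost_z(G,c_i^*)$, and substitute the group-cost guarantee of \Cref{thm:meta}, namely $\cost_z(G,c_i^*) \le (\epsilon/(6z))^z \cdot \cost_z(P_i,c_i^*)/(k\log(24z/\epsilon))$. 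The $(6z/\epsilon)^{z-1}$ prefactor and the $(\epsilon/(6z))^z$ from the group bound telescope, leaving one factor of $\epsilon/(6z)$, multiplied by $2$ from the two cost terms. The result is the additive term
\[
\frac{\epsilon}{3z}\cdot \frac{\cost_z(P_i,c_i^*)}{k\log(24z/\epsilon)}.
\]

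Finally, I would verify the routine constant-level comparison $\frac{1}{3z\log(24z/\epsilon)} \le \frac{1}{2\log(z/\epsilon)}$, which follows because $z\ge 1$ and $\log(24z/\epsilon) \ge \log(z/\epsilon)$, so the product $3z\log(24z/\epsilon)$ is at least $3\log(z/\epsilon) > 2\log(z/\epsilon)$. This converts the additive term into the exact form $\epsilon \cdot \cost_z(P_i,c^*)/(2k\log(z/\epsilon))$ stated in the lemma.

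I do not expect any genuine obstacle here: the structural work has already been done in \Cref{lem:naive} (coupling masses plus the generalized triangle inequality) and in \Cref{thm:meta} (the decomposition guaranteeing small per-group cost). The only thing to be careful about is that $U$ and $V$ are \emph{weighted} subsets of $G$ rather than subsets in the usual set-theoretic sense, so the pointwise weight inequality $w_U(x),w_V(x)\le w_G(x)$ is what justifies passing from $\cost_z(U,c_i^*), \cost_z(V,c_i^*)$ to $\cost_z(G,c_i^*)$; this is exactly the convention already used throughout \Cref{sec:prelim}, so it requires no new argument.
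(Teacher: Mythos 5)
Your proposal matches the paper's argument: the authors prove this lemma in one sentence, saying it follows by combining \Cref{lem:naive} with the group-cost bound $\cost_z(G, c_i^*) \le (\epsilon/6z)^z \cdot \cost_z(P_i, c_i^*)/(k\log(24z/\epsilon))$ from \Cref{thm:meta}, and your write-up is exactly that computation spelled out, including the correct telescoping of the $(6z/\epsilon)^{z-1}$ factor against $(\epsilon/6z)^z$ and the routine comparison $3z\log(24z/\epsilon) \ge 2\log(z/\epsilon)$.
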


\begin{lemma} \label{lem:uncol}
Let $G$ denote an uncolored group with respect to $C$. Suppose $(U,w_U)$ and $(V,w_V)$ are two weighted subsets of $G$ such that one of the following items hold,
\begin{enumerate}
\item either $\Cclose^G\not=\emptyset$ and $\cost_z(U,c_i^*)=\cost_z(V,c_i^*)$,
\item or $\Cclose^{G}=\emptyset$ and $w_U(U)=w_V(V)$.
\end{enumerate}

Then we have
\begin{eqnarray} \label{eqn:uncol:mul}
\cost_z(U,C)\in (1\pm \epsilon)\cost_z(V,C).
\end{eqnarray}
\end{lemma}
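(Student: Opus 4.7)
The plan is to handle the two cases of \Cref{lem:uncol} separately, exploiting the defining property of uncolored groups (\Cref{def:Cfar}) to reduce the distance-to-$C$ computation for each $p\in G$ to something that only depends on $c_i^*$ and $C$ (and not on $p$ itself, up to a $(1\pm O(\epsilon/z))$ factor). Once this is done, the hypothesis relating $U$ and $V$ in either case will be enough to conclude the desired multiplicative closeness.

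For case 1, where $\Cclose^G \neq \emptyset$, I will pick any $u \in \Cclose^G$ and use the definition $\dist(u,c_i^*) < \frac{\epsilon}{9z}\dist(p,c_i^*)$ together with the triangle inequality to show $\dist(p,u) \in (1\pm \frac{\epsilon}{9z})\dist(p,c_i^*)$. A simple comparison with any $v\in \Cfar^G$ (for which the reverse triangle inequality gives $\dist(p,v) \geq (\frac{24z}{\epsilon}-1)\dist(p,c_i^*)$) shows that the closest center in $C$ to $p$ is always realized in $\Cclose^G$. Raising to the $z$-th power and using $(1\pm \frac{\epsilon}{9z})^z \subseteq 1\pm O(\epsilon)$, I obtain $\dist(p,C)^z \in (1\pm O(\epsilon))\dist(p,c_i^*)^z$ uniformly over $p\in G$. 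Summing with weights, $\cost_z(U,C) \in (1\pm O(\epsilon))\cost_z(U,c_i^*)$ and analogously for $V$, and the assumption $\cost_z(U,c_i^*) = \cost_z(V,c_i^*)$ closes the loop (after rescaling $\epsilon$ by an absolute constant).

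For case 2, where $\Cclose^G = \emptyset$ so that $C = \Cfar^G$, every $v\in C$ satisfies $\dist(p,c_i^*) < \frac{\epsilon}{24z}\dist(v,c_i^*)$, and another triangle-inequality sandwich gives $\dist(p,v) \in (1\pm \frac{\epsilon}{24z})\dist(v,c_i^*)$ for every $v\in C$ and every $p\in G$. Taking minima over $v$ on both sides yields $\dist(p,C) \in (1\pm \frac{\epsilon}{24z})\dist(c_i^*,C)$, so $\dist(p,C)^z \in (1\pm O(\epsilon))\dist(c_i^*,C)^z$ and the contribution of each point $p\in G$ becomes essentially a function of the total weight only. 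Therefore $\cost_z(U,C) \in (1\pm O(\epsilon)) w_U(U) \cdot \dist(c_i^*,C)^z$ and similarly for $V$; the hypothesis $w_U(U) = w_V(V)$ finishes the argument.

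No step looks like a serious obstacle: once the two properties of uncolored groups are unpacked into multiplicative estimates $\dist(p,\cdot)\approx \dist(p,c_i^*)$ or $\dist(p,\cdot)\approx \dist(c_i^*,C)$, the rest is standard. The one mildly delicate point is making sure the factors introduced by raising $(1\pm \frac{\epsilon}{9z})$ and $(1\pm \frac{\epsilon}{24z})$ to the $z$-th power remain within $1\pm O(\epsilon)$; this follows from the elementary inequality $(1+x/z)^z \leq e^x \leq 1 + 2x$ for $x\in (0,1)$, and a symmetric bound on the other side. Finally, to get the clean $(1\pm\epsilon)$ bound as stated, one absorbs the absolute constants by applying the argument with $\epsilon$ replaced by a suitable constant multiple, as is standard throughout this paper.
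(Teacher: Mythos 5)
Your proof is correct and follows essentially the same approach as the paper's: in both cases you unpack the defining property of uncolored groups to show that $\dist(p,C)^z$ is, uniformly over $p \in G$, a $(1\pm O(\epsilon))$ multiple of either $\dist(p,c_i^*)^z$ (case 1) or $\dist(c_i^*,C)^z$ (case 2), and then sum with weights and invoke the relevant hypothesis. Your Case~1 argument is slightly more explicit than the paper's in justifying that the nearest center to $p$ lies in $\Cclose^G$, and in Case~2 you bound each $\dist(p,v)$ in terms of $\dist(v,c_i^*)$ and take minima rather than first lower-bounding $\dist(p,C)$, but these are cosmetic variations of the same underlying triangle-inequality calculation.
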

\begin{proof}
If $\Cclose^{G}\not=\emptyset$, by the property of uncolored group as in \Cref{lem:color}, we know that $\forall x\in G,\dist(x,C)\in (1\pm\frac{\epsilon}{3z})\cdot \dist(x,c_i^*)$. So we have
$$
\cost_z(U,C)\in (1\pm\epsilon)\cost_z(U,c_i^*)\;\;\mathrm{and}\;\;\cost_z(V,C)\in(1\pm\epsilon)\cost_z(V,c_i^*).
$$ By combining the above two inequalities and scaling $\epsilon$, we obtain (\ref{eqn:uncol:mul}).

In the other case, if $\Cclose^{G}=\emptyset$, \Cref{lem:color} implies $\forall x\in G, \dist(x,C)>\frac{9z}{\epsilon}\cdot \dist(x,c_i^*)$. By triangle inequality, we know that $\dist(x,C)\in (1\pm \frac{\epsilon}{3z})\dist(c_i^*,C)$. So we have,
$$
\cost_z(U,C)\in (1\pm\epsilon)\cdot w_U(U)\cdot \cost_z(c_i^*,C)\;\;\mathrm{and}\;\;\cost_z(V,C)\in(1\pm\epsilon)\cdot w_V(V)\cdot \cost_z(c_i^*,C),
$$
moreover since $w_U(U)=w_V(V)$, we conclude (\ref{eqn:uncol:mul}) by scaling $\epsilon$.
\end{proof}

We are ready to prove \Cref{lem:intro:twopoints}.

\begin{proof}[Proof of \Cref{lem:intro:twopoints}]
It suffices to prove the following two directions separately.
\begin{eqnarray} \label{eqn:tp1}
\cost_z^{(t)}(D,C)\leq (1+\epsilon)\cost_z^{(t)}(\Gall,C)+\epsilon \cdot \cost_z(P\setminus L^*,C^*),
\end{eqnarray}
\begin{eqnarray} \label{eqn:tp2}
\cost_z^{(t)}(\Gall,C)\leq (1+\epsilon)\cost_z^{(t)}(D,C)+\epsilon \cdot \cost_z(P\setminus L^*,C^*),
\end{eqnarray}
and scale $\epsilon$.
\end{proof}

\paragraph{Proof of (\ref{eqn:tp1})}

Let $(L,w_L)$ denote the outliers of $\Gall$ with respect to $C$. Namely, $L\subset G, w_L(L)=t$ and $$
\cost_z(\Gall-L,C)=\cost_z^{(t)}(\Gall,C).
$$

It suffices to find a weighted subset $(T,w_T)$ of $D$ such that $w_T(T)=t$ and
\begin{eqnarray}\label{eqn:d1}
\cost_z(D-T,C)\leq (1+\epsilon)\cost_z(\Gall-L,C)+\epsilon\cdot \cost_z(P\setminus L^*,C^*).
\end{eqnarray}

We define $T$ as the following. Recall that $$
\Gall=\bigcup_{i\in [\beta k]}\bigcup_{G\in\mathcal{G}_i} G.
$$
For every $G\in\mathcal{G}_i$, we add $\{\pclose^{G},\pfar^{G}\}$ into $T$ and set $$
w_{T}(\pclose^{G})=\sum_{x\in L\cap G} \lambda_x,\;\;\; w_{T}(\pfar^{G})=\sum_{x\in L\cap G} (1-\lambda_x)
$$ where we recall that $\lambda_x$ is the unique number in $[0,1]$ such that $\dist^z(x,c_i^*)=\lambda_x\cdot \dist^z(\pclose^G,c_i^*)+(1-\lambda_x)\cdot\dist^z(\pfar^G,c_i^*)$.

If $G$ is an colored group, we apply \Cref{lem:additive} to obtain
\begin{eqnarray} \label{eqn:d1:col}
\cost_z(D_G-(T\cap D_{G}),C)\leq (1+\epsilon)\cost_z(G-(L\cap G),C)+\epsilon\cdot \frac{\cost_z(P_i,c_i^*)}{2k\log (z/\epsilon)}
\end{eqnarray}

Now suppose $G$ is an uncolored group, observe that by construction, $w_T(T\cap D_{G})=w_L(L\cap G)$ and $\cost_z(T\cap D_{G},c_i^*)=\cost_z(L\cap G,c_i^*)$. Applying \Cref{lem:uncol} in $D_{G}-(T\cap D_{G})$ and $G-(L\cap G)$, we obtain that,
\begin{eqnarray} \label{eqn:d1:uncol}
\cost_z(D_{G}-(T\cap D_{G},C)\leq (1+\epsilon)\cost_z(G-(L\cap G),C).
\end{eqnarray}

By \Cref{lem:color}, there are at most $k\log (z/\epsilon)$ many colored groups in each cluster $P_i$, combining with (\ref{eqn:d1:col}) and (\ref{eqn:d1:uncol}), we have
\begin{eqnarray*}
&\quad&\cost_z(D-T,C)\\
&=&\sum_{i\in [\beta k]}\sum_{G\in\mathcal{G}_i} \cost_z(D_{G}-(T\cap D_{G}),C)\\
&\leq & \sum_{i\in [\beta k]}\sum_{G\in\mathcal{G}_i} (1+\epsilon) \cost_z(G-(L\cap G),C)+k\log(z/\epsilon)\sum_{i\in [\beta k]} \epsilon\cdot \frac{\cost_z(P_i,c_i^*)}{2k\log (z/\epsilon)}\\
&\leq & (1+\epsilon)\cost_z(\Gall-L,C)+\epsilon\cdot \cost_z(P\setminus L^*,C^*)
\end{eqnarray*}
which is (\ref{eqn:d1}).

\paragraph{Proof of (\ref{eqn:tp2})}

Let $(T,w_T)$ denote the set of (total weight $w_T(T)=t$) outliers of $D$ with respect to $C$. Namely, $$
\cost_z^{(t)}(D,C)=\cost_z^{(t)}(D-T,C).
$$

It suffices to find a weighted subset $(L,w_L)$ of $G$ such that $w_L(L)=t$ and
\begin{eqnarray}\label{eqn:d2}
\cost_z^{(t)}(\Gall-L,C)\leq (1+\epsilon)\cdot \cost_z^{(t)}(D-T,C)+\epsilon\cdot \cost_z(P\setminus L^*,C^*).
\end{eqnarray}

We construct $L$ as the following. For every $i\in [k]$, for every $G\in \mathcal{G}_i$, let $m_{G}=w_{T}(T\cap D_{G})$ and let $(L_{G},w_{(L_{G})})$ denote a weighted subset of $G$ such that $$
 \cost^{(m_{G})}_z(G,C)=\cost^{(m_{G})}_z(G-L_{G},C).
 $$ In other words, $L_{G}$ is the subset of furthest $m_{G}$ weights of points to $C$ in $G$.
 Add $L_{G}$ into $L$ and set $w_{L}(x)=w_{(L_{G})}(x)$ for every $x\in L_{G}$.

We prove $L$ satisfies (\ref{eqn:d2}). We do the following case study.

\begin{itemize}
\item If $G$ is a colored group, we simply apply \Cref{lem:additive} to obtain
\begin{eqnarray} \label{eqn:d2:c1}
\cost_z(G-L_{G},C)\leq (1+\epsilon)\cost_z(D_{G}-(T\cap D_{G}),C)+\epsilon\cdot \frac{\cost_z(P_i,c_i^*)}{2k\log (z/\epsilon)}.
\end{eqnarray}

\item If $G$ is an uncolored group, and $\Cclose^{G}=\emptyset$, by \Cref{lem:uncol}, we know that
\begin{eqnarray}  \label{eqn:d2:c2}
\cost_z(G-L_{G},C)\leq (1+\epsilon)\cost_z(D_{G}-(T\cap D_{G}),C).
\end{eqnarray}

\item If $G$ is an uncolored group, $\Cclose^{G}\not=\emptyset$, and $m_{G}\in \{0,|G|\}$, note that in this case $L_{G}=G$ or $L_{G}=\emptyset$. So we have
\begin{eqnarray}
\cost_z(G-L_{G},c_i^*)=\cost_z(D_{G}-(T\cap D_{G}),c_i^*)
\end{eqnarray}
by the fact that $D_{G}$ is the two-point coreset of $G$, satisfying \Cref{def:twopoints}.
So in this case, the conditions of \Cref{lem:uncol} are satisfied. So we have,
\begin{eqnarray} \label{eqn:d2:c3}
\cost_z(G-L_{G},C)\leq (1+\epsilon)\cost_z(D_{G}-(T\cap D_{G}),C).
\end{eqnarray}

\item If $G$ is an uncolored group, $\Cclose^{G}\not=\emptyset$, and $m_{G}\not\in\{0,|G|\}$, we call such group a \emph{special} uncolored group and prove in \Cref{lem:special} that there at most $2$ special groups in every $\mathcal{G}_i$. (See \Cref{fig:colored_uncolered} for an illustration.) Then we use \Cref{lem:additive} to obtain
    \begin{eqnarray} \label{eqn:d2:c4}
    \cost_z(G-L_{G},C)\leq (1+\epsilon)\cost_z(D_{G}-(T\cap D_{G}),C)+\epsilon\cdot \frac{\cost_z(P_i,c_i^*)}{2k\log (z/\epsilon)}.
    \end{eqnarray}

\end{itemize}

Combining (\ref{eqn:d2:c1}), (\ref{eqn:d2:c2}), (\ref{eqn:d2:c3}), (\ref{eqn:d2:c4}), and the fact that there are at most $k\log(z/\epsilon)$ colored groups and $2$ special groups in each $\mathcal{G}_i$, we have
\begin{eqnarray*}
&\quad&\cost_z(\Gall-L,C)\\
&=&\sum_{i\in[\beta k]}\sum_{G\in \mathcal{G}_i}\cost_z(G-L_{G},C)\\
&\leq& (1+\epsilon) \sum_{i\in[\beta k]}\sum_{G\in \mathcal{G}_i}\cost_z(D_{G}-(T\cap D_{G}),C)+(k\log(z/\epsilon)+2)\cdot\sum_{i\in [\beta k]} \epsilon\cdot \frac{\cost_z(P_i,c_i^*)}{2k\log (z/\epsilon)}\\
&\leq&(1+\epsilon)\cost_z(D-T,C)+\epsilon\cdot \cost_z(P\setminus L^*,C^*).
\end{eqnarray*}

\begin{lemma} \label{lem:special}
For a center set $C\subset \mathbb{R}^d$, $|C|=k$, in every $\mathcal{G}_i$, there are at most $2$ special uncolored groups with respect to $C$.
\end{lemma}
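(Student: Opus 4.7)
The plan is to localize the special uncolored groups geometrically through a threshold distance associated with the weighted greedy outlier selection on $D$. First I would argue that the optimal outlier set $(T, w_T)$ has the following ``threshold'' property: there exists a scalar $d^* \geq 0$ such that every $p \in D$ with $\dist(p, C) > d^*$ satisfies $w_T(p) = w_D(p)$ (fully in $T$) and every $p \in D$ with $\dist(p, C) < d^*$ satisfies $w_T(p) = 0$ (fully out). This is a standard exchange argument on optimal weighted outlier selections. For any special uncolored group $G \in \mathcal{G}_i$, set $a_G := \dist(p_{\mathrm{close}}^G, C)$ and $b_G := \dist(p_{\mathrm{far}}^G, C)$. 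Since $m_G \in (0, |G|)$ strictly, the two points of $D_G$ cannot both be fully in $T$ (otherwise $m_G = |G|$) and cannot both be fully out (otherwise $m_G = 0$); this forces $a_G \leq d^* \leq b_G$.

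It therefore suffices to bound by $2$ the number of uncolored groups $G \in \mathcal{G}_i$ with $\Cclose^G \neq \emptyset$ that satisfy $a_G \leq d^* \leq b_G$. Using \Cref{lem:color} (as in the proof of \Cref{lem:uncol}), $\dist(p, C) \in (1 \pm \mu)\dist(p, c_i^*)$ for every $p \in G$, where $\mu := \epsilon/(3z)$. Since $p_{\mathrm{close}}^G$ lies in the innermost ring $R_{l_G}$ and $p_{\mathrm{far}}^G$ lies in $R_{r_G}$, we have $\dist(p_{\mathrm{close}}^G, c_i^*) > 2^{l_G - 1}$ and $\dist(p_{\mathrm{far}}^G, c_i^*) \leq 2^{r_G}$, hence $a_G > (1-\mu) 2^{l_G - 1}$ and $b_G \leq (1+\mu) 2^{r_G}$. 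Assume for contradiction that three such groups $G^{(1)}, G^{(2)}, G^{(3)}$ exist, indexed so that $l_{G^{(1)}} < l_{G^{(2)}} < l_{G^{(3)}}$. The disjointness of the integer intervals $[l_G, r_G]$ gives $l_{G^{(j+1)}} \geq r_{G^{(j)}} + 1$, so
\[
(1-\mu) \cdot 2^{r_{G^{(2)}}} < a_{G^{(3)}} \leq d^* \leq b_{G^{(1)}} \leq (1+\mu) \cdot 2^{r_{G^{(1)}}},
\]
which yields $2^{r_{G^{(2)}} - r_{G^{(1)}}} < (1+\mu)/(1-\mu)$. The theorem hypothesis $\epsilon < 0.3 < 1 \leq z$ gives $\mu < 1/3$, so the right-hand side is strictly less than $2$; however $r_{G^{(2)}} \geq l_{G^{(2)}} \geq r_{G^{(1)}} + 1$ forces the left-hand side to be at least $2$, a contradiction.

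The main obstacle I expect is setting up the threshold $d^*$ carefully enough that the inequality $a_G \leq d^* \leq b_G$ cleanly captures \emph{all} special groups, including degenerate cases when coreset-point distances coincide or ties at $d^*$ occur; this reduces to the standard observation that an optimizer of $\cost_z(D - T, C)$ subject to $w_T(T) = t$ can always be taken as a weighted ``prefix'' in decreasing order of $\dist(\cdot, C)$. Once this bookkeeping is in place, the only substantive step is the geometric calculation above, whose essence is that for uncolored groups with $\Cclose^G \neq \emptyset$ the distance to $C$ is, up to a $(1\pm\mu)$ factor with $\mu < 1/3$, the $c_i^*$-based radius, so $d^*$ can straddle at most two groups whose ring-index intervals are pairwise disjoint.
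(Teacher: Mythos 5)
Your proof is correct and follows the same core argument as the paper's: both use that the optimal outlier set $T$ is a weighted prefix of $D$ in decreasing order of $\dist(\cdot, C)$ (your explicit threshold $d^*$; the paper's direct contradiction that an inlier $y_1$ from the outermost special group would be farther from $C$ than an outlier $y_3$ from the innermost), together with the factor-of-$2$ separation in $\dist(\cdot,c_i^*)$ imposed by a middle special group and the approximation $\dist(\cdot,C)\approx\dist(\cdot,c_i^*)$ on uncolored groups with $\Cclose^G\neq\emptyset$. One small correction: $m_G\in(0,|G|)$ only gives $\min(a_G,b_G)\leq d^*\leq\max(a_G,b_G)$, not necessarily $a_G\leq d^*\leq b_G$ (for a one-ring group the order of $a_G,b_G$ can flip, since $\dist(\cdot,C)$ is only a $(1\pm\mu)$-approximation of $\dist(\cdot,c_i^*)$), but your subsequent calculation, which only needs a lower bound on the smaller of $a_{G^{(3)}},b_{G^{(3)}}$ and an upper bound on the larger of $a_{G^{(1)}},b_{G^{(1)}}$, goes through unchanged.
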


\begin{proof}

For the sake of contradiction, assume there are 3 special uncolored groups $G_{1},G_{2}$, and $G_{3}$ in cluster $P_i$. Assume w.l.o.g. that $G_{1}$ is the furthest to center $c_i^*$ and $G_{3}$ is the closest one.
Since $G_{1}$ is a special uncolored group, we know that $\Cclose^{G_1}\not=\emptyset$, so $\forall x\in G_{1}$, $\dist(x,C)\in \big(1\pm \epsilon\big)\cdot \dist(x,c_i^*)$. In particular, there exists an inlier $y_1\in D_{G_1}$ such that $\dist(y_1,C)\geq (1-\epsilon)\cdot \dist(y_1,c_i^*)$. Similarly, there exists an outlier $y_3\in G_{3}$ such that $\dist(y_3,C)\leq (1+\epsilon)\cdot \dist(y_3,c_i^*)$.

However, $G_{1}$, $G_{2}$ and $G_{3}$ are disjoint groups which are union of consecutive rings. So $\dist(y_1,c_i^*)\geq 2 \dist(y_3,c_i^*)$ and this implies
\begin{eqnarray*}
\dist(y_1,C)&\geq& (1-\epsilon)\cdot \dist(y_1,c_i^*)\\
&\geq& 2(1-\epsilon)\cdot \dist(y_3,c_i^*)\\
&>&(1+\epsilon)\cdot \dist(y_3,c_i^*)\\
&>&\dist(y_3,C)
\end{eqnarray*}
where we have used that $\epsilon<0.3$. However, this contradicts to the fact that $y_1$ is an inlier but $y_3$ is an outlier.
\end{proof}

 \section{Lower Bounds}
\label{sec:lb}

We show in \Cref{thm:lb} that the factor $m$ is necessary in the coreset size,
even for the very simple case of $k = 1$ and one dimension, for \kmMedian.

\begin{theorem}
	\label{thm:lb}
	For every integer $m \geq 1$, there exists a dataset $X \subset \mathbb{R}$ of $n \geq m$ points,
	such that for every $0< \epsilon < 0.5$,
	any $\eps$-coreset for \onemMedian must have size $\Omega(m)$.
\end{theorem}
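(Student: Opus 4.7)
The plan is to exhibit a concrete one-dimensional hard instance with $n = m+1$ points, exploit the fact that when $m = n-1$ the robust cost collapses to the nearest-neighbor distance, and then argue by a ``two-center stress test'' that any $\eps$-coreset must contain \emph{every} point of $X$.

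Concretely, I would take $X := \{0, 1, 2, \ldots, m\}\subset \mathbb{R}$, so $|X| = m+1$. With outlier budget $m$ out of $m+1$ data points, exactly one point survives as a non-outlier, and for every center $c\in\mathbb{R}$ the cost reduces to
\[
\cost_1^{(m)}(X, c) \;=\; \min_{x\in X}\, |x-c|.
\]
Let $(S,w_S)$ be any $\eps$-coreset for \onemMedian on $X$; the goal is to prove $S = X$, which immediately yields $|S|\geq m+1 = \Omega(m)$.

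Suppose for contradiction that some $x_j\in X\setminus S$ exists. First pick the center $c := x_j$: then $\cost_1^{(m)}(X, x_j) = 0$, and the coreset guarantee forces $\cost_1^{(m)}(S, x_j) = 0$ as well. Unpacking the weighted-outlier formulation, the optimal strategy is to assign the furthest mass of $S$ as outliers, leaving a residual non-outlier weight of $w_S(S) - m$; since $x_j\notin S$ the quantity $\min_{x\in S}|x-x_j|$ is strictly positive, so $\cost_1^{(m)}(S, x_j) = 0$ forces $w_S(S)\leq m$. Now pick a second center $c := M$ with $M$ very large, say $M\geq 10m$. On $X$ the cost is $\cost_1^{(m)}(X, M) = M - m > 0$, but with $w_S(S)\leq m$ the entire mass of $S$ can be declared an outlier, giving $\cost_1^{(m)}(S, M) = 0$. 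This violates the relative-error guarantee for any $\eps<1$, so $S\supseteq X$ and $|S|\geq m+1$.

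The only real subtlety is handling the weighted outlier formalism carefully: one must verify that the fractional outlier budget really does zero out the cost on $S$ whenever $w_S(S)\leq m$, and that any positive residual weight necessarily produces a strictly positive cost at a center disjoint from $S$. This is routine once the definition of $\gL_S^{(m)}$ is unpacked, and no probabilistic argument, packing bound, or delicate separation construction is required.
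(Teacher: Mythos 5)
Your proof is correct and takes a genuinely cleaner route than the paper's, on the same instance $X = \{0, 1, \ldots, m\}$. Where the paper places test centers at the $m$ midpoints $\tfrac{x_{i-1}+x_i}{2}$ (each with robust cost exactly $0.5$) and argues that each consecutive pair $\{x_{i-1},x_i\}$ must have a representative in $S$, yielding $|S| \geq \lceil m/2\rceil$, you instead place a center at a hypothetically missing point $x_j \notin S$, where $\cost_1^{(m)}(X, x_j) = 0$. The multiplicative guarantee then forces $\cost_1^{(m)}(S, x_j) = 0$, and since every point of $S$ lies at strictly positive distance from $x_j$, this squeezes the residual weight $w_S(S) - m$ down to zero, i.e.\ $w_S(S) \leq m$. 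Combined with the far-center test (essentially the paper's first step, which shows $w_S(S) - m \in 1 \pm \eps$), this is an immediate contradiction, so $S \supseteq X$ and $|S| = m+1$; your conclusion is therefore stronger than the paper's $(m-1)/2$ bound, and your argument in fact works for any $\eps<1$ rather than $\eps < 0.5$. Two small remarks. First, the constraint $\cost_1^{(m)}(S, x_j) = 0$ actually forces $w_S(S) = m$ exactly, since $w_S(S) < m$ makes $\gL_S^{(m)}$ empty and the robust cost on $S$ undefined (or $+\infty$); your ``$\leq m$'' is all the next step needs, but equality is the precise consequence. Second, your argument hinges on a degenerate test center at which the data cost is exactly $0$, so the relative-error guarantee collapses to exact equality; this is perfectly legitimate under Definition~\ref{def:coreset}, which quantifies over all center sets, but it is worth noting that the paper's midpoint argument, which uses only positive-cost centers, would survive mild relaxations of the coreset definition (for instance an additive slack) that your zero-cost test would not.
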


\begin{proof}
	Fix $0< \eps < 0.5$.
Consider the following instance $X=\left\{x_0,\ldots, x_m\right\}\subset \R^1$ of size $n=m+1$, where $x_0 = 0$ and $x_i = i$ for $i\in [m]$.
Suppose $(S,w_S)$ is an $\eps$-coreset for \kmMedian.

	We first claim that $w_S(S)\geq m+1-\eps$. 
This can be verified by letting center $c\rightarrow +\infty$, and we have
	\[
	\frac{\cost_1^{(m)}(S,c)}{\cost_1^{(m)}(X,c)} = w_S(S)-m\in 1\pm \eps.
	\]
Next, let $c = \frac{x_{i-1}+x_i}{2}$ for some $i\in [m+1]$, which implies that $\cost_1^{(m)}(X,c) = |x_i - c| = 0.5$, i.e., the distance to the nearest-neighbor of $c$ in $X$. 
Suppose both $x_{i-1}$ and $x_i$ are not in $S$ and we have
	\begin{align*}
		\cost_1^{(m)}(S,c) &\geq && (w_S(S)-m)\cdot \min_{x\in S} d(x,c) & \\
		& \geq && (1-\eps)\cdot \min_{x\in S} d(x,c) & (w_S(S)\geq m+0.6)\\
		& \geq && (1-\eps)\cdot 1.5 & (x_{i-1},x_i\notin S) \\
		& > && 0.75 & (\eps < 0.5) \\
		& > && (1+\eps)\cdot \cost_1^{(m)}(X,c), & (\eps < 0.5)
	\end{align*}
	which is a contradiction.
Hence, either $x_{i-1}$ or $x_i$ must be contained in $S$.
It is not hard to conclude that $|S| \geq \frac{m-1}{2}$, which completes the proof.
\end{proof} \section{Experiments}
\label{sec:exp}

We implement our coreset construction algorithm and evaluate its empirical performance on various real datasets.
We compare it with several baselines and demonstrate the superior performance of our coreset.
In addition, we show that our coresets can significantly speed up approximation algorithms for both \kmMedian and \kmMeans problems.

\paragraph{Experiment Setup}
Our experiments are conducted on publicly available clustering datasets, see \Cref{tab:data} for a summary of specifications and choice of parameters.
For all datasets, we select numerical features to form a vector in $\R^d$ for each record. For larger dataset, particularly Census1990 and Twitter,
we subsample it to $10^5$ points so that inefficient baselines can still finish in a reasonable amount of time.
Unless otherwise specified, we typically set $k=5$ for the number of centers.
The number of outliers $m$ is determined by a per-dataset basis, 
via observing the distance distribution of points to a near-optimal center (see \Cref{sec:determine_outlier} for details).
All experiments are conducted on a PC with Intel Core i7 CPU and 16 GB memory, and algorithms are implemented using C++ 11.

\paragraph{Implementation Details}
Our coreset implementation mostly follows \Cref{alg:main} except for a few modifications.
For efficiency, we use a near-linear time algorithm by~\cite{bhaskara2019greedy} to compute an $(O(1), O(1), O(1))$-approximation (as required by \Cref{alg:main}),
but we still add $m$ outliers to coreset (in Line~\ref{alg:outlier}) instead of adding all the found ones.
Moreover, since it is more practical to directly set the target coreset size $N$
(instead of solving for $N$ from $\eps$), we modify the algorithm so that the generated coreset has exactly $N$ points.
Specifically, the coreset size is affected by two key parameters,
one is a threshold, denoted as $t$,
used to determine how the rings and groups are formed in the construction of \Cref{thm:meta} (whose details can be found in ~\cite{braverman2022power}),
and the other, denoted as $s$, is the size of each uniform sample (used in Line~\ref{alg:uniform}).
Here, we heuristically set $t = O(\frac{1}{N-m})$ and solve for $s$ such that the total size equals to $N$.

\begin{table}[t]
    \caption{Specifications of datasets and the choice of the parameters.}
    \label{tab:data}
    \centering
    \begin{tabular}{lllll}
        \toprule
        dataset & size & subsample & dim. & \# of outliers $m$ \\
        \midrule
        Adult ~\cite{adult_data}  & $48842$ & - & $6$ & $200$ \\
        Bank ~\cite{bank_data} & $41188$ & - &$10$ & $200$ \\
        Twitter~\cite{twitter_data} & $21040936$ & $10^5$ & $2$ & $500$ \\
        Census1990~\cite{census1990_data} & $2458285$ & $10^5$ & $68$ & $1500$ \\
        \bottomrule
    \end{tabular}
\end{table}

\subsection{Tradeoff between Size and Empirical Error}
\paragraph{Empirical Error}
We evaluates the tradeoff between the coreset size and empirical error
under the \kmMedian objective.
In general, for \kzmC,
given a coreset $S$, define its empirical error, denoted as $\hat{\eps}(S, C)$,
for a specific center $C \subset \mathbb{R}^d$, $|C| = k$ as
$
    \hat{\eps}(S, C) := \frac{|\cost_z^{(m)}(X, C) - \cost_z^{(m)}(S, C)|}{\cost_z^{(m)}(X, C)}.
$
Since it is difficult to exactly verify whether a coreset
preserves the objective for \emph{all} centers (as required by the definition),
we evaluate the empirical error, denoted as $\hat{\eps}(S)$, for the coreset $S$ as the maximum empirical error over $\mathcal{C}$,
which is a collection of $500$ randomly-chosen center sets,
i.e., $\hat{\eps}(S) := \max_{C \in \mathcal{C}} \hat{\eps}(S, C)$.
Note that $\hat{\eps}(S)$ is defined in a way similar to the worst-case error parameter $\eps$ as in \Cref{def:coreset}.

\paragraph{Baselines}
We compare our coreset with the following baselines: a) uniform sampling (US), where we draw $N$ independent uniform samples from $X$ and set the weight $\frac{|X|}{N}$ for each sample,
b) outlier-aware uniform sampling (OAUS),
where we follow Line~\ref{alg:L} - Line~\ref{alg:outlier} of \Cref{alg:main} to add $m$ outliers $L^*$ to the coreset and sample $N-m$ data points from $X\setminus L^*$ as in US baseline,
and c) sensitivity sampling (SS), the 
previous coreset construction algorithm of~\cite{feldman2012data}.

\paragraph{Experiment: Size-error Tradeoff}
In our experiment, for each coreset algorithm,
we run it to construct the coresets with varying target sizes $N$, ranging from $m + 300$ to $m + 4800$, with a step size of $500$.
We evaluate the empirical error $\hat{\eps}(\cdot)$
and we plot the size-error curves in \Cref{fig:size_vs_error} for each baseline and dataset.
To make the measurement stable, the coreset construction and evaluations are run $100$ times independently and the average is reported.
As can be seen from \Cref{fig:size_vs_error},
our coreset admits a similar error curve regardless of the dataset,
and it achieves about $2.5\%$ error using a coreset of size $m + 800$ (within $2.3\%$ - $2.5\%$ of data size),
which is perfectly justified by our theory that the coreset size only depends on $O(m + \poly(k\epsilon^{-1}))$.
Our coresets outperform all three baselines by a significant margin in every dataset and every target coreset size.
Interestingly, the two baselines SS and US seem to perform similarly,
even though the construction of SS~\cite{feldman2012data} is way more costly
since its running time has an exponential dependence on $k+m$, which is already impractical in our setting of parameters.
Another interesting finding is that, OAUS performs no better than US overall,
and both are much worse than ours.
This indicates that it is not the added initial outliers $L^*$ (as in \Cref{alg:main})
that leads to the superior performance of our coreset.
Finally, we also observe that our coreset has a smaller variance in the empirical error ($\approx 10^{-6}$), compared with other baselines ($\approx 10^{-4}$).

\begin{figure}[t]
    \centering
    \begin{subfigure}[b]{0.42\textwidth}
        \centering
        \includegraphics[width=\textwidth]{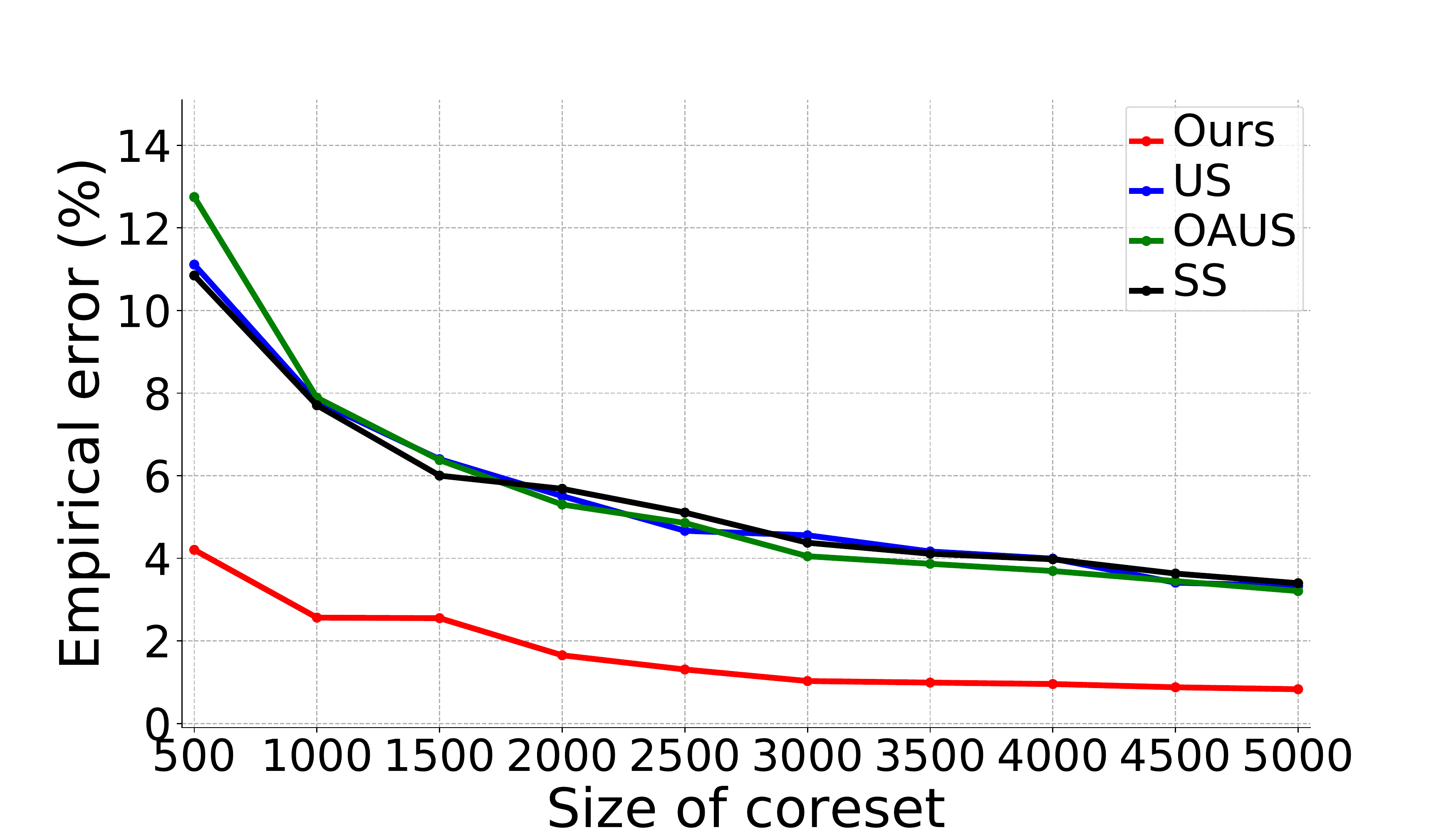}
        \caption*{Adult dataset}
    \end{subfigure} \qquad
    \begin{subfigure}[b]{0.42\textwidth}
        \centering
        \includegraphics[width=\textwidth]{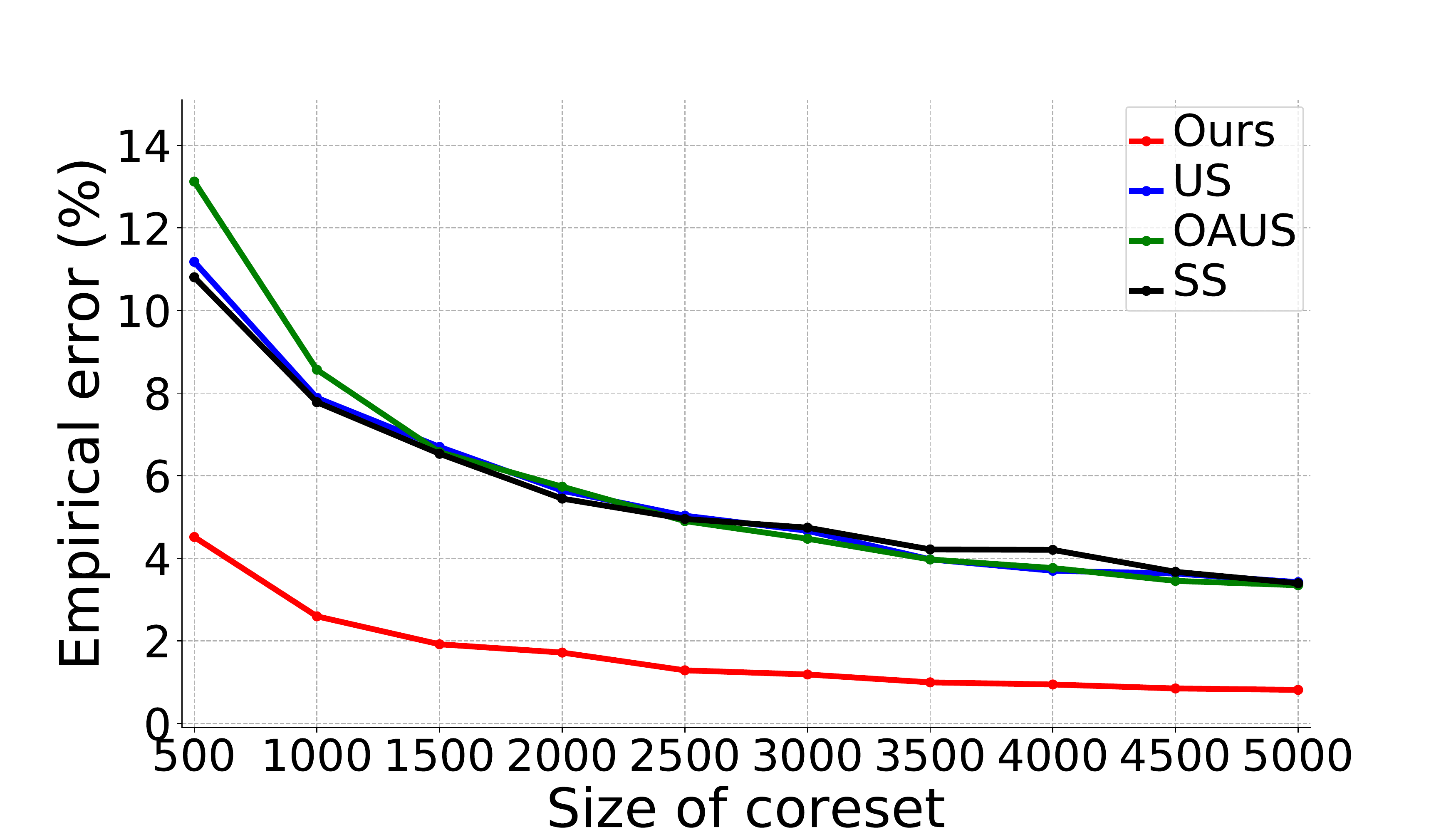}
        \caption*{Bank dataset}
    \end{subfigure}
    \begin{subfigure}[b]{0.42\textwidth}
        \centering
        \includegraphics[width=\textwidth]{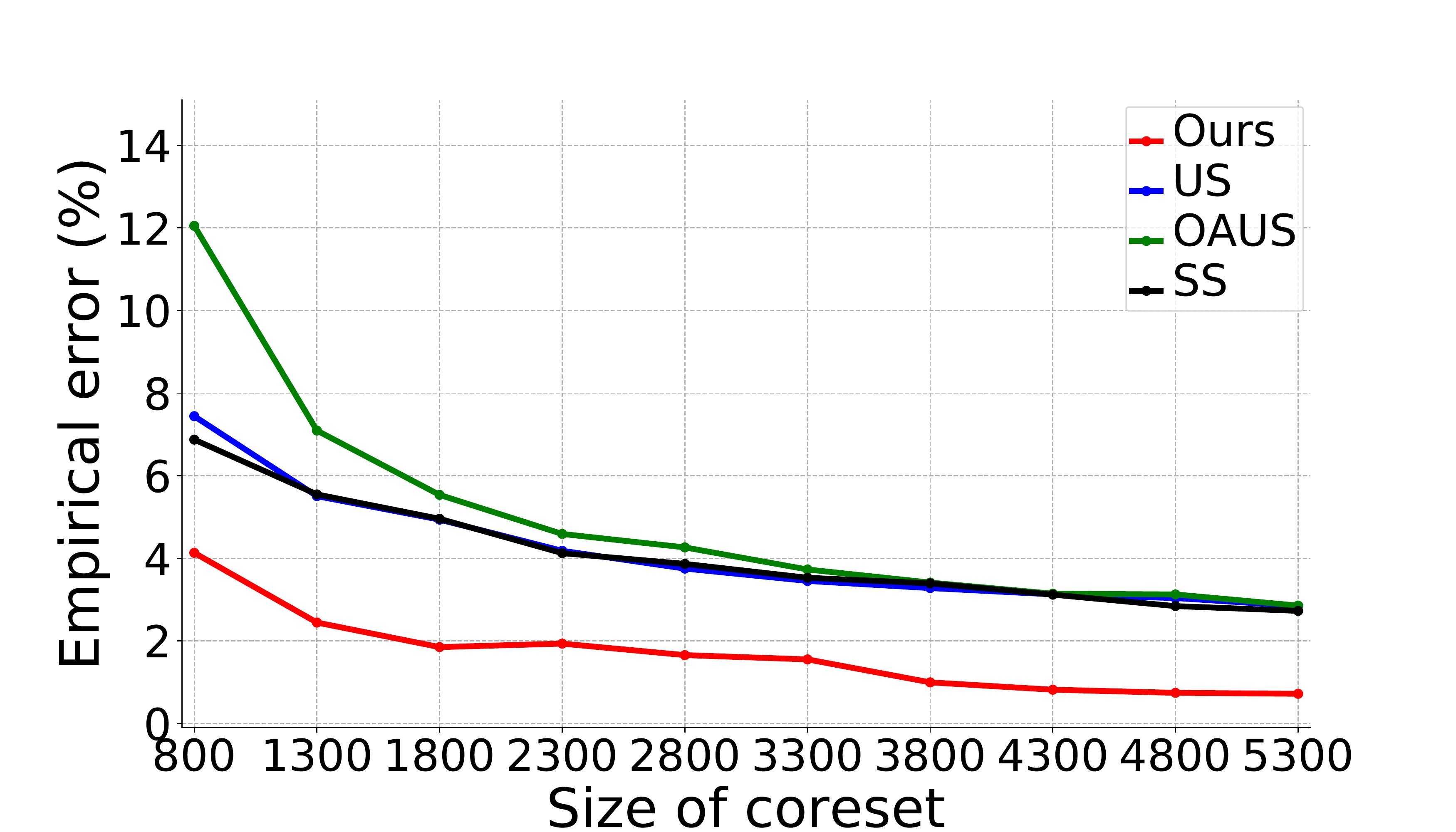}
        \caption*{Twitter dataset}
    \end{subfigure} \qquad
    \begin{subfigure}[b]{0.42\textwidth}
        \centering
        \includegraphics[width=\textwidth]{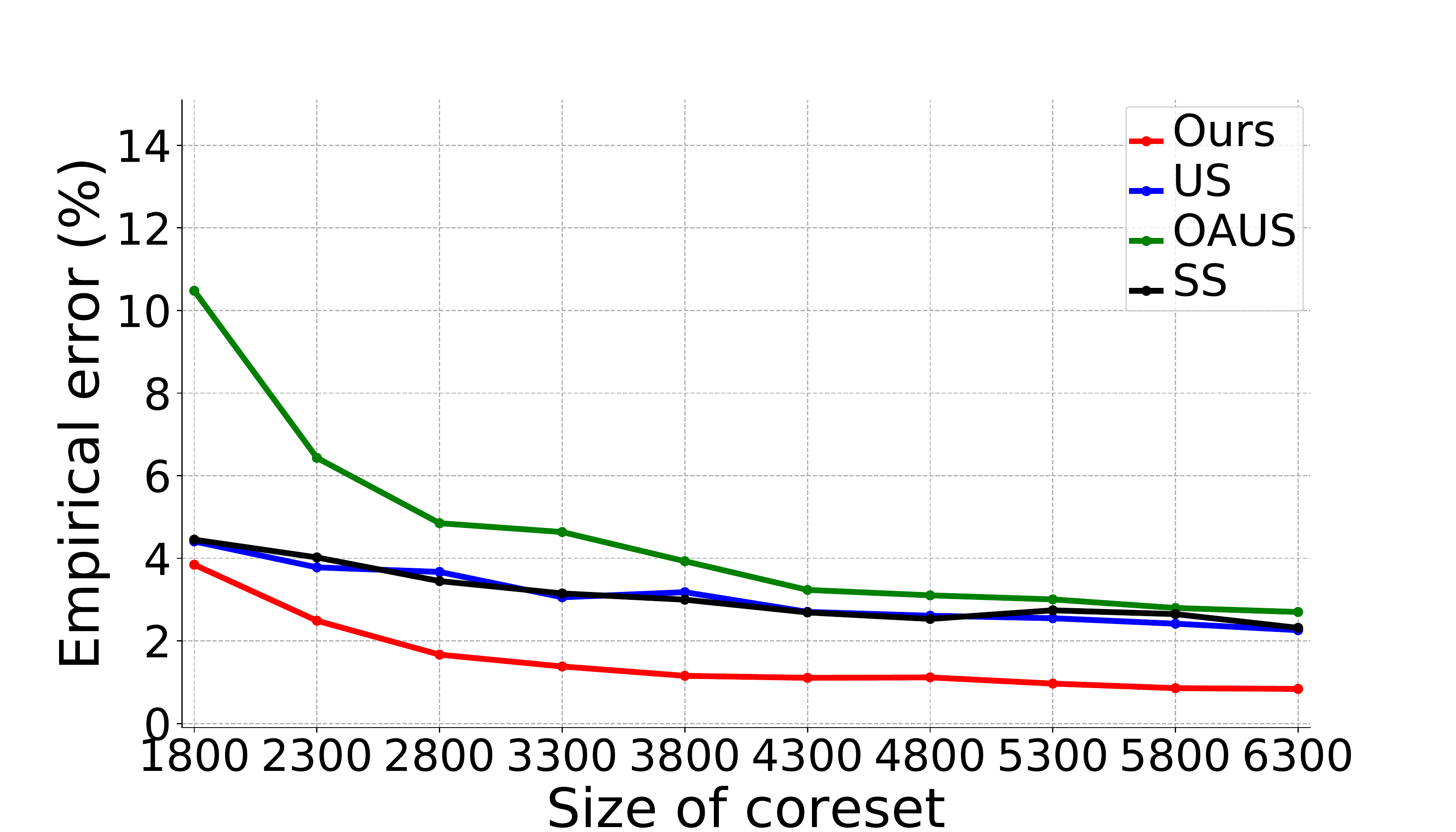}
        \caption*{Census1990 dataset}
    \end{subfigure}
    \caption{The tradeoff between the coreset size and the empirical error.}
    \label{fig:size_vs_error}
\end{figure}

\paragraph{Experiment: Impact of The Number of Outliers}
We also examine the impact of the number of outliers $m$ on empirical error.
Specifically, we experiment with varying $m$, but a fixed $N-m$,
which is the number of ``samples'' besides the included outliers $L^*$ in our algorithm.
We pick a typical value of $N - m = 800$ based on the curves of \Cref{fig:size_vs_error}, .
We plot this outlier-error curve in \Cref{fig:m_vs_error},
and we observe that while some of our baselines have a fluctuating empirical error,
the error curve of our coreset is relatively stable.
This suggests that the empirical error of our coreset is mainly determined by the
number of additional samples $N - m$, and is mostly independent of the number of outliers $m$ itself.
\begin{figure}[t]
    \centering
    \begin{subfigure}[b]{0.42\textwidth}
        \centering
        \includegraphics[width=\textwidth]{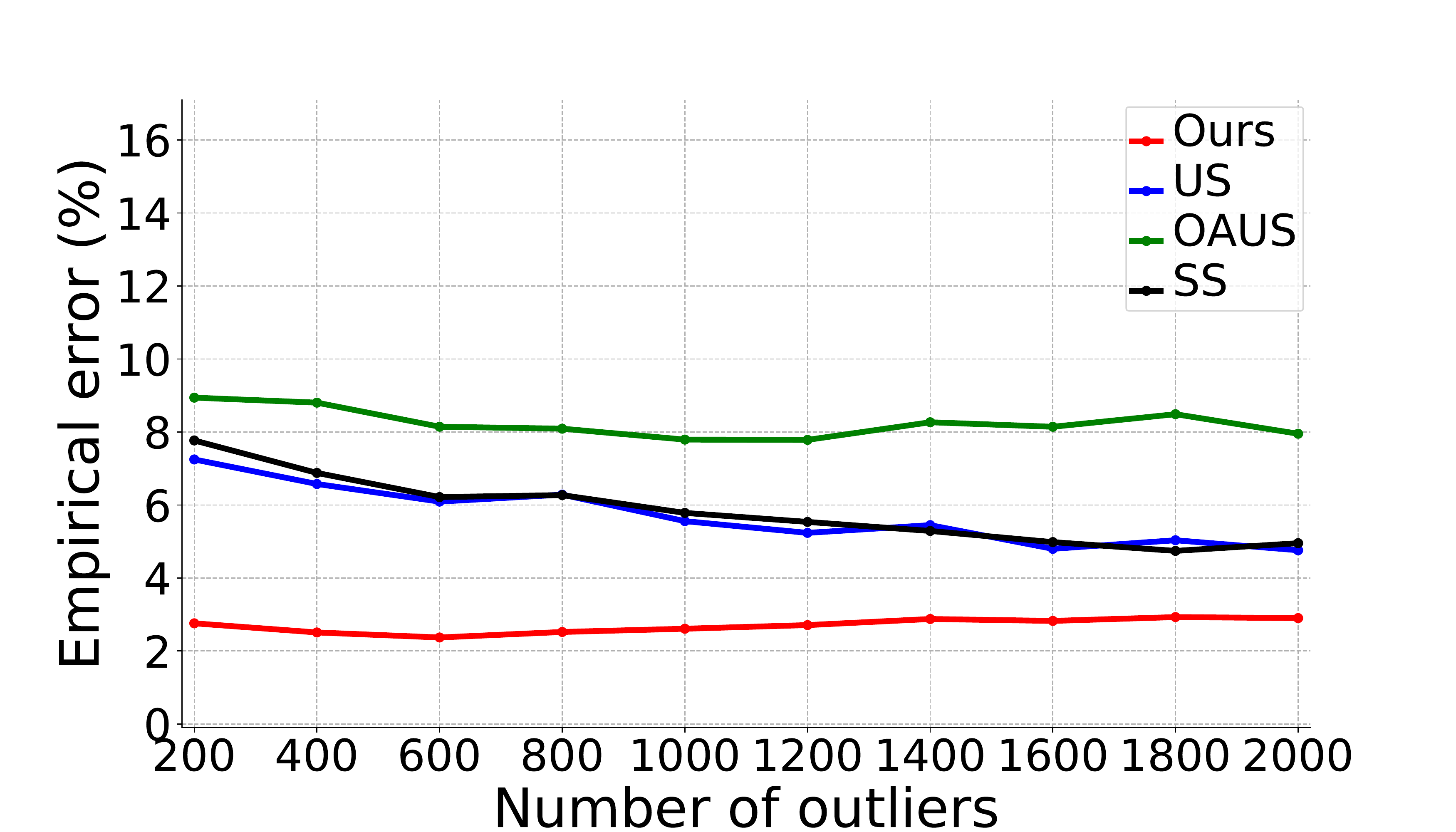}
        \caption*{Adult dataset}
    \end{subfigure} \qquad
    \begin{subfigure}[b]{0.42\textwidth}
        \centering
        \includegraphics[width=\textwidth]{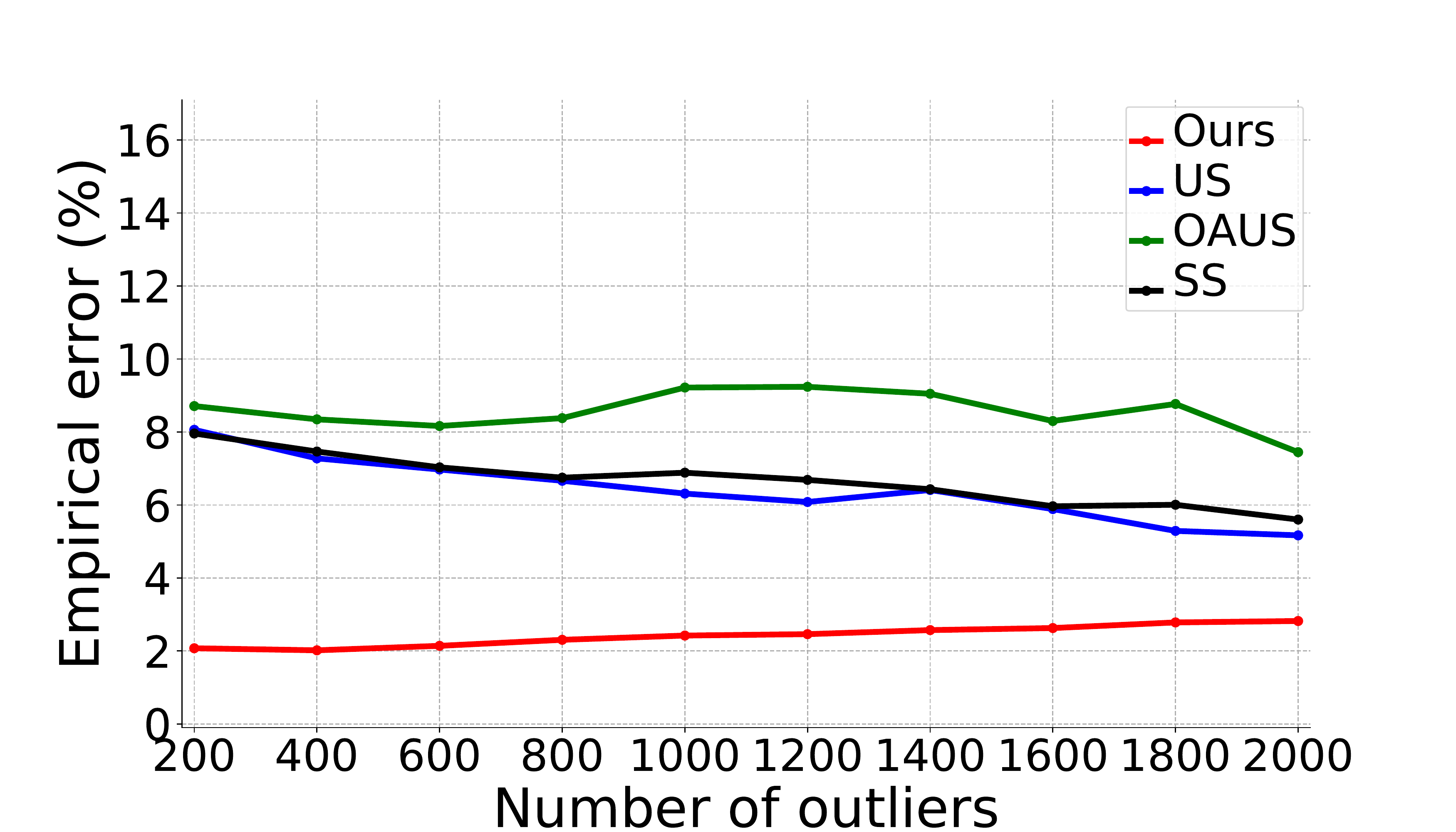}
        \caption*{Bank dataset}
    \end{subfigure}
    \begin{subfigure}[b]{0.42\textwidth}
        \centering
        \includegraphics[width=\textwidth]{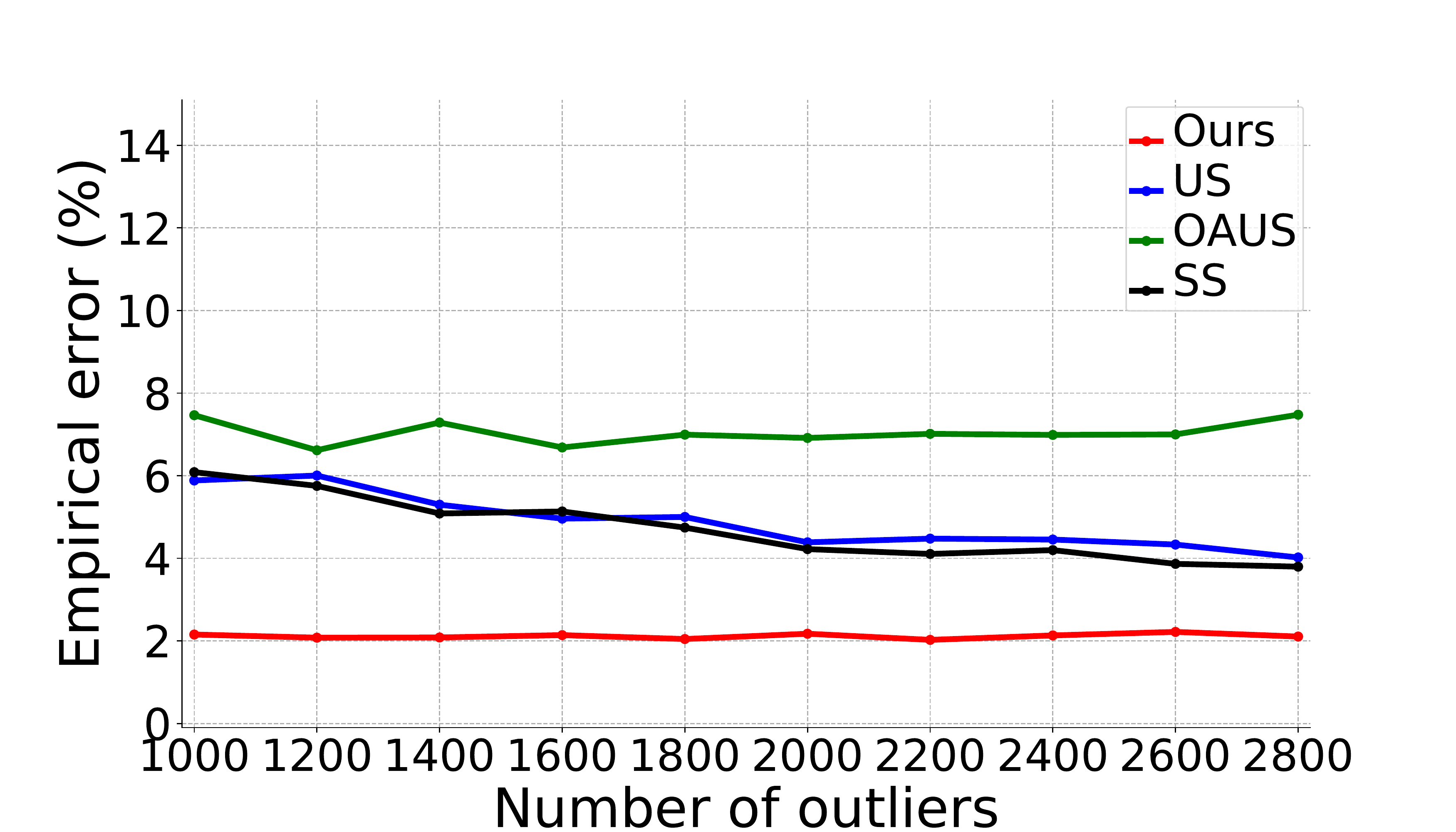}
        \caption*{Twitter dataset}
    \end{subfigure} \qquad
    \begin{subfigure}[b]{0.42\textwidth}
        \centering
        \includegraphics[width=\textwidth]{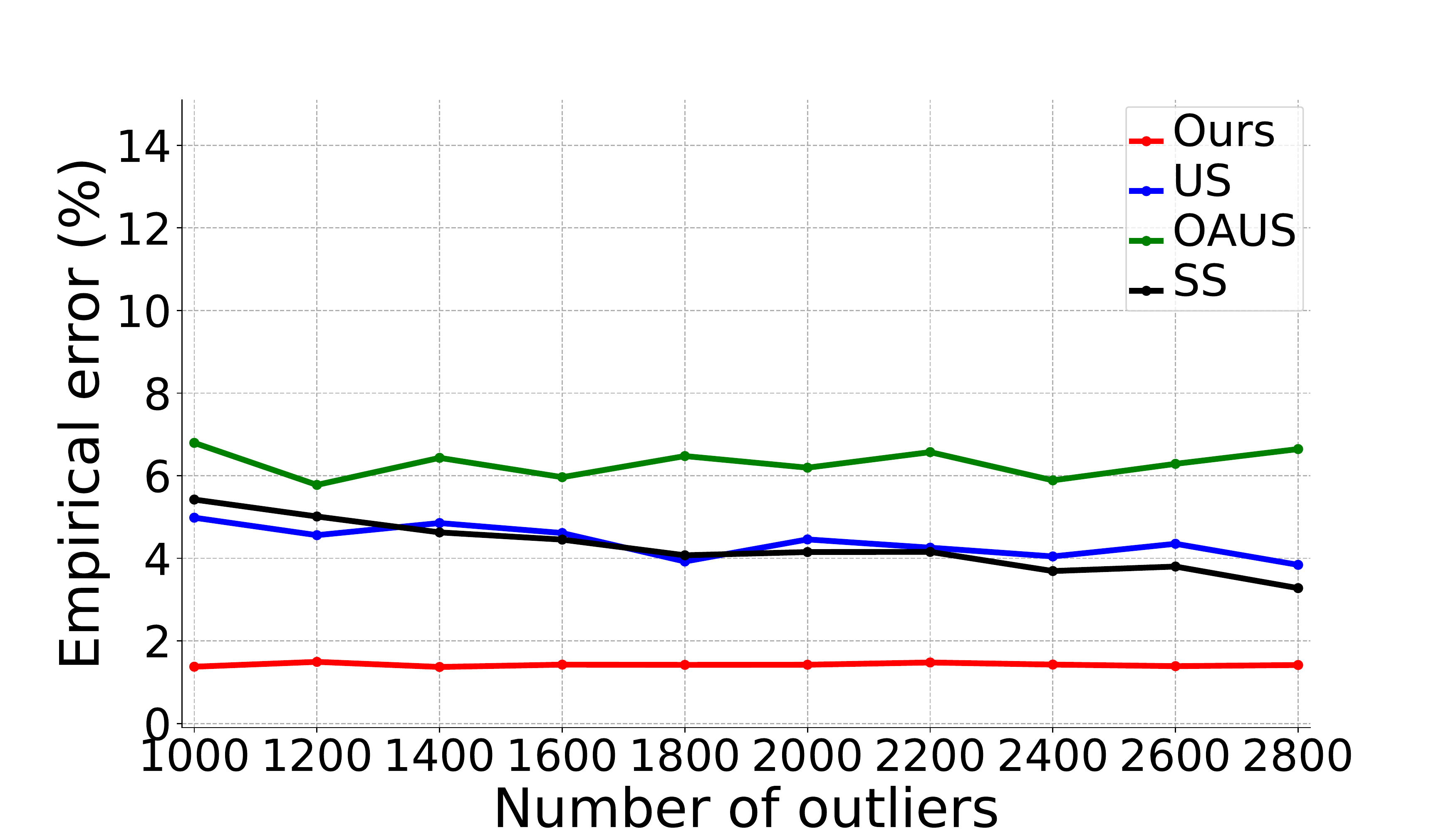}
        \caption*{Census1990 dataset}
    \end{subfigure}
    \caption{The impact of the number of outliers $m$ on the empirical error}
    \label{fig:m_vs_error}
\end{figure}

\subsection{Speeding Up Existing Approximation Algorithms}
We validate the ability of our coresets for speeding up existing approximation algorithms for robust clustering.
We consider two natural algorithms and run them on top of our coreset for speedup:
a Lloyd-style algorithm tailored to \kmMeans~\cite{chawla2013kmeans}
seeded by a modified \kMeanspp for robust clustering~\cite{bhaskara2019greedy}, which we call ``LL'',
and a local search algorithm for \kmMedian~\cite{friggstad2019approximation}, which we call ``LS''.
We note that for LS, we uniformly sample $100$ points from the dataset and use them as the only potential centers,
since otherwise it takes too long to run on the original dataset (without coresets).
We use a coreset of size $m + 500$ for each dataset
(recalling that $m$ is picked per dataset according to \Cref{fig:outliers})
to speed up the algorithms.
To make a consistent comparison, we measure the clustering costs on the original dataset for all runs (instead of on the coreset).

We report in \Cref{tab:speedup} the running time and the cost achieved by LL and LS, with and without coresets.
The results show that the error incurred by using coreset is tiny ($<5\%$ error),
but the speedup is a significant 80x-250x for LL, and a 100x-200x for LS.
Even taking the coreset construction time into consideration, it still achieves a 10x-30x speedup to LL and a 80x-140x speedup to LS.
We conclude that our coreset drastically improves the running time for existing approximation algorithms, while only suffering a neglectable error.

\begin{table}[t]
	\centering
	\captionsetup{font=small}
	\small
	\caption{Running time and costs for LL and LS with/without coresets.
		$T_X$ and $T_S$ are the running time without/with the coreset, respectively.
		Similarly, $\cost$ and $\cost'$ are the clustering costs without/with the coreset. $T_C$ is coreset construction time.
This entire experiment is repeated $10$ times and the average is reported.
	}
	\label{tab:speedup}
	\begin{tabular}{cccccccc}
		\toprule
		dataset & algorithm & $\cost$ & $\cost'$  & $T_{C}$ (s) & $T_{S}$ (s) & $T_{X}$ (s) \\
		\midrule
		\multirow{2}{*}{{Adult}} &
		LL & $3.790\times 10^{13}$ & $3.922\times 10^{13}$ & $0.4657$ & $0.06385$ &  $16.51$  \\
		& LS & $1.100\times 10^9$ & $1.107\times 10^9$ & $0.5300$  & $1.147$ & $204.8$  \\
		\midrule
		\multirow{2}{*}{{Bank}} &
		LL  & $4.444\times 10^8$ & $4.652\times 10^8$ & $0.4399$ & $0.05900$ & $11.40$  \\
		& LS  & $4.717\times 10^6$ & $4.721\times 10^6$ & $0.4953$  & $1.220$ & $186.6$  \\
		\midrule
		\multirow{2}{*}{{Twitter}} &
		LL   & $3.218\times 10^7$ & $3.236\times 10^7$ & $0.9493$ & $0.08289$ &  $11.27$  \\
		& LS  & $1.476\times 10^6$ & $1.451 \times 10^6$ & $1.064$  & $2.135$ & $460.2$  \\
		\midrule
		\multirow{2}{*}{{Census1990}} &
		LL   & $1.189 \times 10^7$ & $1.208\times 10^7$ & $3.673$ & $0.4809$ & $40.54$ \\
		& LS  & $1.165 \times 10^6$ & $1.163 \times 10^6$ & $4.079$  & $24.83$ & $2405$  \\
		\bottomrule
	\end{tabular}
	
\end{table} 
\bibliographystyle{alpha}
\bibliography{ref}

\begin{appendices}
    \appendixpage
    \section{Algorithms for Tri-criteria Approximation}
    \label{sec:tri_criteria}
    Various known algorithms that offer different tradeoffs may be used for the required $(\alpha,\beta,\gamma)$-approximation.
    In particular, \cite{friggstad2019approximation} designed a polynomial-time
    $\mathcal{A}(n,k,d,z)=n^{O(1)}$ algorithm with $\alpha=O(2^z),\beta=O(1),\gamma=1$;
    \cite{bhaskara2019greedy} gave a near-linear time $\mathcal{A}(n,k,d,z)=\tilde{O}(nkd)$
    algorithm with $\alpha=O(2^{O(z)}),\beta = O(1),\gamma=O(1)$ (which implies the statement in \Cref{thm:intro_main}).\footnote{\cite{bhaskara2019greedy} only showed the case of $z = 2$, but we check that it also generalizes to other $z$'s.}
    Finally, true approximation algorithms, i.e., $\beta = \gamma = 1$,
    are known for both \kmMedian and \kmMeans, and they run in polynomial-time
    $\mathcal{A}(n,k,d,z)=n^{O(1)}$ and achieves $\alpha=O(1)$~\cite{chen2008constant,RLS18}.

\section{Determining The Number of Outliers in The Experiments}
\label{sec:determine_outlier}
To determine the number of outliers $m$ for each (subsampled) dataset in our experiment,
we run a vanilla \kMeans clustering (without outliers) algorithm,
and plot the distribution of distances from data points to the found near-optimal centers.
As shown in \Cref{fig:outliers}, every dataset admits a clear breaking point that defines the outliers, and we pick $m$ accordingly.

\begin{figure}[t]
    \centering
    \begin{subfigure}[b]{0.42\textwidth}
        \centering
        \includegraphics[width=\textwidth]{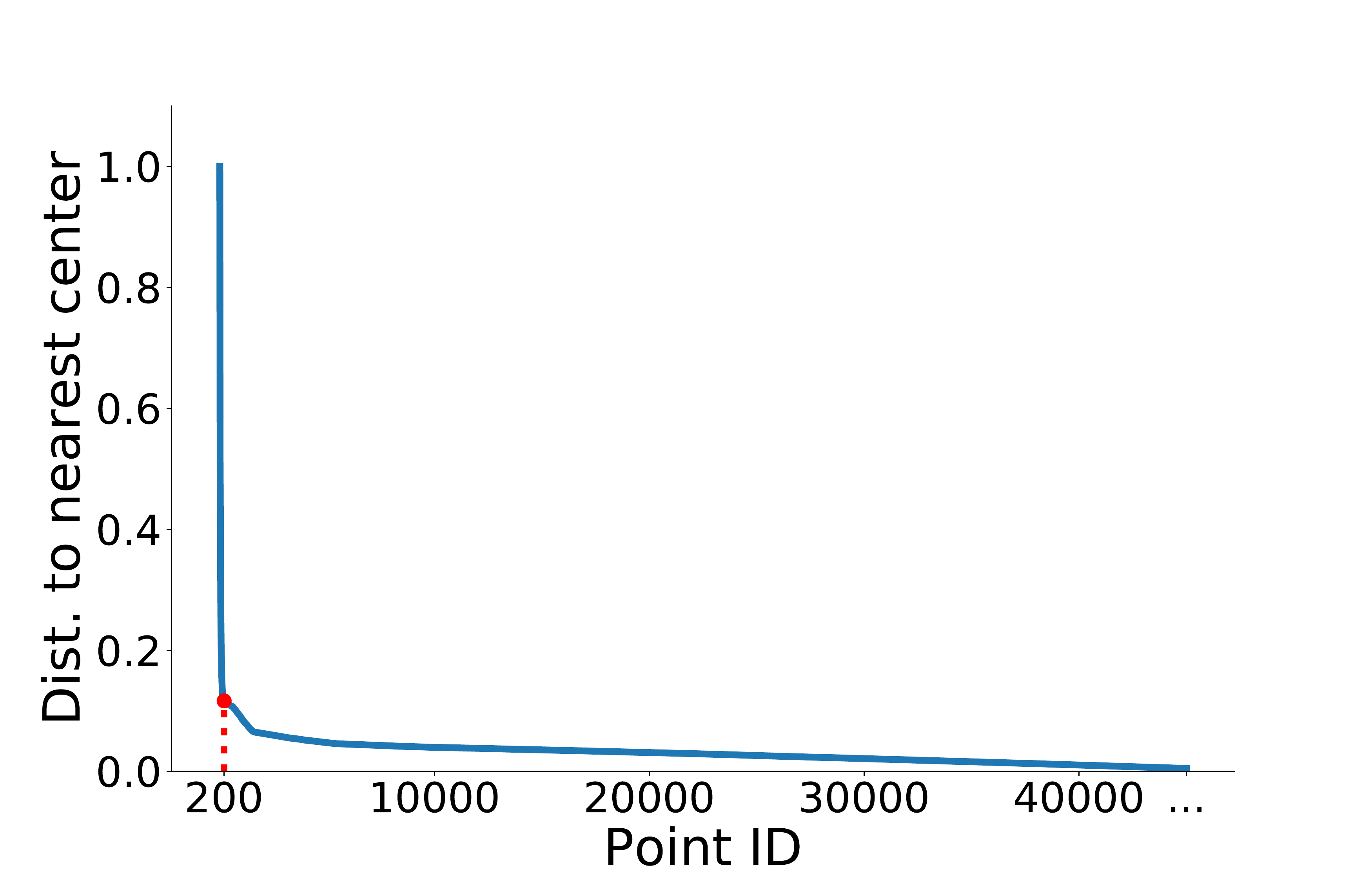}
        \caption*{Adult dataset}
    \end{subfigure} \qquad
    \begin{subfigure}[b]{0.42\textwidth}
        \centering
        \includegraphics[width=\textwidth]{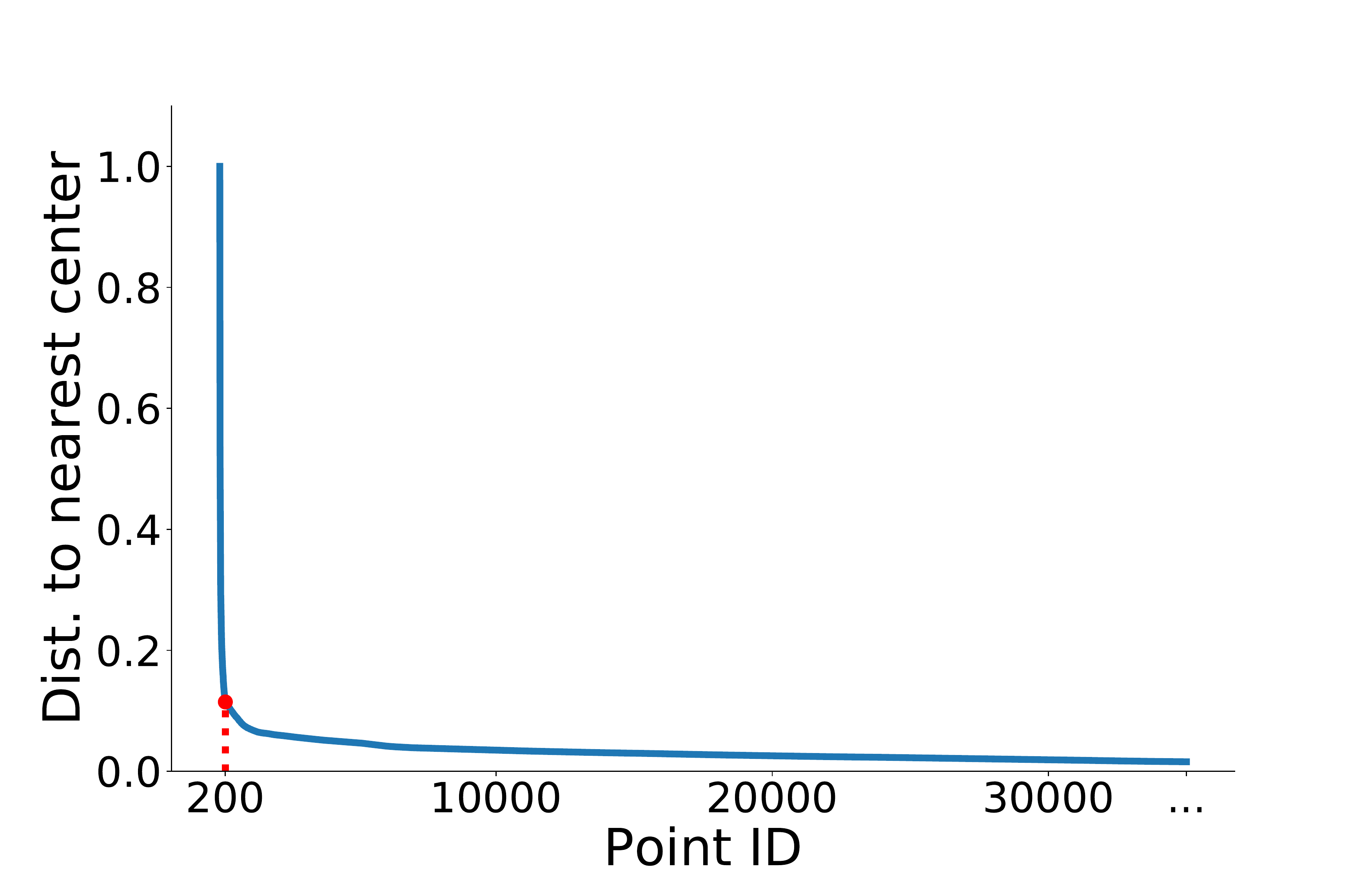}
        \caption*{Bank dataset}
    \end{subfigure}
    \begin{subfigure}[b]{0.42\textwidth}
        \centering
        \includegraphics[width=\textwidth]{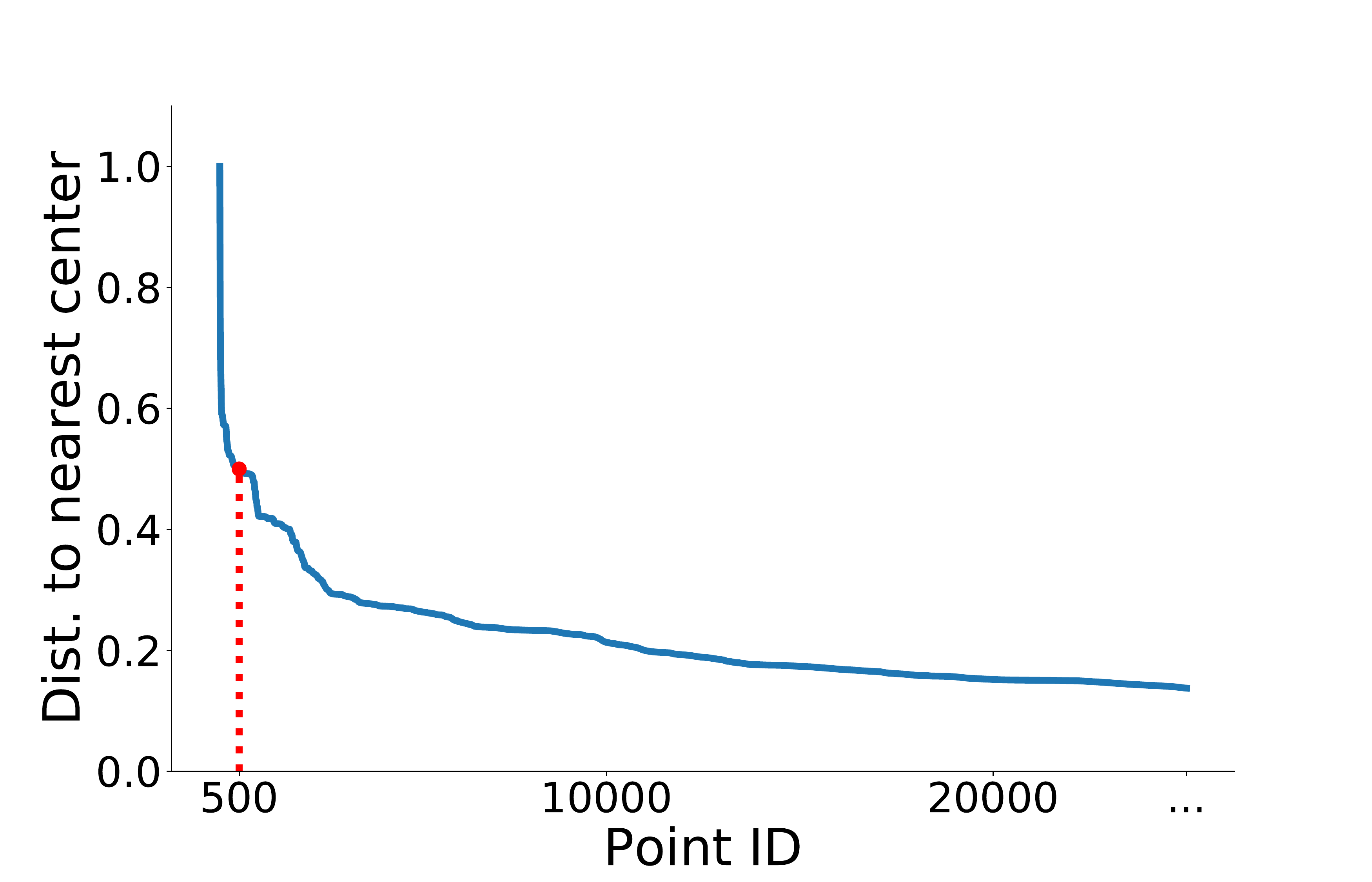}
        \caption*{Twitter dataset}
    \end{subfigure} \qquad
    \begin{subfigure}[b]{0.42\textwidth}
        \centering
        \includegraphics[width=\textwidth]{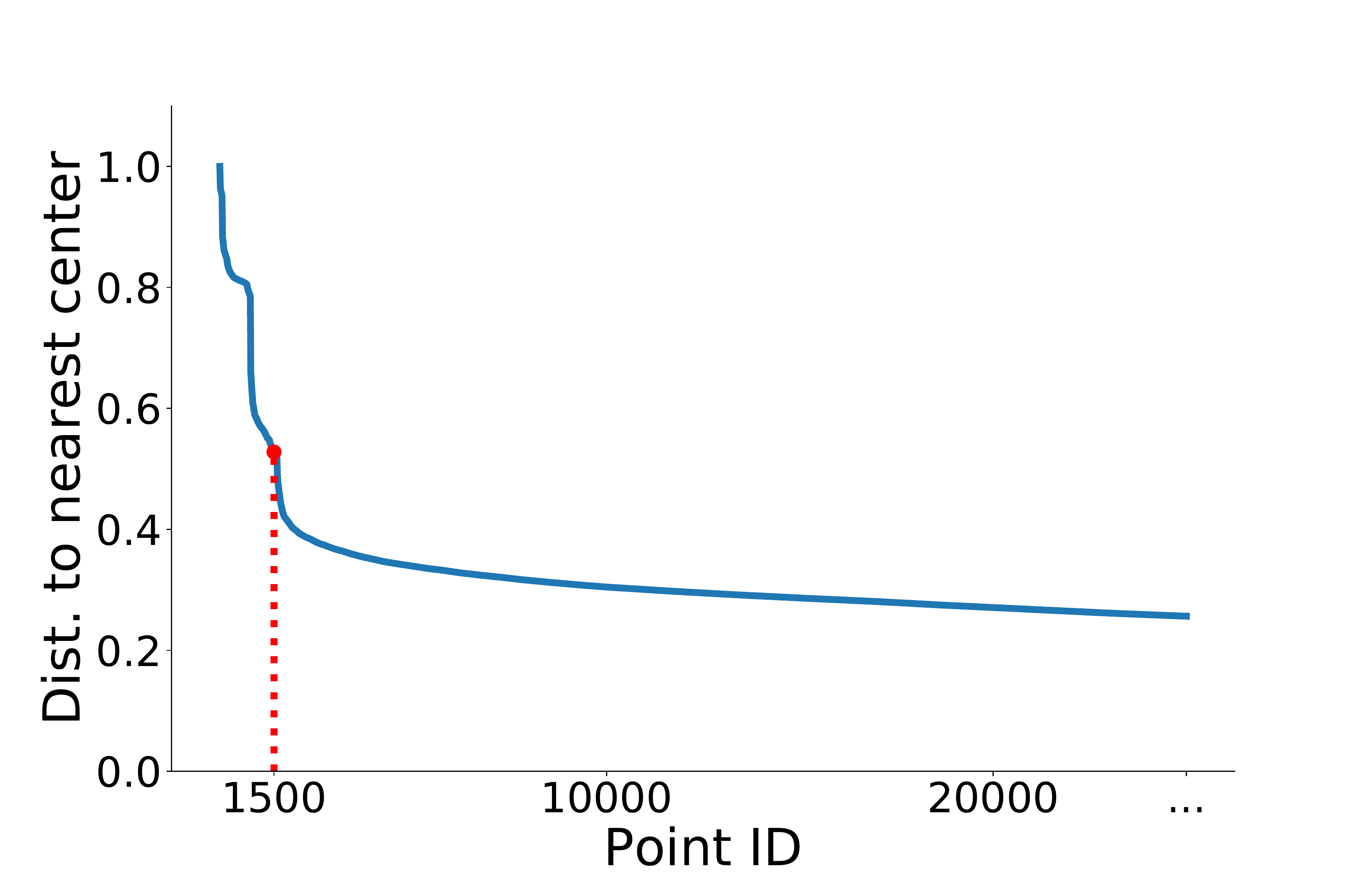}
        \caption*{Census1990 dataset}
    \end{subfigure}
    \caption{Distances to the found near-optimal center (using a vanilla clustering algorithm) for each point, sorted decreasingly and rescaled to $[0, 1]$.}
    \label{fig:outliers}
\end{figure}

 \end{appendices}

\end{document}